\documentclass[12pt,a4paper]{article} 

\usepackage[utf8]{inputenc}
\usepackage[english]{babel}
\usepackage[T1]{fontenc}
\usepackage{dirtytalk} 
\usepackage{ragged2e}
\usepackage{geometry}
\usepackage[table]{xcolor}
\definecolor{olive}{rgb}{0.3, 0.4, .1}
\definecolor{fore}{RGB}{249,242,215}
\definecolor{back}{RGB}{51,51,51}
\definecolor{title}{RGB}{255,0,90}
\definecolor{blackViolet}{RGB}{138,43,226}
\definecolor{gold}{rgb}{1.,0.84,0.}
\definecolor{JungleGreen}{cmyk}{0.99,0,0.52,0}
\definecolor{blackGreen}{cmyk}{0.85,0,0.33,0}
\definecolor{RawSienna}{cmyk}{0,0.72,1,0.45}
\definecolor{Magenta}{cmyk}{0,1,0,0}
\definecolor{wood}{RGB}{139,115,85}
\definecolor{dorange}{RGB}{255,127,0}
\definecolor{dolive}{RGB}{85,107,47}
\definecolor{drg}{RGB}{255,165,0}
\definecolor{dgreen}{rgb}{0,.5,0}
\definecolor{dblue}{rgb}{0,0,.5}
\definecolor{dred}{rgb}{0.5,0,.5}

\newcommand{\tcr}{\textcolor{red}}
\newcommand{\tcb}{\textcolor{blue}}

\usepackage{graphicx}% Include figure files
\usepackage{rotating}
\usepackage{tikz}
\usetikzlibrary{decorations.markings}

\usepackage{gnuplottex}
\usepackage{float}

\usepackage{xspace}

\usepackage{titlesec}
\titlespacing*{\section}{0pt}{2ex}{2ex}
\titlespacing*{\subsection}{0pt}{2ex}{2ex} 
\titlespacing*{\subsubsection}{0pt}{2ex}{2ex}

\titleformat*{\section}{\large\bfseries}
\titleformat*{\subsection}{\large\bfseries}
\titleformat*{\subsubsection}{\large\bfseries}
\titleformat*{\paragraph}{\large\bfseries}
\titleformat*{\subparagraph}{\large\bfseries}

\usepackage[colorlinks,citecolor=blue,urlcolor=blue,linkcolor=blue]{hyperref}
\usepackage{cite}

\usepackage{colortbl}
\usepackage{dcolumn}% Align table columns on decimal point
\usepackage{multirow}
\usepackage{longtable}
\usepackage{arydshln}
\usepackage{array}
\usepackage{tabularx, booktabs}
\newcolumntype{Y}{>{\centering\arraybackslash}X}

\usepackage{mathtools}
\usepackage{amsmath}
\usepackage{amsfonts}
\usepackage{amssymb} 
\usepackage{amsthm}
\usepackage{mathrsfs}
\usepackage{dsfont}
\usepackage{bm}% bold math
\usepackage{bbm}
\DeclareMathAlphabet{\mathpzc}{OT1}{pzc}{m}{it}

\usepackage{stmaryrd}

\makeatletter
\newcommand\xLongLeftRightArrow[2][]{%
  \ext@arrow 0099{\LongLeftRightArrowfill@}{#1}{#2}}
\def\LongLeftRightArrowfill@{%
  \arrowfill@\Leftarrow\Relbar\Rightarrow}
\makeatother

\usepackage{simplewick}
\makeatletter
\ExplSyntaxOn
\RenewDocumentCommand{\acontraction}{ O{1ex} O{black} m m m m}{%
  \tl_if_empty:nTF{#1}{\def\wickoffset{1ex}}{\def\wickoffset{#1}}%
  {\color{#2}%
  \mathchoice
    {\acontraction@\displaystyle{#3}{#4}{#5}{#6}{\wickoffset}}%
    {\acontraction@\textstyle{#3}{#4}{#5}{#6}{\wickoffset}}%
    {\acontraction@\scriptstyle{#3}{#4}{#5}{#6}{\wickoffset}}%
    {\acontraction@\scriptscriptstyle{#3}{#4}{#5}{#6}{\wickoffset}}}}%
\RenewDocumentCommand{\bcontraction}{ O{1ex} O{black} m m m m}{%
  \tl_if_empty:nTF{#1}{\def\wickoffset{1ex}}{\def\wickoffset{#1}}%
  {\color{#2}%
  \mathchoice
    {\bcontraction@\displaystyle{#3}{#4}{#5}{#6}{\wickoffset}}%
    {\bcontraction@\textstyle{#3}{#4}{#5}{#6}{\wickoffset}}%
    {\bcontraction@\scriptstyle{#3}{#4}{#5}{#6}{\wickoffset}}%
    {\bcontraction@\scriptscriptstyle{#3}{#4}{#5}{#6}{\wickoffset}}}}%
\ExplSyntaxOff
\makeatletter

\newcommand\reallywidehat[1]{%
\savestack{\tmpbox}{\stretchto{%
  \scaleto{%
    \scalerel*[\widthof{\ensuremath{#1}}]{\kern-.6pt\bigwedge\kern-.6pt}%
    {\rule[-\textdepth/2]{1ex}{\textdepth}}%WIDTH-LIMITED BIG WEDGE
  }{\textdepth}% 
}{0.5ex}}%
\stackon[1pt]{#1}{\tmpbox}%
}

\usepackage{physics} 
\usepackage{braket}

\newtheorem{theorem}{Theorem}[section]
\newtheorem{lemma}{Lemma}[section]

\newtheorem{proposition}{Proposition}[section]
\newtheorem{definition}{Definition}[section]

\newtheorem{example}{Example}[section]
\newtheorem{remark}{Remark}[section]
\newtheorem{ourrule}{Rule}[section]

\usepackage{algorithm}
\usepackage[noend]{algpseudocode}
\usepackage{listings}

\if False
\documentclass[multi=my,crop]{standalone}

\usepackage{mathtools}
\usepackage[utf8]{inputenc}
\usepackage[french]{babel}
\usepackage[T1]{fontenc}
\usepackage{amsmath}
\usepackage{amsfonts}
\usepackage{bm}
\usepackage{stmaryrd}
\usepackage{amssymb}
\usepackage{mathrsfs}
\usepackage{bm}
\usepackage{graphicx}
\usepackage{ragged2e}
\usepackage{fancybox}
\usepackage{bbm}

\usepackage{tikz}
\usetikzlibrary{backgrounds}

\pgfdeclarelayer{bg}    % declare background layer
\pgfsetlayers{bg,main}
\usepackage{colortbl}
\usepackage{multimedia}
\usepackage{xcolor}
\usepackage{braket}

\usepackage{amsfonts}
\usepackage{ifthen}

\fi

\newcommand{\pdeex}[3]{%excitation parenthèse
	\def\h{.5}
	\def\x{#1}
	\def\r{.5}
	\draw[line width=1pt]  
		(\x+\r, \h+\r)node[above] {$\hat{o}^\dag_{#2}$}  
		arc (90:180:\r) 
		-- (\x, -\h) 
		arc (180:270:\r) node[below] {$\hat{v}_{#3}$};
}

\newcommand{\pexci}[3]{%excitation parenthèse
	\def\h{.5}
	\def\x{#1}
	\def\r{.5}
	\draw[line width=1pt] 
		(\x-\r, \h+\r) node[above] {$\hat{o}_{#2}$}  
		arc (90:0:\r) 
		-- (\x, -\h) 
		arc (0:-90:\r) node[below] {$\hat{v}^\dag_{#3}$};
}

\newcommand{\padaga}[5]{
	\def\x{#1}
	\def\e{.015}%demi de l'épaisseur du trait
	\def\r{.5}
	
	\ifthenelse{\equal{\detokenize{#2}}{\detokenize{occ}}}
		{ \draw[fill] (\x,-\e) arc (270:90:\e+\r) node[above] {$\hat{a}^\dag_{#3}$}
			-- (\x,1+\e) arc (90:270:\r) 
			-- cycle; }
		{ \ifthenelse{\equal{\detokenize{#2}}{\detokenize{vir}}}	
			{ \draw[fill] (\x,\e)
				-- (\x-2*\e,\e)
				-- (\x-2*\e,\e)
				-- (\x-2*\e,-\e) arc (0:-90:2*\r) node[below] {$\hat{a}^\dag_{#3}$}
				-- (\x-2*\r,-2*\r-\e) arc (-90:0:2*\r) --cycle;}
			{ \draw (\x-1,0) node[above] {$\hat{a}^\dag_{#3}$};
			  \def\x{#1-1}}
		}
	
	\draw[fill] (\x,\e) 
		-- (\x+2,\e) 
		-- (\x+2,-\e) 
		-- (\x,-\e) 
		-- cycle;
	
	\ifthenelse{\equal{\detokenize{#4}}{\detokenize{occ}}}
		{ \draw[fill] (\x+2,\e) arc (-90:90:\r)  node[above] {$\hat{a}_{#5}$}
			-- (\x+2,1+\e) arc (90:-90:\e+\r) 
			-- cycle; ;}
		{\ifthenelse{\equal{\detokenize{#4}}{\detokenize{vir}}}
			{\draw[fill] (\x+2,\e) 
			-- (\x+2*\e+2,\e) 
			-- (\x+2*\e+2,\e)
			-- (\x+2*\e+2,-\e) arc (180:270:2*\r) node[below] {$\hat{a}_{#5}$}
			-- (\x+2*\r+2,-2*\r-\e) arc (270:180:2*\r) --cycle;}
			{\draw (\x+2,0) node[above] {$\hat{a}_{#5}$};}
		}	
}

\newcommand{\adaga}[4]{%one-body operator
  \ifthenelse{\equal{\detokenize{#1}}{\detokenize{exci}}}
   { \draw[line width=1pt] (0,-1) arc(-90:0:1);
    \tikzset{shift={(1,0)}} }{}
  \ifthenelse{\equal{\detokenize{#1}}{\detokenize{deex}}}
   { \draw[line width=1pt] (0,1) arc(90:270:.5) -- (1,0);
    \tikzset{shift={(1,0)}} }{}

   \draw[line width=1pt] (0,0)node[above]{$\hat{a}^\dag_{#2}$} -- (1.5,0)node[above]{$\hat{a}_{#4}$};
   \tikzset{shift={(1.5,0)}}

  \ifthenelse{\equal{\detokenize{#3}}{\detokenize{deex}}}
   { \draw[line width=1pt] (0,0) arc(180:270:1);
    \tikzset{shift={(1,0)}} }{}
  \ifthenelse{\equal{\detokenize{#3}}{\detokenize{exci}}}
   { \draw[line width=1pt] (0,0) -- (1,0) arc(-90:90:.5);
    \tikzset{shift={(1,0)}} }{}

}

\newcommand{\exci}[2]{%excitation
   \draw[line width=1pt] 
 	     (0,-1) node[below right]{$\hat{v}^\dag_{#2}$} arc (-90:0:1.5) 
 	  -- (1.5,.5) arc (0:90:.5) node[above right]{$\hat{o}_{#1}$};
   \tikzset{shift={(1,0)}}
}

\newcommand{\deex}[2]{%deexcitation
   \draw[line width=1pt] 
   		(0,1) node[above left]{$\hat{o}^\dag_{#2}$} arc (90:180:.5) 
   	  --(-.5,.5) arc (180:270:1.5)node[below left]{$\hat{v}_{#1}$};
   \tikzset{shift={(1,0)}}
}

\newcommand{\exciend}[2]{%excitation
   \draw[line width=1pt] 
 	     (0,-1) node[below right]{$\hat{v}^\dag_{#2}$} arc (-90:0:.5) 
 	  -- (.5,.5) arc (0:90:.5) node[above right]{$\hat{o}_{#1}$};

}

\newcommand{\deexend}[2]{%deexcitation
   \draw[line width=1pt] 
   		(0,1) node[above left]{$\hat{o}^\dag_{#2}$} arc (90:180:.5) 
   	  --(-.5,-.5) arc (180:270:.5)node[below left]{$\hat{v}_{#1}$};
}

\if False
\usepackage{pgfplots}
\fi

\usetikzlibrary{calc}

\makeatletter
\newcommand\footnoteref[1]{\protected@xdef\@thefnmark{\ref{#1}}\@footnotemark}
\makeatother

\begin{document}
\title{\normalsize{\textbf{Dyck language and fermionic second quantization}} \\ \normalsize{\textbf{II. Applications}}}

\date{} 

\maketitle

\vspace*{-2cm}

\noindent \begin{center}
\textit{by} Jérémy Morere\footnote{\label{note1}\textit{Université de Lorraine,} CNRS, LPCT, \textit{F-54000 Nancy, France}}{}$^,$\footnote{jeremy.morere@univ-lorraine.fr} \textit{and} Thibaud Etienne\footnoteref{note1}{}$^,$\footnote{thibaud.etienne@univ-lorraine.fr}\\
\end{center}

$\;$

\begin{abstract}
\noindent In this work, we establish a direct connection between supplemented Dyck language and the signed expectation value of chains of second quantization operators relatively to the physical vacuum and relatively to a one-determinant state.
Inspired by the fact that Dyck language provides an example of the emergence of the Catalan numbers in linguistic framework analysis, we show that these numbers are central when numbering the terms remaining when eliminating vanishing contributions detected by our application of Dyck language to fermionic second quantization.
From the translation of creation and annihilation operator — or of pairs of operators — into a bracket alphabet, we derive simple and intuitive sufficient conditions for the nullity of expectation values that does not require an explicit application of Wick’s theorem. This is done here with respect to the physical vacuum or relatively to a one-determinant state.
We also extend this translation into a diagrammatic framework that allows a visual determination of the signature of fully contracted terms, reproducing the results of Wick’s theorem. This approach has been extended to the case of (nested) commutators of pairs of fermionic second quantization including at least one excitation or deexcitation operator. Our results have been implemented in a software, \texttt{MobiDyck}, whose source code is freely available on the web. The algorithmic approach inspired by our work on Dyck language is detailed in this paper. Finally, a comparison of our diagrammatic approach with Goldstone diagrams is provided and closes the article.
\\ $\;$ \\ 
\textit{Keywords: Wick's theorem, (Nested) commutators, Numbering, Algorithm.}
\end{abstract}

\section{Introduction}

Over the past decades, coupled-cluster \cite{helgaker_molecular_2014, lefebvre_use_1969, cizek_correlation_1966, cizek_correlation_1971, shavitt_many-body_2009, paldus_correlation_1972} theory has emerged as one of the most accurate and widely used methods for treating electron correlation. Its success stems from its size extensivity and its systematic improvability through hierarchical truncation schemes. The structure of coupled-cluster equations involves exponential operator expansions, which are expressed as sums of nested commutators.

Conventional coupled cluster (CC) methods rely on a truncation based on excitation rank. For a given truncation, the development of the exponential can be truncated as well. The maximal number of nested commutators and their composition are known in advance. It is possible to pre-compute which term will have a non-zero expectation value, avoiding the development of nested commutators and their on-the-fly evaluation for each CC calculation.

Other approaches, like adaptive-CC \cite{evangelista_adaptive_2014, lyakh_adaptive_2010}, introduce an adaptive truncation. All cluster amplitudes of a given excitation rank are not taken into account, the set of these amplitudes can be updated at each iteration. A direct consequence is the impossibility to pre-compute the expectation values in advance. Similarly, at each iteration nested commutators have to be developed and expectation values of chains of fermionic second quantization operators have to be computed.

In the general case — i.e., not necessarily relatively to a vacuum state —, different approaches can be chosen in order to evaluate the expectation value for these kinds of chains of operators. Some direct methods can be used for this purpose, like applying all the second quantization operators one by one, or repeatedly applying anti-commutation rules to reorder the operators. Those methods require to carefully track the sign of each term and, in the relevant cases, duly discarding vanishing terms.
For long chains of operators, these approaches generate a tremendous number of intermediate expressions, increasing the risk of introducing errors in the development when performed by hand. An alternative approach consists in applying the time-independent Wick theorem for fermions \cite{wick_evaluation_1950,lindgren_atomic_1986,surjan_second_1989,gross_many-particle_1991,kutzelnigg_normal_1997,wilson_methods_1992,shavitt_many-body_2009}, which relies on two operations: contracting two operators and normal-ordering a chain — \textit{vide infra}.

Despite its strength, in the context of expectation value evaluation, Wick’s theorem introduces a new layer of combinatorial complexity without giving an explicit procedure for contracting all operators. The number of possible contractions grows factorially with the number of operators, and many of these contributions ultimately vanish.

In Part I of our contribution, we introduced a one-dimensional representation of any chain of second quantization operators based on Dyck language. According to the reference state at stake -- physical or Fermi vacuum -- different ``translations'' of second quantization operators into a bracket-based alphabet have been defined, each revealing by simple syntactic criteria whether the expectation value of the chain is necessarily zero or not, bypassing explicit operator manipulation. 
However, that framework was limited to nullity criteria and did not give access to the explicit value -- zero, one, or minus one -- of the expectation value. In this second part, we extend this approach by supplementing the one-dimensional representation toward a two-dimensional diagrammatic representation, in order to recover results obtained from the corollary of Wick's theorem.
The use of two-dimensional representations of problems involving second quantization is common. We can cite Goldstone\cite{goldstone_derivation_1957,mattuck_guide_1992,szabo_modern_2012} and Hugenholtz\cite{hugenholtz_perturbation_1957,mattuck_guide_1992,szabo_modern_2012} diagrams for many-body perturbation theory or coupled-cluster diagrams \cite{shavitt_many-body_2009,kucharski_fifth-order_1986,crawford_introduction_2000} for CC. Each diagram represents, for a dedicated problem, a chain of second quantization operators.

In Section II, Wick's theorem is introduced and expressed in the specific context of expectation value evaluation. The Fermi-$\mathcal{L}_1$ translation, introduced in Part I, is supplemented in order to recover results from the corollary of Wick's theorem from a diagrammatic representation.

Section III presents combinatorial, numbering results. Descriptors coming from Dyck language are introduced into the second quantization formalism in order to count sequences related to chains of second quantization operators.

The simplification of some types of (nested) commutators is studied in Section IV. Manipulation rules specific to the simplification of those (nested) commutators are added to the two-dimensional diagrammatic representation introduced in Section II. These rules describe how to diagrammatically simplify some (nested) commutators.

Section V exposes an algorithm inspired by Dyck words in order to symbolically evaluate the expectation value of chains of second quantization operators, and section VI establishes a link between our two-dimensional diagrammatic representations and Goldstone diagrams.

\section{Time-independent Wick theorem for fermions}

\noindent Time-independent Wick theorem for fermions --- or in short Wick's theorem from now on --- provides a systematic way for evaluating expectation values of chains of second quantization operators. The theorem relies on two operations: contracting two operators and normal-ordering a chain. 

In this study we will focus on a corollary of Wick's theorem rather than Wick's theorem itself. For this reason, the two aformentioned operations do not need to be explicitly introduced here — we will solely need to \textit{use} contractions in dedicated cases and relatively to a reference state; the notion of ``normal ordering'' of a chain, which is central in the original formulation of Wick's theorem, has no use in the context of the corollary of interest here.

\subsection{Expectation value of contracted chains of operators}

Let $\Psi$ be a one-determinant quantum state written solely with elements of $\mathcal{B}$. The expectation value of a contraction between a creation operator and an annihilation operator (written in that order), reads
\begin{equation}\label{eq:contraction_occ}
\forall (r,s) \in \llbracket 1,L\rrbracket^2, \;
\braket{\acontraction[][black]{}{\hat{a}}{^\dag_{r}}{\hat{a}}\hat{a}^\dag_{r}\hat{a}_{s}}_{\Psi} 
= \delta_{r, s}n_r^{\Psi} n_s^{\Psi}.
\end{equation}
\noindent If the operator order is reversed, we obtain
\begin{equation}\label{eq:contraction_vir}
\forall (r,s) \in \llbracket 1,L\rrbracket^2, \;
\braket{\acontraction[][black]{}{\hat{a}}{_{s}}{\hat{a}}\hat{a}_{s}\hat{a}^\dag_{r}}_{\Psi} 
= \delta_{r, s}(1-n_r^{\Psi} )(1-n_s^{\Psi} ).
\end{equation}
Notice that the $\Psi$-expectation value of contractions between two creation operators or between two annihilation operators is equal to zero. Notice also that contractions are not limited to two-operator chains: a chain of $n$ second quantization operators can be inserted between the contracted ones. For example, consider $\hat{a}_r^\dag\hat{a}_s$ and $\hat{a}_s\hat{a}_r^\dag$ seen in \eqref{eq:contraction_occ} and \eqref{eq:contraction_vir}, respectively. Consider any $n$-tuple of second quantization operators $(\hat{p}_i)_{1\leq i \leq n}$. We have

\begin{equation}\label{eq:contraction_in_chain-occ}
\forall (r,s) \in \llbracket 1,L\rrbracket^2, \;
\braket{\acontraction[][black]
{\cdots}
{\hat{a}}
{^\dag_{r}\hat{p}_1\cdots\hat{p}_n}
{\hat{a}}
\cdots\hat{a}^\dag_{r}\hat{p}_1\cdots\hat{p}_n\hat{a}_{s}\cdots}_{\Psi} 
= (-1)^n\braket{
\acontraction[][black]{}{\hat{a}}{^\dag_{r}}{\hat{a}}\hat{a}^\dag_{r}\hat{a}_{s}}_{\Psi}  
\braket{\cdots\hat{p}_1\cdots\hat{p}_n\cdots
}_{\Psi}.
\end{equation}
\noindent The same relation stands if the operator order is reversed:
\begin{equation}\label{eq:contraction_in_chain-vir}
\forall (r,s) \in \llbracket 1,L\rrbracket^2, \;
\braket{\acontraction[][black]
{\cdots}
{\hat{a}}
{_{s}\hat{p}_1\cdots\hat{p}_n}
{\hat{a}}
\cdots\hat{a}_{s}\hat{p}_1\cdots\hat{p}_n\hat{a}^\dag_{r}\cdots}_{\Psi} 
= (-1)^n\braket{
\acontraction[][black]{}{\hat{a}}{_{s}}{\hat{a}}\hat{a}_{s}\hat{a}^\dag_{r}}_{\Psi}  
\braket{\cdots\hat{p}_1\cdots\hat{p}_n\cdots
}_{\Psi}.
\end{equation}
\noindent Within a chain, multiple pairs of operators can be contracted simultaneously. In order to illustrate this, we now work in the Fermi vacuum for the remaining of this section. The following example explores the evaluation of the expectation value of a fully contracted chain — each operator of the chain is contracted with another one —, and shows that it can be written as a product of Kronecker deltas.

\begin{example}\label{ex:fully_contracted}
    Let $(i,j)$ be a couple of integers, both belonging to $\llbracket 1,N\rrbracket$. Let $(a,b,c,d)$ be a 4-tuple of integers, all belonging to $\llbracket N+1,L\rrbracket$. Applying formulas \eqref{eq:contraction_occ} to \eqref{eq:contraction_in_chain-vir}, we see that 
\begin{equation*}
\braket{
\acontraction[1ex]{}{\hat{o}}{^\dag_i\hat{v}_a\hat{v}^\dag_b\hat{v}_c\hat{v}^\dag_d}{\hat{o}}
\acontraction[2ex]{\hat{o}^\dag_i}{\hat{v}}{_a}{\hat{v}}
\acontraction[2ex]{\hat{o}^\dag_i\hat{v}_a\hat{v}^\dag_b}{\hat{v}}{_c}{\hat{v}}
\hat{o}^\dag_i\hat{v}_a\hat{v}^\dag_b\hat{v}_c\hat{v}^\dag_d\hat{o}_j
}_{\Psi_0}
=
\braket{\acontraction[][black]{}{\hat{o}}{^\dag_{i}}{\hat{o}}\hat{o}^\dag_{i}
\hat{o}_{j}}_{\Psi_0} 
\braket{
\acontraction[1ex]{}{\hat{v}}{_a}{\hat{v}}
\acontraction[1ex]{\hat{v}_a\hat{v}^\dag_b}{\hat{v}}{_c}{\hat{v}}
\hat{v}_a\hat{v}^\dag_b\hat{v}_c\hat{v}^\dag_d}_{\Psi_0} 
=
\braket{\acontraction[][black]{}{\hat{o}}{^\dag_{i}}{\hat{o}}\hat{o}^\dag_{i}
\hat{o}_{j}}_{\Psi_0} 
\braket{
\acontraction[1ex]{}{\hat{v}}{_a}{\hat{v}}
\hat{v}_a\hat{v}^\dag_d}_{\Psi_0} 
\braket{
\acontraction[1ex]{}{\hat{v}}{_c}{\hat{v}}
\hat{v}^\dag_b\hat{v}_c}_{\Psi_0} 
=\delta_{i,j}\delta_{a,b}\delta_{c,d}.
\end{equation*}
\end{example}

\noindent For a given $\hat{C}$ chain several full contractions of the chain are possible. Some of the full contractions of the chain have a zero expectation value.
\begin{example}
In the framework of {\normalfont Example \ref{ex:fully_contracted}}, we can exhibit the expectation value of another full contraction of the chain:
\begin{equation*}
\braket{
\acontraction[1ex]{}{\hat{o}}{^\dag_i}{\hat{v}}
\acontraction[1ex]{\hat{o}^\dag_i\hat{v}_a}{\hat{v}}{^\dag_b}{\hat{v}}
\acontraction[1ex]{\hat{o}^\dag_i\hat{v}_a\hat{v}^\dag_b\hat{v}_c}{\hat{v}}{^\dag_d}{\hat{o}}
\hat{o}^\dag_i\hat{v}_a\hat{v}^\dag_b\hat{v}_c\hat{v}^\dag_d\hat{o}_j
}_{\Psi_0}
= 0 
\end{equation*}
\end{example}
\noindent Contraction defined in \eqref{eq:contraction_occ} relatively to Fermi vacuum gives a non-zero expectation value only if both operators are \textit{o-}operators, while \eqref{eq:contraction_vir} in the Fermi vacuum is non-zero if both operators are \textit{v-}operators.

Notice that these considerations extend to the physical vacuum, in which case all the occupation numbers are simply equal to zero.

\subsection{Diagrammatic evaluation of expectation values}

\subsubsection{A corollary of Wick's theorem}\label{subsub:corwick}

\noindent {We are now fully equipped for bringing the corollary of Wick's theorem that will be of interest to us in this paper.}

\begin{theorem}[Corollary of the Wick's theorem]\label{theo:Wick}
The expectation value of a chain of second-quantization operators relatively to the physical vacuum or relatively to a one-determinant reference state can be expressed as a sum over all possible full contractions of the chain.
\end{theorem}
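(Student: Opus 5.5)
The plan is to derive the statement from the full Wick theorem: a chain equals its normal-ordered part plus all normal-ordered singly, doubly, $\dots$, contracted terms, down to the fully contracted ones. I would not assume that theorem as a black box, since the paper has not stated it. Instead I would prove the needed corollary directly by induction on the length $n$ of the chain. Throughout, the reference state $\Psi$ is the physical vacuum or a one-determinant state built from $\mathcal{B}$.

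First, the quasi-particle picture. For a one-determinant $\Psi$ with occupation numbers $n_r^{\Psi}$, define for each spin-orbital
\[
\hat b_r=n_r^{\Psi}\hat a_r^\dag+(1-n_r^{\Psi})\hat a_r .
\]
Each $\hat a_r$ and $\hat a_r^\dag$ is then either a quasi-annihilator $\hat b_r$, which kills $\Psi$ from the left side, or a quasi-creator $\hat b_r^\dag$, which is killed by $\bra{\Psi}$. In the physical vacuum, where all $n_r^{\Psi}=0$, this is just $\hat b_r=\hat a_r$. The quasi-operators obey the canonical anticommutation relations $\{\hat p,\hat q\}=c(\hat p,\hat q)$. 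Here $c(\hat p,\hat q)$ is a c-number, and one checks from \eqref{eq:contraction_occ} and \eqref{eq:contraction_vir} that it equals the contraction $\braket{\hat p\hat q}_\Psi$ whenever $\hat p$ is a quasi-annihilator. When $\hat p$ is a quasi-creator, $\braket{\hat p\hat q}_\Psi=0$.

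The induction runs on the number $n$ of operators in the chain $\hat p_1\cdots\hat p_n$. If $n$ is odd, or $n=0$, the claim is immediate: particle or quasi-particle parity forces the value $0$ (resp. $1$), which matches an empty sum (resp. the empty contraction). For $n\ge 2$, look at the leftmost operator $\hat p_1$.
\begin{itemize}
\item If $\hat p_1$ is a quasi-creator, then $\bra{\Psi}\hat p_1=0$. Every contraction of $\hat p_1$ with a later operator is also $0$, because each such pair has vanishing expectation value with $\hat p_1$ on the left. Both sides are therefore zero.
\item If $\hat p_1$ is a quasi-annihilator, anticommute it successively to the right:
\[
\hat p_1\hat p_2\cdots\hat p_n=\sum_{k=2}^{n}(-1)^{k-2}\,c(\hat p_1,\hat p_k)\,\hat p_2\cdots\widehat{\hat p_k}\cdots\hat p_n+(-1)^{n-1}\hat p_2\cdots\hat p_n\hat p_1 .
\]
The last term vanishes because $\hat p_1\ket{\Psi}=0$.
\end{itemize}
Each $c(\hat p_1,\hat p_k)$ is the contraction of $\hat p_1$ with $\hat p_k$. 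The sign $(-1)^{k-2}$ is exactly the sign of \eqref{eq:contraction_in_chain-occ} or \eqref{eq:contraction_in_chain-vir}, with $k-2$ operators lying between the contracted pair.

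Applying the induction hypothesis to each shorter chain $\hat p_2\cdots\widehat{\hat p_k}\cdots\hat p_n$ writes its expectation value as a sum over its full contractions. Every full contraction of the original chain pairs $\hat p_1$ with exactly one $\hat p_k$, so pairing $\hat p_1$ with $\hat p_k$ yields a bijection between full contractions of the original chain and pairs (index $k$, full contraction of the reduced chain). This gives the statement.

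The main obstacle is the sign bookkeeping under this bijection. The sign of a full contraction, defined by repeatedly extracting pairs via \eqref{eq:contraction_in_chain-occ} and \eqref{eq:contraction_in_chain-vir}, must be independent of the order of extraction. Equivalently, it must be the parity of the permutation that brings every contracted pair adjacent, i.e. $(-1)^{\text{number of crossings}}$. I would prove this by showing that removing an adjacent-able pair changes the crossing count consistently: the operators between $\hat p_1$ and $\hat p_k$ either pair among themselves, contributing an even count, or cross the arc once each. Hence $(-1)^{k-2}$ times the sign of the reduced contraction equals the crossing sign of the full one.
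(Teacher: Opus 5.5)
Your proof is correct, and it takes a different route from the paper, which gives no proof of Theorem \ref{theo:Wick}. The paper presents the statement as a known consequence of the full time-independent Wick theorem cited in its introduction: a chain equals the sum over all contraction patterns of normal-ordered products, and only the fully contracted terms survive in a $\Psi$-expectation value. It also deliberately avoids introducing normal ordering.

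Your induction on the chain length does without normal ordering altogether. It uses only the anticommutation relations and the particle--hole relabelling $\hat b_r$, under which every operator either annihilates $|\Psi\rangle$ or is annihilated by $\langle\Psi|$. The individual steps check out:
\begin{itemize}
\item $\{\hat p,\hat q\}=\langle\hat p\hat q\rangle_{\Psi}$ for a quasi-annihilator $\hat p$, and $\langle\hat p\hat q\rangle_{\Psi}=0$ for a quasi-creator, consistent with \eqref{eq:contraction_occ} and \eqref{eq:contraction_vir};
\item the anticommutation expansion carries the correct signs $(-1)^{k-2}$;
\item the bijection between full contractions and pairs (partner $k$ of $\hat p_1$, full contraction of the reduced chain) closes the induction.
\end{itemize}

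Your route also justifies the sign rule, which the paper does not. Your parity observation gives $(-1)^{\mathcal N(\sigma)}=(-1)^{k-2}(-1)^{\mathcal N(\sigma')}$. It rests on two facts: an operator strictly inside the arc either pairs inside it or crosses it exactly once, and deleting the arc leaves the crossings among the other arcs unchanged. The same argument works for an arbitrary arc, and each crossing is counted exactly once, when the first of its two arcs is removed. Hence successive use of \eqref{eq:contraction_in_chain-occ} and \eqref{eq:contraction_in_chain-vir} gives a signature that does not depend on the extraction order and equals $(-1)^{\mathcal N}$. The paper only asserts this after Example \ref{ex:WickFermion}, then relies on it in Proposition \ref{prop:read_diag_full_contracted}.

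The cost of your approach is that it yields only the expectation-value identity, not the operator form of Wick's theorem. That identity is all the paper actually uses.
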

\noindent Example \ref{ex:WickFermion} shows how the expectation value of a chain of four second quantization operators can be decomposed into a sum of expectation values where the chain is fully contracted.
\begin{example}\label{ex:WickFermion}
Let $(i,j)$ be a couple of integers, both belonging to $\llbracket 1,N\rrbracket$. Let $(a,b)$ be a couple of integers, both belonging to $\llbracket N+1,L\rrbracket$. Applying {\normalfont{Theorem \ref{theo:Wick}}} to $\braket{\hat{o}^\dag_i\hat{v}_a\hat{v}^\dag_b\hat{o}_j
}_{\Psi_0}$ one finds
\begin{equation*}
\braket{\hat{o}^\dag_i\hat{v}_a\hat{v}^\dag_b\hat{o}_j
}_{\Psi_0}
=\braket{
\acontraction[1ex]{}{\hat{o}}{^\dag_i\hat{v}_a\hat{v}^\dag_b}{\hat{o}}
\acontraction[2ex]{\hat{o}^\dag_i}{\hat{v}}{_a}{\hat{v}}
\hat{o}^\dag_i\hat{v}_a\hat{v}^\dag_b\hat{o}_j
}_{\Psi_0}
+\braket{
\acontraction[1ex]{}{\hat{o}}{^\dag_i\hat{v}_a}{\hat{v}}
\acontraction[2ex]{\hat{o}^\dag_i}{\hat{v}}{_a\hat{v}^\dag_b}{\hat{o}}
\hat{o}^\dag_i\hat{v}_a\hat{v}^\dag_b\hat{o}_j
}_{\Psi_0}
+\braket{
\acontraction[1ex]{}{\hat{o}}{^\dag_i}{\hat{v}}
\acontraction[1ex]{\hat{o}^\dag_i\hat{v}_a}{\hat{v}}{^\dag_b}{\hat{o}}
\hat{o}^\dag_i\hat{v}_a\hat{v}^\dag_b\hat{o}_j
}_{\Psi_0}.
\end{equation*}
The second term is equal to zero because one contraction is performed over two creation operators and the other is performed over two annihilation operators. The third term is equal to zero in virtue of \eqref{eq:contraction_occ} and \eqref{eq:contraction_vir}. Using relations \eqref{eq:contraction_occ} to \eqref{eq:contraction_in_chain-vir} we conclude that
\begin{equation*}
\braket{\hat{o}^\dag_i\hat{v}_a\hat{v}^\dag_b\hat{o}_j
}_{\Psi_0}
=\braket{
\acontraction[1ex]{}{\hat{o}}{^\dag_i\hat{v}_a\hat{v}^\dag_b}{\hat{o}}
\acontraction[2ex]{\hat{o}^\dag_i}{\hat{v}}{_a}{\hat{v}}
\hat{o}^\dag_i\hat{v}_a\hat{v}^\dag_b\hat{o}_j
}_{\Psi_0}
=\delta_{i,j}\delta_{a,b}.
\end{equation*}
\end{example}

\noindent Formulas \eqref{eq:contraction_in_chain-occ} and \eqref{eq:contraction_in_chain-vir} introduce a signature — i.e., a ``+'' or a ``$-$'' sign —, which can be recovered in the case of fully contracted chains in two ways: (i) by applying successively relations  \eqref{eq:contraction_in_chain-occ} and \eqref{eq:contraction_in_chain-vir} themselves, or (ii) by using the graphical representation of the contractions and counting the crossings between the horizontal and vertical lines above the chain. If $\mathcal{N}$ is the number of crossings, the corresponding signature is equal to $(-1)^{\mathcal{N}}$.

\subsubsection{Diagrammatic supplementation of the $\mathcal{L}_{1,1}$ alphabet}
When evaluating the expectation value of a chain of second quantization operators relatively to a reference state — physical or Fermi vacuum —, Theorem \ref{theo:Wick} allows one to restrict him/herself to full contractions. The Dyck language representation can be used for the purpose of systematically excluding the fully contracted terms whose expectation value is zero. 

\begin{remark}
For the sake of brevity, in the following the Fermi-$\mathcal{L}_{1,1}$ translation introduced in {\normalfont Part I} will simply be termed ``translation''.
\end{remark}

\noindent We now introduce two construction rules for supplementing the $\mathcal{L}_{1,1}$ alphabet:
\begin{ourrule}[Brackets annotation]\label{def:bracket_annotation}
The upper extremity of the bracket resulting from the translation of a (de)excitation operator is annotated with its \textit{o-}operator, and the lower extremity with its \textit{v-}operator. An imaginary line passes through the middle of all the brackets, it splits the diagram into two half-planes. The upper half-plane contains extremities annotated with \textit{o-}operators and the lower half-plane contains extremities annotated \textit{v-}operators.
\end{ourrule}

\begin{ourrule}[Dashes annotation]\label{def:dash_annotation}
The left extremity of the dash resulting from the translation of an arbitrary creation-annihilation pair is annotated with its creation operator and the right extremity with its annihilation operator. The dash lies on the imaginary line mentioned in {\normalfont Rule \ref{def:bracket_annotation}}; its extremities belong to both the half-planes.
\end{ourrule}

\noindent The imaginary line mentioned in rules \ref{def:bracket_annotation} and \ref{def:dash_annotation} is a visual support to the reading of the diagrams, and does not contain any additional information regarding the diagrammatic representation. For this reason, this line will be drawn in gray in the diagrams below.

\begin{definition}[$\mathcal{L}_{1,1}$--diagram representation]
The $\mathcal{L}_{1,1}$--diagram representation of a chain containing (de)excitation operators with or without arbitrary creation-annihilation operator pair is the annotated translation of the chain.
\end{definition}

\begin{example}
Let $(i,j)$ be a couple of integers, both belonging to $\llbracket 1,N\rrbracket$. 
Let $(a,b)$ be a couple of integers, both belonging to $\llbracket N+1,L\rrbracket$. 
Let $\hat{a}^\dag_r$ and $\hat{a}_s$ be two second quantization operators, each pointing at a spin-orbital with non-definite occupation number relatively to the Fermi vacuum. The diagrammatic representation of the deexcitation operator $\hat{D}_a^i$, of the excitation operator $\hat{E}_j^b$, and of the arbitrary pair of creation and annihilation $\hat{a}^\dag_r\hat{a}_s$ are illustrated in {\normalfont Figure \ref{fig:deex-adaga-exci}}.
\end{example}

\begin{figure}[h!]
    \centering
    \begin{tikzpicture}
%\draw[grid] (-1,2) grid (6,-16);

\draw[color=gray, line width=1pt] (-.25,0) -- (5.25,0) node[midway,above] {occupied} node[midway,below] {virtual};
%%\draw (-.5,1.25) node {a)};
%%
\pdeex{0}{i}{a}
\draw[line width=1pt] (1.2,0)node[above]{$\hat{a}^\dag_r$} -- (3.8,0)node[above]{$\hat{a}_s$};
\pexci{5}{j}{b}
%\draw (2.5,-1.3) node {$\hat{D}_a^i\hat{a}^\dag_r\hat{a}_s\hat{E}_j^b$};

\end{tikzpicture}
    \vspace*{-.5cm}
    \caption{$\mathcal{L}_{1,1}$--diagram representation of the $\hat{D}_a^i\hat{a}^\dag_r\hat{a}_s\hat{E}_j^b$ chain.}
    \label{fig:deex-adaga-exci}
\end{figure}
\noindent We now provide two construction rules in view of connecting the extremities of $\mathcal{L}_{1,1}$ diagrams. 
\begin{ourrule}[Directionality of bracket extremities]
The extremities of an opening (respectively, closing) bracket are said to ``point'' in the right (respectively, left) direction.
\end{ourrule}
\noindent This rule is illustrated in Figure \ref{fig:direction_brackets} with the example of couple of one excitation and one deexcitation operators represented using annotated opening and closing brackets. 
\begin{figure}[h!]
\begin{center}
\begin{tikzpicture}
    \draw[line width=2pt, gray] (-.5,0) -- (4.5,0) node[midway,above]{occupied} node[midway,below] {virtual}; 
    \pdeex{0}{i}{a}\pexci{4}{j}{b}
    
    \draw[dotted,thick] (.5,1) -- ++(.5,0);
    \draw[dotted,thick] (.5,-1) -- ++(.5,0);
    \draw[dotted,thick] (3.5,1) -- ++(-.5,0);
    \draw[dotted,thick] (3.5,-1) -- ++(-.5,0);
\end{tikzpicture}
\vspace*{-1cm}
\end{center}
\caption{Illustration of the assignment of a directionality to annotated opening and closing brackets extremities: the dotted lines highlight the fact that we assign a left directionality to extremities of closing brackets, and a right directionality to extremities of opening brackets.}
\label{fig:direction_brackets}
\end{figure}

\begin{ourrule}[Bending extremities of a dash]
The extremities of a dash can be bent toward the upper or lower half-plane, pointing in left or right direction.
\end{ourrule}
\noindent This rule is illustrated with four examples of directionality assignment to the two extremities of a given dash in Figure \ref{fig:adaga_graph}.
\begin{figure}[h!]
\begin{center}
\begin{tikzpicture}
%\draw[grid] (0,-2) grid (4,2);
%\draw[line width=2pt, gray] (-.5,0) -- (4.5,0) node[midway,above]{occupied} node[midway,below]{virtual}; 
%\padaga{2}{}{r}{}{s}
%\draw (-.2,0) node[above]{a)};
%\tikzset{shift={(0,-2)}}
\draw[line width=2pt, gray] (-.5,0) -- (4.5,0) node[midway,above]{occupied} node[midway,below]{virtual}; 
\padaga{1}{occ}{r}{occ}{s}
\draw (-.2,0) node[above]{a)};

\draw[dotted,thick] (1,1.02) -- ++(.5,0); 
\draw[dotted,thick] (3,1.02) -- ++(-.5,0); 

\tikzset{shift={(5,0)}}
\draw[line width=2pt, gray] (-.5,0) -- (4.5,0) node[midway,above]{occupied} node[midway,below]{virtual}; 
\padaga{1}{occ}{r}{vir}{s}
\draw (-.2,0) node[above]{b)};

\draw[dotted,thick] (1,1.02) -- ++(.5,0); 
\draw[dotted,thick] (4,-1.02) -- ++(.5,0); 

\tikzset{shift={(-5,-3)}}
\draw[line width=2pt, gray] (-.5,0) -- (4.5,0) node[midway,above]{occupied} node[midway,below]{virtual}; 
\padaga{1}{vir}{r}{occ}{s}
\draw (-.2,0) node[above]{c)};

\draw[dotted,thick] (0,-1.02) -- ++(-.5,0); 
\draw[dotted,thick] (3,1.02) -- ++(-.5,0); 

\tikzset{shift={(5,0)}}
\draw[line width=2pt, gray] (-.5,0) -- (4.5,0) node[midway,above]{occupied} node[midway,below]{virtual}; 
\padaga{1}{vir}{s}{vir}{s}
\draw (-.2,0) node[above]{d)};

\draw[dotted,thick] (0,-1.02) -- ++(-.5,0); 
\draw[dotted,thick] (4,-1.02) -- ++(.5,0); 

\end{tikzpicture}
\end{center}
\vspace*{-.8cm}
\caption{Four examples of ways of bending the two extremities of a dash: a) top-top/right-left bending; b) top-bottom/right-right bending; c) bottom-top/left-left bending; d) bottom-bottom/left-right bending.}
\label{fig:adaga_graph}
\end{figure}
\begin{definition}[Connected $\mathcal{L}_{1,1}$ diagram]
An $\mathcal{L}_{1,1}$ diagram is said to be {\normalfont connected} when at least two of its extremities are linked using a line.
\end{definition}

\begin{definition}[Fully connected $\mathcal{L}_{1,1}$ diagram]
An $\mathcal{L}_{1,1}$ diagram is said to be {\normalfont fully connected} when each of its extremities is linked to another one using a line.
\end{definition}

\noindent From now on, fully connected $\mathcal{L}_{1,1}$ diagrams will simply be termed ``fully connected diagrams''.

\begin{example}\label{ex:one_connection}
Let $(i,j)$ be a couple of integers, both belonging to $\llbracket 1,N\rrbracket$. 
Let $(a,b)$ be a couple of integers, both belonging to $\llbracket N+1,L\rrbracket$.
Consider the following chain: $\hat{D}_a^i\hat{E}_j^b = \hat{o}_i^\dag\hat{v}_a\hat{v}_b^\dag\hat{o}_j$. There are only three ways to fully connect its $\mathcal{L}_{1,1}$ diagram. They are all accounted for in Figure \ref{fig:one_connection}.
\begin{figure}[h!]
\begin{center}
\begin{tikzpicture}
    \draw[line width=2pt, gray] (-.5,0) -- (4.5,0) node[midway,above]{occupied} node[midway,below] {virtual}; 
    \pdeex{0}{i}{a}\pexci{4}{j}{b}
    
    \draw(.5,1) -- (.5,-1);
    \draw (3.5,1) -- (3.5,-1);
    
    \draw (-.3,0) node[above]{a)};
    \tikzset{shift={(5.5,0)}}
    \draw[line width=2pt, gray] (-.5,0) -- (4.5,0) node[midway,above]{occupied} node[midway,below] {virtual}; 
    \pdeex{0}{i}{a}\pexci{4}{j}{b}
    \draw (.5,1) -- (3.5,1);
    \draw (.5,-1) -- (3.5,-1);
    \draw (-.3,0) node[above]{b)};
    \tikzset{shift={(5.5,0)}}
    \draw[line width=2pt, gray] (-.5,0) -- (4.5,0) node[midway,above]{occupied} node[midway,below] {virtual}; 
    \pdeex{0}{i}{a}\pexci{4}{j}{b}
    \draw (.5,1) .. controls (1,1) and  (3,-1) .. (3.5,-1);
    \draw (.5,-1) .. controls (1,-1) and  (3,1) .. (3.5,1);
    \draw (-.3,0) node[above]{c)};
\end{tikzpicture}
\end{center}
\caption{The only three possible fully connected diagrams corresponding to the $\hat{D}_a^i\hat{E}_j^b$ chain.}
\label{fig:one_connection}
\end{figure}
\end{example}

\begin{definition}[Crossing]
In a fully connected diagram, we name {\normalfont crossing} any non-tangential contact between two connection lines joigning extremities of elements of the diagram.
\end{definition}

\subsubsection{Alternative diagrammatic formulation of the corollary of Wick's theorem}
A fully contracted chain containing (de)excitation operators with or without one arbitrary creation-annihilation operator pair can be bijectively associated to a fully connected diagram. Indeed, the diagrams construction rules have been designed such that each connection in the fully connected diagram corresponds to a contraction in the chain of operators.

\begin{proposition}\label{prop:nullity_one_contraction} 
Let $\hat{C}$ be a fully contracted chain containing (de)excitation operators with or without one arbitrary creation-annihilation operator pair. Let $D$ be its associated fully connected diagram. 
The expectation value of $\hat{C}$ relatively to the Fermi vacuum is equal to zero if at least one connection in $D$ does not respect at least one of the following criteria:
\begin{itemize}
\item[P1] Both extremities belong to the same half-plane of the diagram;
\item[P2] The connection is such that the left extremity is pointing to the right and the right extremity is pointing to the left.
\end{itemize}
If $D$ involves a dash symbol, the expectation value of $\hat{C}$ relatively to the Fermi vacuum is equal to zero if the following criterion is not met:
\begin{itemize}
\item[P3] The dash symbol is connected in one of the four ways reproduced in Figure \ref{fig:adaga_graph}.
\end{itemize} 
\end{proposition}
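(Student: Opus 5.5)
The plan is to reduce everything to the factorization of a fully contracted chain into a product of pairwise contractions. Repeated use of \eqref{eq:contraction_in_chain-occ} and \eqref{eq:contraction_in_chain-vir} gives
\begin{equation*}
\braket{\hat{C}}_{\Psi_0} = \pm \prod_{\text{contracted pairs}} \braket{\text{pair}}_{\Psi_0}.
\end{equation*}
Hence $\braket{\hat{C}}_{\Psi_0}$ vanishes as soon as one single contraction has zero expectation value. By the bijection between contractions of $\hat{C}$ and connections of $D$ stated just before the proposition, it therefore suffices to show the following: whenever a connection violates P1, P2 or P3, the corresponding pairwise contraction $\braket{\hat{x}\hat{y}}_{\Psi_0}$, with $\hat{x}$ to the left of $\hat{y}$ in the chain, is zero.

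Next I read off what each diagram feature encodes. By Rule \ref{def:bracket_annotation}, the upper half-plane carries $o$-operators and the lower half-plane carries $v$-operators. Directionality encodes creation versus annihilation. In a deexcitation $\hat{o}^\dag_i\hat{v}_a$ (an opening bracket), both the creator $\hat{o}^\dag$ and the annihilator $\hat{v}$ point right. In an excitation $\hat{v}^\dag\hat{o}$ (a closing bracket), both point left.

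A short check shows the common rule: an extremity points right exactly when it is an $o$-creator or a $v$-annihilator, that is, a quasi-particle creator with respect to $\Psi_0$. It points left exactly when it is an $o$-annihilator or a $v$-creator, that is, a quasi-particle annihilator. For a dash, the bending options in Figure \ref{fig:adaga_graph} are chosen to be consistent with this encoding. For example, $\hat{a}^\dag_r$ bent up points right (treated as $o^\dag$), and $\hat{a}^\dag_r$ bent down points left (treated as $v^\dag$). The same consistency holds for $\hat{a}_s$.

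Now take a connection between $\hat{x}$ (left) and $\hat{y}$ (right). From \eqref{eq:contraction_occ} and \eqref{eq:contraction_vir} with Fermi-vacuum occupations, the contraction is nonzero only in two cases:
\begin{itemize}
\item $\hat{x}=\hat{o}^\dag_r$ and $\hat{y}=\hat{o}_s$, where $n=1$ for both;
\item $\hat{x}=\hat{v}_s$ and $\hat{y}=\hat{v}^\dag_r$, where $1-n=1$ for both.
\end{itemize}
Contractions of two creators or of two annihilators vanish, as noted after \eqref{eq:contraction_vir}.

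If P1 fails, the pair mixes an $o$-operator and a $v$-operator, so one of the factors $n$ or $1-n$ is zero, or else the Kronecker delta vanishes. If P1 holds but P2 fails, $\hat{x}$ points left or $\hat{y}$ points right. By the dictionary above, this means $\hat{x}$ is an $o$-annihilator or a $v$-creator, or $\hat{y}$ is an $o$-creator or a $v$-annihilator. In each case the pair is either of the wrong type (two creators or two annihilators) or in the wrong order (for example $\hat{o}\hat{o}^\dag$ or $\hat{v}^\dag\hat{v}$). All of these vanish by \eqref{eq:contraction_occ} and \eqref{eq:contraction_vir}.

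For P3: an arbitrary index $r$ either lies in $\llbracket 1,N\rrbracket$ or in $\llbracket N+1,L\rrbracket$. A nonzero contraction of $\hat{a}^\dag_r$ or $\hat{a}_s$ therefore forces it to behave as an $o$- or a $v$-operator with the associated directionality. This leaves only the four top/bottom bendings of the dash listed in Figure \ref{fig:adaga_graph}. Any other attachment, such as an extremity pointing in the direction opposite to its assigned half-plane, reproduces one of the vanishing P1 or P2 cases above.

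The main obstacle is making the dash case rigorous. The dash operators have non-definite occupation, so "bending" must be interpreted as splitting $\hat{a}^\dag_r$ into its $o$- and $v$-components via $\sum_r=\sum_{r\le N}+\sum_{r>N}$. I would state this decomposition explicitly and check that, in each component, the bending matches the directionality dictionary. The remaining steps are a finite case check.
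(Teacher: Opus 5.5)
Your proposal is correct and follows essentially the same route as the paper: the same dictionary (upper/lower half-plane for \textit{o}/\textit{v}-operators, right/left pointing for $\hat{o}^\dag,\hat{v}$ versus $\hat{o},\hat{v}^\dag$), the vanishing of \textit{o}--\textit{v}, same-type and wrongly ordered contractions by \eqref{eq:contraction_occ}--\eqref{eq:contraction_vir}, and, for P3, the observation that the four bendings of Figure~\ref{fig:adaga_graph} are exactly those encoding one of the two nonvanishing contraction schemes. Your explicit factorization via \eqref{eq:contraction_in_chain-occ}--\eqref{eq:contraction_in_chain-vir} and your splitting of the arbitrary indices into occupied and virtual ranges make explicit what the paper's proof leaves implicit.
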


\begin{proof}[\textit{Proof for P3}]
Each of the four patterns in Figure \ref{fig:adaga_graph} corresponds to one of the two contraction schemes described in formulas \eqref{eq:contraction_occ} to \eqref{eq:contraction_in_chain-vir}. Knowing that (i) any other contraction scheme leads to a zero expectation value, and that (ii) any other pattern than those reproduced in Figure \ref{fig:adaga_graph} would correspond to a contraction scheme different from those described in formulas \eqref{eq:contraction_occ} to \eqref{eq:contraction_in_chain-vir}, we deduce that Figure \ref{fig:adaga_graph} presents the only patterns leading to a possibly non-zero expectation value.
\end{proof}

\begin{proof}[Proof for P1]
A connection between two extremities belonging to different half-planes represent a contraction between an \textit{o-}operator and a \textit{v-}operator. According to \eqref{eq:contraction_occ} and \eqref{eq:contraction_vir} the expectation value relatively to the Fermi vacuum of such a contraction is equal to zero.
\end{proof}

\begin{proof}[Proof for P2]
If conditions in \textit{P1} and \textit{P3} are not fulfilled, we already know that the expectation value of $\hat{C}$ is equal to zero. 

Consider that conditions in \textit{P1} and \textit{P3} are fulfilled. In the upper (respectively, lower) half-plane of the diagram, extremities pointing to the right correspond to an \textit{o-}creation (respectively, \textit{v-}annihilation) operator, and extremities pointing to the left correspond to an \textit{o-}annihilation (respectively, \textit{v-}creation) operator. Considering that \textit{P1} and \textit{P3} are fulfilled, we have only three possibilities: (i) the left extremity points to the right and the right extremity points to the left, which corresponds to criterion \textit{P2}; (ii) a connection of two extremities pointing in the same direction corresponds to a contraction of two creation or of two annihilation operators; (iii) the left extremity points to the left and the right extremity points to the right. In cases (ii) and (iii) the corresponding contraction leads to a zero expectation value due to \eqref{eq:contraction_occ} and \eqref{eq:contraction_vir}.
\end{proof}
\begin{definition}
We name {\normalfont well-formed fully connected diagram} a diagram whose structure satisfies property $\textit{P1}$ to $\textit{P3}$ reported in {\normalfont Proposition \ref{prop:nullity_one_contraction}}. 
\end{definition}
\begin{definition}
A chain corresponding to a well-formed fully connected diagram will hereafter be termed {\normalfont well-formed fully contracted chain}.
\end{definition}

\begin{proposition}\label{prop:read_diag_full_contracted}
Let $\hat{C}$ be a well-formed fully contracted chain containing (de)excitation operators with or without one arbitrary creation-annihilation operator pair. Let $D$ be its associated well-formed fully connected diagram. 

A connection in $D$ linking one extremity annotated with an operator pointing at the $\varphi_r$ spin-orbital and an other extremity pointing at the $\varphi_s$ spin-orbital reads $\delta_{r,s}$.

The unsigned expectation value corresponding to $D$ is the product of all the Kronecker deltas corresponding to the connections in $D$.

Let $\mathcal{N}$ be the number of crossings in $D$. The signature of the expectation value corresponding to $D$ is $(-1)^{\mathcal{N}}$.
\end{proposition}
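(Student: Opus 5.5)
The plan is to transport the statement back to the operator side through the bijection between fully contracted chains and fully connected diagrams noted before Proposition \ref{prop:nullity_one_contraction}, and then evaluate the chain directly. Concretely, I would write $\hat{C}$ as a chain $\hat{p}_1\cdots\hat{p}_{2m}$ of elementary creation/annihilation operators, with its contractions given by a perfect matching of $\{1,\dots,2m\}$. Under this correspondence each connection of $D$ is one contracted pair $(\hat{p}_k,\hat{p}_l)$ with $k<l$. Since $D$ is well-formed, properties P1--P3 hold. Exactly as in the proof of P2, this forces every contracted pair to be either an \textit{o-}creation to the left of an \textit{o-}annihilation, or a \textit{v-}annihilation to the left of a \textit{v-}creation.

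The key steps, in order, are as follows.
\begin{itemize}
\item[(1)] I would read each pair with formulas \eqref{eq:contraction_occ} or \eqref{eq:contraction_vir} in the Fermi vacuum. There $n_r^{\Psi_0}=1$ for occupied and $0$ for virtual spin-orbitals, so the occupation factors equal one for the admissible pairs. Each connection therefore contributes exactly $\delta_{r,s}$, which is the first claim.
\item[(2)] I would extract the contractions one at a time by repeated use of \eqref{eq:contraction_in_chain-occ} and \eqref{eq:contraction_in_chain-vir}. At each step I choose the pair whose left operator is leftmost in the current chain. Then the operators sitting before it (the ``$\cdots$'' on the left) are empty, and the formulas apply verbatim. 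Each extraction multiplies by the pair's delta and by $(-1)^{n}$, where $n$ is the number of operators still strictly between the two partners. After all extractions the factorization into deltas, as in Example \ref{ex:fully_contracted}, gives the product claim.
\item[(3)]  I would account for the sign. The $n$ operators between $\hat{p}_k$ and $\hat{p}_l$ at extraction time are paired either among themselves, which gives an even count, or with partners outside $[k,l]$. The latter lie to the right of $l$, since everything to the left of $k$ has already been removed. So $n\equiv$ the number of pairs $(k',l')$ with $k<k'<l<l'$, modulo 2, which is exactly the number of pairs crossing $(k,l)$. Summing over extractions counts each crossing pair exactly once, namely when its left-starting member is extracted. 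Hence the total sign is $(-1)^{\#\text{crossing pairs of the matching}}$.
\end{itemize}

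The remaining, and I expect main, obstacle is to identify this combinatorial crossing number of the matching with the geometric number $\mathcal{N}$ of crossings in $D$. The difficulty is that $D$ is drawn on a two-row layout: brackets occupy both half-planes and dash extremities may be bent up or down. It is therefore not immediately the standard arcs-above-a-line picture. My first attempt would be to use P1 to separate the connections into an upper family and a lower family that cannot meet each other, and to check that the order of extremities along each half-plane agrees with the order of the corresponding operators in the chain. Within one half-plane, two connections drawn as arcs (or straight segments, as in Figure \ref{fig:one_connection}) then cross an odd number of times exactly when their endpoints interlace. Some care is needed with dash extremities bent as in Figure \ref{fig:adaga_graph}, and with connections drawn across the imaginary line as in Figure \ref{fig:one_connection}c. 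Case (c) is excluded by P1. For bent dashes, I would argue that any drawing can be deformed to the canonical one, and that such deformations change crossing counts only by even numbers. This gives $(-1)^{\mathcal{N}}$ as the signature, consistent with the crossing-count rule stated after Example \ref{ex:WickFermion}.
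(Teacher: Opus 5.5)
Your steps (1)--(3) are sound. Always extracting the pair with the leftmost left endpoint and tracking parities does give the sign $(-1)^{I}$, where $I$ is the number of interlacing couples ($k<k'<l<l'$) in the matching of the \emph{whole} chain $\hat{C}$; this is the crossing rule quoted after Example~\ref{ex:WickFermion}. The gap is in the final identification of $I$ with $\mathcal{N}$. Split $I=I_{oo}+I_{vv}+I_{ov}$, where $I_{ov}$ counts couples made of one $o$-contraction and one $v$-contraction that interlace in the chain. Separating the connections of $D$ by half-plane (P1) and comparing orders of extremities can at best give $\mathcal{N}\equiv I_{oo}+I_{vv}\pmod 2$. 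The $o$--$v$ interlacings contribute to the sign, but they are invisible in $D$ because the two families never meet there. So $I\neq\mathcal{N}$ in general. For $\hat{C}_1$ in Example~\ref{ex:full_connection_DE}, with pairs at positions $(1,6)$, $(3,8)$, $(2,5)$, $(4,7)$, one has $I=4$ with $I_{ov}=2$, whereas diagram a) of Figure~\ref{fig:fully_contraction_2deex-2exci} has $\mathcal{N}=2$. The missing lemma is that $I_{ov}$ is always even, and this is exactly where the shape of $\hat{C}$ matters. Your sign argument uses the hypothesis only through contraction types and orientations, and those hold equally for $\hat{o}^\dag_i\hat{v}_a\hat{o}_j\hat{v}^\dag_b$. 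There the admissible full contraction has $I_{ov}=1$ and value $-\delta_{i,j}\delta_{a,b}$, although its $o$-connection and $v$-connection, placed in separate half-planes, would not cross.

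The paper obtains this from the regrouped chain $\hat{C}_S$, in which the $o$-operators are grouped together and the $v$-operators are grouped together, each family keeping its original order. In $\hat{C}_S$ one has $I_{ov}=0$, and the within-family crossings are those of $D$. The passage $\hat{C}\to\hat{C}_S$ is asserted to be an even permutation. Equivalently, $I_{ov}$ is congruent modulo $2$ to the number of pairs ($o$-operator, $v$-operator) with the $o$-operator on the left, and you can check this count directly. Treat each (de)excitation, and the dash once its bending fixes the occupations, as a two-operator block. Suppose $m$ blocks contain one $o$ and one $v$, and $d$ of them have the $o$-operator first. The count is then $d+\binom{m}{2}$, plus an even contribution from a possible $o$--$o$ or $v$--$v$ dash. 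The balance of $o$-creations and $o$-annihilations forces $m=2d$, giving $2d^2$, which is even. With this lemma your route works, and the geometric concerns you flag (bent dashes, deformations) are minor by comparison.

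A smaller point concerns step (1). The connection joining $\hat{a}^\dag_r$ to $\hat{a}_s$ carries $n_r^{\Psi_0}n_s^{\Psi_0}$ or $(1-n_r^{\Psi_0})(1-n_s^{\Psi_0})$, which is not identically one. It reduces to $\delta_{r,s}$ only if the bending of the dash is read as fixing the occupation of $\varphi_r$.
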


\begin{proof}
Let $\hat{C}_S$ be the chain of second quantization operator in  which, reading it from the right to the left, all the \textit{o}-operators of $\hat{C}$ are first met in the same order as in $\hat{C}$, then all the \textit{v-}operators are then met in the same order as in $\hat{C}$.

If $\hat{C}$ is a chain containing (de)excitation operators with or without one arbitrary creation-annihilation operator pair, an even number of permutations is needed to transform $\hat{C}$ into $\hat{C}_S$. Consequently, a well-formed full contraction of $\hat{C}$ and of $\hat{C}_S$ will share the same signature. 

\noindent In any well-formed full contraction of $\hat{C}_S$, there is no contraction between \textit{o-} and \textit{v-}operators. Accordingly, there is no crossing between any contraction joining two \textit{o-}operators and any contraction joining two \textit{v-}operators. The number of crossings in the well-formed full contraction of $\hat{C}_S$ is the same as the number of crossings in $D$. Let $\mathcal{N}$ be this number. According to what precedes — see paragraph \ref{subsub:corwick} —, the signature is given by $(-1)^{\mathcal{N}}$.

We know that each connection in $D$ corresponds to a contraction in a full contraction of the $\hat{C}$ chain. In agreement with \eqref{eq:contraction_occ} to \eqref{eq:contraction_in_chain-vir}, each connection can be unequivocally interpreted as a Kronecker delta, and $D$ valuates to a product of Kronecker deltas.
\end{proof}

\noindent In order to illustrate what precedes, we study two examples: In Example \ref{ex:full_connection_DE} the chain is composed solely of (de)excitation operators; In Example \ref{ex:full_connection_D_adaga_E} the chain involves a pair of arbitrary creation-annihilation operators.

\begin{example}\label{ex:full_connection_DE}
Let $(i,j,k,l)$ be a 4-tuple of integers, all belonging to $\llbracket 1,N\rrbracket$. 
Let $(a,b,c,d)$ be a 4-tuple of integers, all belonging to $\llbracket N+1,L\rrbracket$. Consider the $\hat{D}_a^i\hat{D}_b^j\hat{E}_k^c\hat{E}_l^d = \hat{o}^\dag_i\hat{v}_a\hat{o}^\dag_j\hat{v}_b\hat{v}^\dag_c\hat{o}_k\hat{v}^\dag_d\hat{o}_l$ chain. The translation of $\hat{D}_a^i\hat{D}_b^j\hat{E}_k^c\hat{E}_l^d$ chain is $\bm{(\,(\;)\, )}$. This translation being a Dyck word the expectation value relatively to the Fermi vacuum is susceptible of being non-zero. In {\normalfont Figure \ref{fig:contraction_2deex-2exci_decomposition}} we provide the $\mathcal{L}_{1,1}$-diagram representation of $\hat{D}_a^i\hat{D}_b^j\hat{E}_k^c\hat{E}_l^d$.

\begin{figure}[h!]
\centering
\begin{tikzpicture}

\draw[color=gray, line width=1pt] (-.25,0) -- (5.25,0) node[midway,above] {occupied} node[midway,below] {virtual};

\pdeex{0}{i}{a}
\pdeex{1}{j}{b}
\pexci{4}{k}{c}
\pexci{5}{l}{d}

\def\b{.5}
\def\r{.5}
\def\x{0}

\end{tikzpicture}
\caption{$\mathcal{L}_{1,1}$-diagram representation of $\hat{D}_a^i\hat{D}_b^j\hat{E}_k^c\hat{E}_l^d$.}
\label{fig:contraction_2deex-2exci_decomposition}
\end{figure}

\noindent There are four well-formed full contractions of the $\hat{o}^\dag_i\hat{v}_a\hat{o}^\dag_j\hat{v}_b\hat{v}^\dag_c\hat{o}_k\hat{v}^\dag_d\hat{o}_l$ chain: 

\begin{align*} 
\braket{\hat{C}_1}_{\Psi_0} &\coloneqq
\braket{\acontraction[1ex]{}{\hat{o}}{^\dag_i\hat{v}_a\hat{o}^\dag_j\hat{v}_b\hat{v}^\dag_c}{\hat{o}}
\acontraction[3ex]{\hat{o}^\dag_i\hat{v}_a}{\hat{o}}{^\dag_j\hat{v}_b\hat{v}^\dag_c\hat{o}_k\hat{v}^\dag_d}{\hat{o}}
\acontraction[2ex]{\hat{o}^\dag_i}{\hat{v}}{_a\hat{o}^\dag_j\hat{v}_b}{\hat{v}}
\acontraction[4ex]{\hat{o}^\dag_i\hat{v}_a\hat{o}^\dag_j}{\hat{v}}{_b\hat{v}^\dag_c\hat{o}_k}{\hat{v}}
\hat{o}^\dag_i\hat{v}_a\hat{o}^\dag_j\hat{v}_b\hat{v}^\dag_c\hat{o}_k\hat{v}^\dag_d\hat{o}_l}_{\Psi_0}
=\delta_{i,k}\delta_{j,l}\delta_{a,c}\delta_{b,d}
,\\%\label{eq:decomposition_fully_contracted_DDEE_1}\\
\braket{\hat{C}_2}_{\Psi_0} &\coloneqq
\braket{\acontraction[1ex]{}{\hat{o}}{^\dag_i\hat{v}_a\hat{o}^\dag_j\hat{v}_b\hat{v}^\dag_c\hat{o}_k\hat{v}^\dag_d}{\hat{o}}
\acontraction[3ex]{\hat{o}^\dag_i\hat{v}_a}{\hat{o}}{^\dag_j\hat{v}_b\hat{v}^\dag_c}{\hat{o}}
\acontraction[2ex]{\hat{o}^\dag_i}{\hat{v}}{_a\hat{o}^\dag_j\hat{v}_b}{\hat{v}}
\acontraction[4ex]{\hat{o}^\dag_i\hat{v}_a\hat{o}^\dag_j}{\hat{v}}{_b\hat{v}^\dag_c\hat{o}_k}{\hat{v}}
\hat{o}^\dag_i\hat{v}_a\hat{o}^\dag_j\hat{v}_b\hat{v}^\dag_c\hat{o}_k\hat{v}^\dag_d\hat{o}_l}_{\Psi_0}
=- \delta_{i,l}\delta_{j,k}\delta_{a,c}\delta_{b,d}
,\\
\braket{\hat{C}_3}_{\Psi_0} &\coloneqq
\braket{\acontraction[1ex]{}{\hat{o}}{^\dag_i\hat{v}_a\hat{o}^\dag_j\hat{v}_b\hat{v}^\dag_c}{\hat{o}}
\acontraction[3ex]{\hat{o}^\dag_i\hat{v}_a}{\hat{o}}{^\dag_j\hat{v}_b\hat{v}^\dag_c\hat{o}_k\hat{v}^\dag_d}{\hat{o}}
\acontraction[4ex]{\hat{o}^\dag_i\hat{v}_a\hat{o}^\dag_j}{\hat{v}}{_b}{\hat{v}}
\acontraction[2ex]{\hat{o}^\dag_i}{\hat{v}}{_a\hat{o}^\dag_j\hat{v}_b\hat{v}^\dag_c\hat{o}_k}{\hat{v}}
\hat{o}^\dag_i\hat{v}_a\hat{o}^\dag_j\hat{v}_b\hat{v}^\dag_c\hat{o}_k\hat{v}^\dag_d\hat{o}_l}_{\Psi_0}
=- \delta_{i,k}\delta_{j,l}\delta_{a,d}\delta_{b,c}
,\\
\braket{\hat{C}_4}_{\Psi_0} &\coloneqq
\braket{\acontraction[1ex]{}{\hat{o}}{^\dag_i\hat{v}_a\hat{o}^\dag_j\hat{v}_b\hat{v}^\dag_c\hat{o}_k\hat{v}^\dag_d}{\hat{o}}
\acontraction[2ex]{\hat{o}^\dag_i}{\hat{v}}{_a\hat{o}^\dag_j\hat{v}_b\hat{v}^\dag_c\hat{o}_k}{\hat{v}}
\acontraction[3ex]{\hat{o}^\dag_i\hat{v}_a}{\hat{o}}{^\dag_j\hat{v}_b\hat{v}^\dag_c}{\hat{o}}
\acontraction[4ex]{\hat{o}^\dag_i\hat{v}_a\hat{o}^\dag_j}{\hat{v}}{_b}{\hat{v}}
\hat{o}^\dag_i\hat{v}_a\hat{o}^\dag_j\hat{v}_b\hat{v}^\dag_c\hat{o}_k\hat{v}^\dag_d\hat{o}_l}_{\Psi_0}
=\delta_{i,l}\delta_{j,k}\delta_{a,d}\delta_{b,c}
.%\label{eq:decomposition_fully_contracted_DDEE_4}
\end{align*}

\noindent According to the corollary of Wick's theorem — see {\normalfont Theorem \ref{theo:Wick}} —, the expectation value of the $\hat{D}_a^i\hat{D}_b^j\hat{E}_k^c\hat{E}_l^d$ chain relatively to the Fermi vacuum can be decomposed as 
\begin{align}
\braket{\hat{D}_a^i\hat{D}_b^j\hat{E}_k^c\hat{E}_l^d}_{\Psi_0}  &= \sum_{i=1}^4\braket{\hat{C}_i}_{\Psi_0}\nonumber\\
&=  \delta_{i,k}\delta_{j,l}\delta_{a,c}\delta_{b,d}
- \delta_{i,l}\delta_{j,k}\delta_{a,c}\delta_{b,d}
- \delta_{i,k}\delta_{j,l}\delta_{a,d}\delta_{b,c}
+ \delta_{i,l}\delta_{j,k}\delta_{a,d}\delta_{b,c}.\label{eq:ddeewick}
\end{align}

\noindent We have just shown that finding all the well-formed full contractions of a chain is equivalent to finding all the well-formed fully connected diagrams. In our case we easily see that, by hand, only four ways exist to connect all extremities, respecting $P1$ and $P2$ introduced in {\normalfont Proposition \ref{prop:nullity_one_contraction}}. They are represented in {\normalfont Figure \ref{fig:fully_contraction_2deex-2exci}}.

\begin{figure}[h!]
\centering
\begin{tikzpicture}
%\draw[grid] (-1,-2) grid (13,2);

\def\b{.5}
\def\r{.5}
\def\x{0}

%top non-cross bottom non-cross&
\draw (-.5,1) node[above]{a)};
\draw[color=gray, line width=1pt] (-.25,0) -- (5.25,0) node[midway,above] {occupied} node[midway,below] {virtual};

\draw[line width=1pt] (\x,\b)   node[left] {$\hat{o}^\dag_i$}  -- (\x,-\b)   node[left] {$\hat{v}_a$};
\draw[line width=1pt] (\x+1,\b) node[left] {$\hat{o}^\dag_j$}  -- (\x+1,-\b) node[left] {$\hat{v}_b$};
\draw[line width=1pt] (\x+4,\b) node[right] {$\hat{o}_k$} -- (\x+4,-\b) node[right] {$\hat{v}^\dag_c$};
\draw[line width=1pt] (\x+5,\b) node[right] {$\hat{o}_l$} -- (\x+5,-\b) node[right] {$\hat{v}^\dag_d$};

\draw[line width=1pt] (\x,\b)    arc (180:0:2 and 1);
\draw[line width=1pt] (\x+1,\b)  arc (180:0:2 and 1);
\draw[line width=1pt] (\x,-\b)   arc (180:360:2 and 1);
\draw[line width=1pt] (\x+1,-\b) arc (180:360:2 and 1);

\tikzset{shift={(7,0)}}

%top non-cross bottom cross
\draw (-.5,1) node[above]{b)};
\draw[color=gray, line width=1pt] (-.25,0) -- (5.25,0) node[midway,above] {occupied} node[midway,below] {virtual};

\draw[line width=1pt] (\x,\b)   node[left] {$\hat{o}^\dag_i$}  -- (\x,-\b)   node[left] {$\hat{v}_a$};
\draw[line width=1pt] (\x+1,\b) node[left] {$\hat{o}^\dag_j$}  -- (\x+1,-\b) node[left] {$\hat{v}_b$};
\draw[line width=1pt] (\x+4,\b) node[right] {$\hat{o}_k$} -- (\x+4,-\b) node[right] {$\hat{v}^\dag_c$};
\draw[line width=1pt] (\x+5,\b) node[right] {$\hat{o}_l$} -- (\x+5,-\b) node[right] {$\hat{v}^\dag_d$};

\draw[line width=1pt] (\x,\b)    arc (180:0:2.5 and 1.5);
\draw[line width=1pt] (\x+1,\b)  arc (180:0:1.5 and 1);
\draw[line width=1pt] (\x,-\b)   arc (180:360:2 and 1);
\draw[line width=1pt] (\x+1,-\b) arc (180:360:2 and 1);

\tikzset{shift={(-7,-4)}}

%top cross bottom non-cross
\draw (-.5,1) node[above]{c)};
\draw[color=gray, line width=1pt] (-.25,0) -- (5.25,0) node[midway,above] {occupied} node[midway,below] {virtual};

\draw[line width=1pt] (\x,\b)   node[left] {$\hat{o}^\dag_i$}  -- (\x,-\b)   node[left] {$\hat{v}_a$};
\draw[line width=1pt] (\x+1,\b) node[left] {$\hat{o}^\dag_j$}  -- (\x+1,-\b) node[left] {$\hat{v}_b$};
\draw[line width=1pt] (\x+4,\b) node[right] {$\hat{o}_k$} -- (\x+4,-\b) node[right] {$\hat{v}^\dag_c$};
\draw[line width=1pt] (\x+5,\b) node[right] {$\hat{o}_l$} -- (\x+5,-\b) node[right] {$\hat{v}^\dag_d$};

\draw[line width=1pt] (\x,\b)    arc (180:0:2 and 1);
\draw[line width=1pt] (\x+1,\b)  arc (180:0:2 and 1);
\draw[line width=1pt] (\x,-\b)   arc (180:360:2.5 and 1.5);
\draw[line width=1pt] (\x+1,-\b) arc (180:360:1.5 and 1);

\tikzset{shift={(7,0)}}

%top non-cross bottom non-cross&\\
\draw (-.5,1) node[above]{d)};
\draw[color=gray, line width=1pt] (-.25,0) -- (5.25,0) node[midway,above] {occupied} node[midway,below] {virtual};

\draw[line width=1pt] (\x,\b)   node[left] {$\hat{o}^\dag_i$}  -- (\x,-\b)   node[left] {$\hat{v}_a$};
\draw[line width=1pt] (\x+1,\b) node[left] {$\hat{o}^\dag_j$}  -- (\x+1,-\b) node[left] {$\hat{v}_b$};
\draw[line width=1pt] (\x+4,\b) node[right] {$\hat{o}_k$} -- (\x+4,-\b) node[right] {$\hat{v}^\dag_c$};
\draw[line width=1pt] (\x+5,\b) node[right] {$\hat{o}_l$} -- (\x+5,-\b) node[right] {$\hat{v}^\dag_d$};

\draw[line width=1pt] (\x,\b)    arc (180:0:2.5 and 1.5);
\draw[line width=1pt] (\x+1,\b)  arc (180:0:1.5 and 1);
\draw[line width=1pt] (\x,-\b)   arc (180:360:2.5 and 1.5);
\draw[line width=1pt] (\x+1,-\b) arc (180:360:1.5 and 1);
\end{tikzpicture}

\caption{The four possible well-formed fully connected diagrams corresponding to the $\hat{D}_a^i\hat{D}_b^j\hat{E}_k^c\hat{E}_l^d$ chain.}
\label{fig:fully_contraction_2deex-2exci}
\end{figure}
\end{example}

\noindent According to {\normalfont Proposition \ref{prop:read_diag_full_contracted}}, each well-formed fully connected diagram corresponds to a product of Kronecker deltas. The diagrams represented in parts a) to d) of {\normalfont Figure \ref{fig:fully_contraction_2deex-2exci}} correspond to the expectation values $\braket{\hat{C}_1}_{\Psi_0}$ to $\braket{\hat{C}_4}_{\Psi_0}$, respectively. Diagrams a) and d) have an even number of crossings — two and zero, respectively — so that the corresponding signature is ``+''. On the other hand, diagrams b) and c) have one crossing, so that the corresponding signature is ``$-$''. In conclusion, the exact same result derived using {\normalfont Theorem \ref{theo:Wick}} in \eqref{eq:ddeewick} can be found from the $\mathcal{L}_{1,1}$-diagram representation.

\begin{proposition}\label{prop:symmetry}
Let $\hat{C}$ be a chain solely composed of (de)excitation operators. Let $D$ be the $\mathcal{L}_{1,1}$-diagram representation of $\hat{C}$. Consider a connection respecting $P1$ and $P2$ in {\normalfont Proposition \ref{prop:nullity_one_contraction}} in one half-plane of $D$. The image of that connection in the other half-plane using the imaginary line as symmetry axis also respects $P1$ and $P2$.
\end{proposition}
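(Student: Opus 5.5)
The idea is to exploit the geometry of the $\mathcal{L}_{1,1}$-diagram when $\hat{C}$ contains only (de)excitation operators. There is no dash, so every element of $D$ is a bracket. By Rule~\ref{def:bracket_annotation}, each bracket has exactly one upper extremity (its \textit{o-}operator) and one lower extremity (its \textit{v-}operator). These two extremities sit at the same horizontal position and, by the directionality rule, point in the same direction: to the right for an opening bracket (deexcitation $\hat{o}^\dag\hat{v}$), to the left for a closing bracket (excitation $\hat{v}^\dag\hat{o}$). So reflection across the imaginary line is a well-defined map on extremities. It sends the upper extremity of each bracket to the lower extremity of the same bracket, preserves horizontal order, and preserves the pointing direction.

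First, I check P1. Let a connection join two extremities $x$ and $y$ lying in the same half-plane, say the upper one; the argument for the lower one is symmetric. Their mirror images $x'$ and $y'$ are the other extremities of the same two brackets, and both lie in the lower half-plane. The reflected connection joining $x'$ to $y'$ therefore has both extremities in the same half-plane, so it satisfies P1.

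Next, I check P2. Assume $x$ lies to the left of $y$, with $x$ pointing right and $y$ pointing left. Since reflection preserves horizontal position, $x'$ is still the left extremity and $y'$ the right one. Since both extremities of a bracket share its direction, $x'$ points right and $y'$ points left. Hence the reflected connection satisfies P2.

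The only delicate point is to state precisely that the directionality of an extremity depends only on the bracket type, not on the half-plane. This is exactly the content of the directionality rule, and it is what distinguishes brackets from dashes: a dash's extremities can be bent independently, and dashes are excluded by hypothesis. As a sanity check in operator language, a P1–P2 connection in the upper half-plane is a contraction $\hat{o}^\dag_i\cdots\hat{o}_k$ coming from an opening bracket followed by a closing bracket. The partner operators of the same brackets are $\hat{v}_a$ and $\hat{v}^\dag_c$ in the same order, which is a $v$-contraction of the type in \eqref{eq:contraction_vir}, nonzero in the Fermi vacuum. Conversely, a lower-half-plane connection reflects to an $o$-contraction of the type in \eqref{eq:contraction_occ}.
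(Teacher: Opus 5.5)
Your proof is correct and follows the same route as the paper's: reflection across the imaginary line keeps both extremities in a single half-plane, which gives P1, and it preserves left--right directionality, which gives P2. What you add is an explicit account of what the paper leaves implicit. The mirror image of an extremity is the other extremity of the same bracket, and both extremities of a bracket share its pointing direction. That is exactly why the statement must exclude dashes.
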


\begin{proof}
The two extremities of the connection are in a single half-plane before and after the symmetry operation is performed. Moreover, $P2$ satisfaction is preserved due to the fact that the left-right directionality is preserved by the axial symmetry operation described in the proposition.
\end{proof}

\noindent We can build a simplified systematic method for finding all the well-formed fully connected diagrams corresponding to any chain, say $\hat{C}$, of (de)excitation operators:

\begin{ourrule}\label{rule:symmetry}
Consider the upper half-plane of the $\mathcal{L}_{1,1}$-diagram representation of $\hat{C}$, and connect all the extremities using connections that respect the $\textit{P2}$ criterion introduced in Proposition \ref{prop:nullity_one_contraction}. This step produces $K$ connection patterns in the upper half-plane. According to {\normalfont Proposition \ref{prop:symmetry}} one can find $K$ connection patterns for the lower half-plane. One finally combines one pattern for each half-plane, leading to $K^2$ well-formed fully connected diagrams.
\end{ourrule}

\begin{remark}
From {\normalfont Proposition \ref{prop:symmetry}} and {\normalfont Rule \ref{rule:symmetry}} one sees that it is necessary only to find $K$ connection patterns for systematically deducing the $K^2$ well-formed fully connected diagrams corresponding to the chain of interest. This $K$ number will be given an explicit expression in the next section of this paper.
\end{remark}

\begin{example}\label{ex:full_connection_D_adaga_E}
Let $(i,j)$ be a couple of integers, both belonging to $\llbracket 1,N\rrbracket$. 
Let $(a,b)$ be a couple of integers, both belonging to $\llbracket N+1,L\rrbracket$.
Let $\hat{a}^\dag_r$ and $\hat{a}_s$ be two second quantization operators, each pointing at a spin-orbital with non-definite occupation number relatively to the Fermi vacuum.
Consider the $\hat{D}_a^i\hat{a}^\dag_r\hat{a}_s\hat{E}_j^b = \hat{o}^\dag_i\hat{v}_a\hat{a}^\dag_r\hat{a}_s\hat{v}^\dag_b\hat{o}_j$ chain. Its translation is $\bm{(\,-\, )}$. According to {\normalfont Proposition IV.5} in {\normalfont Part I}, the expectation value of $\hat{D}_a^i\hat{a}^\dag_r\hat{a}_s\hat{E}_j^b$ relatively to the Fermi vacuum can be different from zero. In {\normalfont Figure \ref{fig:deex-adaga-exci}} we already provided the $\mathcal{L}_{1,1}$-diagram representation of $\hat{D}_a^i\hat{a}^\dag_r\hat{a}_s\hat{E}_j^b$.

\noindent There are three well-formed full contractions of the $\hat{o}^\dag_i\hat{v}_a\hat{a}^\dag_r\hat{a}_s\hat{v}^\dag_b\hat{o}_j$ chain: 

\begin{align*}\label{eq:contraction_deex-adaga-exci_decomposition}
\braket{\hat{C}_1}_{\Psi_0}
&\coloneqq \braket{
\acontraction[1ex]{}{\hat{o}}{^\dag_i\hat{v}_a\hat{a}^\dag_r\hat{a}_s\hat{v}^\dag_b}{\hat{o}}
\acontraction[2ex]{\hat{o}^\dag_i}{\hat{v}}{_a}{\hat{a}}
\acontraction[2ex]{\hat{o}^\dag_i\hat{v}_a\hat{a}^\dag_r}{\hat{a}}{_s}{\hat{v}}
\hat{o}^\dag_i\hat{v}_a\hat{a}^\dag_r\hat{a}_s\hat{v}^\dag_b\hat{o}_j
}_{\Psi_0}
= \delta_{a,s}\delta_{b,r}\delta_{i,j}(1-n_r^{\Psi_0})(1-n_s^{\Psi_0}),
\quad\\
\braket{\hat{C}_2}_{\Psi_0}
&\coloneqq \braket{
\acontraction[1ex]{}{\hat{o}}{^\dag_i\hat{v}_a\hat{a}^\dag_r}{\hat{a}}
\acontraction[2ex]{\hat{o}^\dag_i\hat{v}_a}{\hat{a}}{^\dag_r\hat{a}_s\hat{v}^\dag_b}{\hat{o}}
\acontraction[3ex]{\hat{o}^\dag_i}{\hat{v}}{_a\hat{a}^\dag_r\hat{a}_s}{\hat{v}}
\hat{o}^\dag_i\hat{v}_a\hat{a}^\dag_r\hat{a}_s\hat{v}^\dag_b\hat{o}_j
}_{\Psi_0}
=-\delta_{i,r}\delta_{j,s}\delta_{a,d}\delta_{b,c}n_r^{\Psi_0}n_s^{\Psi_0},
\quad\\
\braket{\hat{C}_3}_{\Psi_0}
&\coloneqq \braket{
\acontraction[1ex]{}{\hat{o}}{^\dag_i\hat{v}_a\hat{a}^\dag_r\hat{a}_s\hat{v}^\dag_b}{\hat{o}}
\acontraction[2ex]{\hat{o}^\dag_i}{\hat{v}}{_a\hat{a}^\dag_r\hat{a}_s}{\hat{v}}
\acontraction[3ex]{\hat{o}^\dag_i\hat{v}_a}{\hat{a}}{^\dag_r}{\hat{a}}
\hat{o}^\dag_i\hat{v}_a\hat{a}^\dag_r\hat{a}_s\hat{v}^\dag_b\hat{o}_j
}_{\Psi_0}
=\delta_{i,j}\delta_{a,b}\delta_{r,s}n_r^{\Psi_0}n_s^{\Psi_0}.
\end{align*}

\noindent According to the corollary of Wick's theorem — see {\normalfont Theorem \ref{theo:Wick}} —, the expectation value of the $\hat{D}_a^i\hat{a}^\dag_r\hat{a}_s\hat{E}_j^b$ chain relatively to the Fermi vacuum can be decomposed as

\begin{align*}
\braket{\hat{D}_a^i\hat{a}^\dag_r\hat{a}_s\hat{E}_j^b}_{\Psi_0} &= \sum_{i=1}^3 \braket{\hat{C}_i}_{\Psi_0} \\
&= \delta_{a,r}\delta_{b,s}\delta_{i,j}(1-n^{\Psi_0}_r)(1-n^{\Psi_0}_s)
- \delta_{i,s}\delta_{j,r}\delta_{a,b}n^{\Psi_0}_rn^{\Psi_0}_s
+ \delta_{i,j}\delta_{a,b}\delta_{r,s}n^{\Psi_0}_rn^{\Psi_0}_s
\end{align*}

\noindent Observing {\normalfont Figure \ref{fig:deex-adaga-exci}} one sees that there are only three ways to connect all extremities, respecting $P1$, $P2$ and $P3$ introduced in {\normalfont Proposition \ref{prop:nullity_one_contraction}}. They are represented in {\normalfont Figure \ref{fig:contraction_deex-adaga-exci_decomposition}}.

\begin{figure}[h!] 
\centering
 \resizebox {\textwidth} {!} {
\begin{tikzpicture}
%\draw[grid] (-1,2) grid (6,-16);

%%%
\def\b{.5}
\def\r{.5}
\def\x{0}

\draw[color=white] (-.25,0) -- (5.25,0);

\draw[color=white,line width=1pt] (\x,\b)
	-- (\x,-\b) arc (180:360:2.5 and 1.5)
	-- (\x+5,\b) arc (0:180:2.5 and 1.5);
\draw[color=white,line width=1pt] (\x+3,0) arc (-90:0:\r) 
	-- ++(0,0) arc (0:180:1 and .5)
	-- ++(0,0) arc (180:270:\r) 
	-- cycle;
\draw[color=white] (2.5,-2.3) node {$+\delta_{i,j}\delta_{a,b}\delta_{r,s}n_r^{\Psi_0}n_s^{\Psi_0}$};

\draw (-.5,1) node[above]{a)};
\draw[color=gray, line width=1pt] (-.55,0) -- (5.55,0) node[midway,above] {occupied} node[midway,below] {virtual};
%\draw (-.5,1.25) node {b)};

\draw[line width=1pt] (\x-.3,\b) node[left] {$\hat{o}^\dag_i$}
	-- (\x-.3,-\b) node[left] {$\hat{v}_a$} arc (180:270:\r) 
	-- (\x+.2,-1) arc (-90:0:2*\r) node[above] {$\hat{a}^\dag_r$}
	-- (\x+3.8,0) node[above] {$\hat{a}_s$} arc (180:270:2*\r)
	-- (\x+4.8,-1) arc (-90:0:\r)  node[right] {$\hat{v}^\dag_b$}
	-- (\x+5.3,\b) node[right] {$\hat{o}_j$} arc (0:180:2.8 and 1.5);

%%%
\tikzset{shift={(7,0)}}

\draw (-.5,1) node[above]{b)};
\draw[color=gray, line width=1pt] (-.25,0) -- (5.25,0) node[midway,above] {occupied} node[midway,below] {virtual};
%\draw (-.5,1.25) node {c)};

\draw[line width=1pt] (\x,\b) node[left] {$\hat{o}^\dag_i$}
	-- (\x,-\b) node[left] {$\hat{v}_a$} arc (180:360:2.5 and 1.5) node[right] {$\hat{v}^\dag_b$}
	-- (\x+5,\b) node[right] {$\hat{o}_j$} arc (0:180:1.75 and 1.5)
	-- ++(0,0) node[left] {$\hat{a}^\dag_r$} arc (180:270:\r) 
	-- (\x+3,0) arc (-90:0:\r) node[right] {$\hat{a}_s$}
	-- ++(0,0) arc (0:180:1.75 and 1.5);

%%%
\tikzset{shift={(7,0)}}

\draw (-.5,1) node[above]{c)};
\draw[color=gray, line width=1pt] (-.25,0) -- (5.25,0) node[midway,above] {occupied} node[midway,below] {virtual};
%\draw (-.5,1.25) node {d)};

\draw[line width=1pt] (\x,\b) node[left] {$\hat{o}^\dag_i$}
	-- (\x,-\b) node[left] {$\hat{v}_a$} arc (180:360:2.5 and 1.5) node[right] {$\hat{v}^\dag_b$}
	-- (\x+5,\b) node[right] {$\hat{o}_j$} arc (0:180:2.5 and 1.5);
\draw[line width=1pt] (\x+3,0) arc (-90:0:\r) node[right] {$\hat{a}_s$}
	-- ++(0,0) arc (0:180:1 and .5) node[left] {$\hat{a}^\dag_r$}
	-- ++(0,0) arc (180:270:\r) 
	-- cycle;
\end{tikzpicture}
}
\vspace*{-1cm}
\caption{The three well-formed fully connected diagrams corresponding to the $\hat{D}_a^i\hat{a}^\dag_r\hat{a}_s\hat{E}_j^b$ chain.}
\label{fig:contraction_deex-adaga-exci_decomposition}
\end{figure}

\noindent According to {\normalfont Proposition \ref{prop:read_diag_full_contracted}}, all the well-formed fully connected diagrams represented in the {\normalfont Figure \ref{fig:contraction_deex-adaga-exci_decomposition}} correspond to a product of Kronecker deltas. The diagram a), b) and c) correspond to the $\braket{\hat{C}_1}_{\Psi_0}$, $\braket{\hat{C}_2}_{\Psi_0}$ and $\braket{\hat{C}_3}_{\Psi_0}$ expectation values, respectively. Therefore, we find again the same result applying {\normalfont Theorem \ref{theo:Wick}} and our alternative diagrammatic procedure.
\end{example}

\section{Numbering}
\subsection{Chains of second quantization operators}

As we have seen in Part I, there are some sufficient conditions for the nullity of the expectation value of a chain of second quantization which do not require the identification of which spinorbital each operator is pointing at.
There exists a mapping between the sequence of second quantization operators in a chain which possibly has a non-zero expectation value relatively to a reference state, and a Dyck word.
\begin{definition}[Catalan number]
The $n^\text{th}$ Catalan number $C_n$ is defined as
\begin{equation*}
C_n = \dfrac{1}{n+1}\binom{2n}{n}.
\end{equation*}
\end{definition}

\begin{proposition}\label{prop:catalan-dyck}
For a given value of a natural integer $n$, there are exactly $C_n$ Dyck words that can be composed with $n$ opening brackets and $n$ closing brackets.
\end{proposition}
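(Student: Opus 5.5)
The plan is to prove the count with the classical reflection (André) argument. A word of length $2n$ over $\{(,)\}$ with exactly $n$ opening and $n$ closing brackets is a Dyck word exactly when every prefix has at least as many opening as closing brackets. This is the prefix characterization of Dyck words recalled in Part I. Words with $n$ brackets of each kind correspond to choosing the positions of the opening brackets, so there are $\binom{2n}{n}$ of them. It therefore suffices to count the ``bad'' words, those having some prefix with strictly more closing than opening brackets, and to show there are $\binom{2n}{n+1}$ of them. Then
\begin{equation*}
\binom{2n}{n}-\binom{2n}{n+1}=\binom{2n}{n}\left(1-\frac{n}{n+1}\right)=\frac{1}{n+1}\binom{2n}{n}=C_n .
\end{equation*}

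For the bad words, I will build a bijection between them and all words with $n+1$ closing brackets and $n-1$ opening brackets, of which there are $\binom{2n}{n+1}$. Take a bad word $w$ and let $k$ be the first position at which the prefix $w_1\cdots w_k$ contains one more closing than opening bracket. The image of $w$ is obtained by swapping every bracket in $w_{k+1}\cdots w_{2n}$, turning $($ into $)$ and vice versa. The suffix after position $k$ contains one more opening than closing bracket, so after swapping the whole word has $n+1$ closing and $n-1$ opening brackets.

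The inverse map is the step that needs care. Any word with $n+1$ closing and $n-1$ opening brackets has an excess of closing brackets at its end, so some prefix must reach an excess of exactly one closing bracket. Take the first position $k$ where this happens and swap the suffix after $k$ again. The prefix up to $k$ is left unchanged by both maps, so both maps select the same $k$. Hence the two maps are mutually inverse, and the result is a bad word with $n$ brackets of each kind. Checking that this first-hitting position is identical in both directions is the main obstacle; the rest is bookkeeping.

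One could instead argue through the recursion $C_{n+1}=\sum_{i=0}^{n}C_iC_{n-i}$, by decomposing a nonempty Dyck word as $(\,u\,)\,v$, and then solving it with generating functions. The reflection argument is shorter, so it is the one I would write out. The case $n=0$, where only the empty word exists and $C_0=1$, can be checked directly.
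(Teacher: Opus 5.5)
Your proof is correct. The paper does not prove this proposition itself; it only refers to Roman's textbook on Catalan numbers, so your reflection argument is a self-contained replacement for that citation rather than a variant of an argument in the text.

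The individual steps hold:
\begin{itemize}
\item At the first prefix where closing brackets outnumber opening ones, the excess is exactly one, because the difference changes by $\pm 1$ at each position.
\item Swapping the suffix after that position turns $n$ opening and $n$ closing brackets into $n-1$ opening and $n+1$ closing brackets.
\item Conversely, a word ending with an excess of two closing brackets must first reach an excess of exactly one at some position.
\item The first-hitting position depends only on the prefix, which both maps leave unchanged, so the two maps are mutually inverse.
\item The count is then $\binom{2n}{n}-\binom{2n}{n+1}=\frac{1}{n+1}\binom{2n}{n}$.
\end{itemize}

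The only external ingredient you use is the characterization of Dyck words by their list of depths: nonnegative throughout and zero at the end. That is exactly what Part I supplies, and the paper relies on it elsewhere, for instance in the Appendix, where a negative depth rules out a Dyck word. In that language your bijection reflects the depth path after it first reaches $-1$, which fits the depth and opening-depth descriptors used throughout Section III. The citation buys brevity, and your argument makes the closed form $C_n=\frac{1}{n+1}\binom{2n}{n}$ follow directly from the paper's own definitions.
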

\begin{proof}
A proof can be found in \cite{roman_introduction_2015}.
\end{proof}
\noindent The three following propositions provide the number of sequences of second quantization operators in chains which possibly have a non-zero expectation value relatively to the physical (respectively, Fermi) vacuum — Proposition \ref{prop:numbering_physic} — (respectively, propositions \ref{prop:numbering_fermi} and \ref{prop:numbering_fermi_ex_dex}).

\begin{proposition}\label{prop:numbering_physic}
There are $C_n(n!)^2$ ways to arrange $n$ creation and $n$ annihilation operators in order to get a $2n$--long chain with a possibly non-zero expectation value relatively to the physical vacuum. 
\end{proposition}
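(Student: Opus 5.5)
We interpret the statement as counting chains of $n$ \emph{distinct, labelled} creation operators $\hat{a}^\dag_{p_1},\dots,\hat{a}^\dag_{p_n}$ and $n$ distinct, labelled annihilation operators $\hat{a}_{q_1},\dots,\hat{a}_{q_n}$. We count the orderings of these $2n$ operators whose expectation value relatively to the physical vacuum is not excluded by the Dyck criterion of Part I. The plan is to separate the count into a ``shape'' part, governed by Proposition \ref{prop:catalan-dyck}, and a ``labelling'' part, governed by permutations.

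\textbf{Step 1: reduce to shapes.} We recall the physical-vacuum translation of Part I, in which an annihilation operator becomes an opening bracket and a creation operator a closing bracket; equivalently, reading the chain from right to left, every prefix must contain at least as many creations as annihilations. By the nullity criterion of Part I, a $2n$-long chain can have a non-zero vacuum expectation value only if its translation is a Dyck word. We read ``possibly non-zero'' exactly as ``translation is a Dyck word'', matching the convention of Part I. The \emph{shape} of a chain is its sequence of creation/annihilation types, i.e.\ its bracket word. Proposition \ref{prop:catalan-dyck} then gives exactly $C_n$ admissible shapes with $n$ opening and $n$ closing brackets.

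\textbf{Step 2: count labellings of a fixed shape.} Fix an admissible Dyck shape. It has $n$ positions for creation operators and $n$ positions for annihilation operators. Any bijection of the $n$ labelled creation operators onto the creation positions, combined with any bijection of the $n$ labelled annihilation operators onto the annihilation positions, produces a distinct chain with that shape. This gives $n!\cdot n! = (n!)^2$ chains per shape, independently of the shape.

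\textbf{Step 3: combine.} Distinct shapes give distinct chains, and every chain has exactly one shape. Summing over the $C_n$ shapes therefore gives $C_n(n!)^2$, as claimed.

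\textbf{Main obstacle.} The combinatorics is routine. The delicate point is justifying that no further shape-independent restriction arises, i.e.\ that the labels do not interact with admissibility. The Dyck criterion depends only on the creation/annihilation types, not on which spin-orbitals the operators point at. Hence every labelling of an admissible shape is ``possibly non-zero'', in the sense that suitable coincidences of spin-orbital indices make it non-zero. I would check this by pairing each annihilation bracket with its matching closing bracket in the Dyck word. Setting each matched pair's indices equal gives a non-vanishing full contraction via \eqref{eq:contraction_vir} with all occupation numbers zero. This confirms that the bound is attained and not merely an upper count.
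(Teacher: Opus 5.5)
Your proof is correct and follows the paper's argument: Proposition \ref{prop:catalan-dyck} gives $C_n$ admissible $\mathcal{L}_1$ bracket shapes, and each shape is filled in $n!\cdot n!$ ways by permuting the creation operators and, separately, the annihilation operators; these permutations leave the translation unchanged. Your attainability check goes beyond the paper, which only counts chains not excluded by the Dyck criterion, but one step needs tightening: a single non-vanishing full contraction does not by itself make the expectation value non-zero. You should invoke the pairwise distinctness of the creation indices, which makes the non-crossing Dyck-matched contraction the only surviving term and gives the value $+1$.
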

\begin{proof}
Let $(r_k,s_k)_{1\leq k \leq n}$ be a $2n$-tuple of integers, all belonging to $\llbracket 1, L\rrbracket$. Consider the $n$ annihilation operators $\hat{a}_{s_1},\ldots,\hat{a}_{s_n}$, and the $n$ creation operators $\hat{a}^\dag_{r_1},\ldots,\hat{a}^\dag_{r_n}$.

We know from Part I that if the $\mathcal{L}_1$-translation of a chain of $n$ creation and $n$ annihilation operators is not a Dyck word its expectation value relatively to the physical vacuum is zero. 

Consider a Dyck word, $S$, composed of $n$ opening and $n$ closing brackets, and the unique $\hat{C}$ chain satisfying the two following conditions: (i) for every $m$ in $\llbracket 1, 2n\rrbracket$, the $m^\text{th}$ operator in $\hat{C}$ is an annihilation (respectively, creation) operator if the $m^\text{th}$ bracket in $S$ is an opening (respectively, closing) bracket; (ii) for every $k$ in $\llbracket 1, n\rrbracket$, the $k^\text{th}$ annihilation (respectively, creation) operator in $\hat{C}$ is $\hat{a}_{s_k}$ (respectively, $\hat{a}^\dag_{r_k}$). The $\mathcal{L}_1$--translation of $\hat{C}$ is $S$.

Moreover, all the chains obtained by permuting creation operators or by permuting annihilation operators — here we use an inclusive ``or'' — in $\hat{C}$ have in common that their $\mathcal{L}_1$--translation is $S$. Consequently, the number of ways to arrange $n$ creation and $n$ annihilation operators in a chain whose expectation value relatively to the physical vacuum is possibly non-zero is equal to the number of Dyck words composed with $n$ opening and $n$ closing brackets, i.e., $C_n$ — see Proposition \ref{prop:catalan-dyck} —, times the number of ways to permute all creation operators — they are $n!$ —, times the number of ways to permute all annihilation operators, i.e., $(n!)$.
\end{proof}

\begin{example}
Let $(r_1,s_1,r_2,s_2)$ be a 4-tuple of integers, all belonging to $\llbracket 1,L\rrbracket$. Consider the two annihilation operators $\hat{a}_{r_1}$ and $\hat{a}_{r_2}$, and the two creation operators $\hat{a}^\dag_{s_1}$ and $\hat{a}^\dag_{s_2}$. 

{\normalfont Proposition \ref{prop:catalan-dyck}} tells us that there are two Dyck words composed of $2$ opening and $2$ closing brackets.
The first one is $\bm{((\,))}$; the associated chain as defined in the proof of {\normalfont Proposition \ref{prop:numbering_physic}} is $\hat{a}_{r_1}\hat{a}_{r_2}\hat{a}^\dag_{s_1}\hat{a}^\dag_{s_2}$. Considering all permutations obtained from this chain by permuting creation operators or by permuting annihilation operators, the four chains whose $\mathcal{L}_1$--translation is $\bm{((\,))}$ are
\begin{equation}
\hat{a}_{r_1}\hat{a}_{r_2}\hat{a}^\dag_{s_1}\hat{a}^\dag_{s_2}, \;
\hat{a}_{r_1}\hat{a}_{r_2}\hat{a}^\dag_{s_2}\hat{a}^\dag_{s_1}, \;
\hat{a}_{r_2}\hat{a}_{r_1}\hat{a}^\dag_{s_1}\hat{a}^\dag_{s_2}, \;
\hat{a}_{r_2}\hat{a}_{r_1}\hat{a}^\dag_{s_2}\hat{a}^\dag_{s_1}.
\end{equation}
The second one is $\bm{(\,)(\,)}$; the associated chain as defined in the proof of {\normalfont Proposition \ref{prop:numbering_physic}} is $\hat{a}_{r_1}\hat{a}^\dag_{s_1}\hat{a}_{r_2}\hat{a}^\dag_{s_2}$. Considering all permutations obtained from this chain by permuting creation operators or by permuting annihilation operators, the four chains whose $\mathcal{L}_1$--translation is $\bm{(\,)(\,)}$ are
\begin{equation}
\hat{a}_{r_1}\hat{a}^\dag_{s_1}\hat{a}_{r_2}\hat{a}^\dag_{s_2}, \;
\hat{a}_{r_1}\hat{a}^\dag_{s_2}\hat{a}_{r_2}\hat{a}^\dag_{s_1}, \;
\hat{a}_{r_2}\hat{a}^\dag_{s_1}\hat{a}_{r_1}\hat{a}^\dag_{s_2}, \;
\hat{a}_{r_2}\hat{a}^\dag_{s_2}\hat{a}_{r_1}\hat{a}^\dag_{s_1}.
\end{equation}
In conclusion, we indeed find the number of chains with a possibly non-zero expectation value — they are eight — we expected to find using the formula in {\normalfont Proposition \ref{prop:numbering_physic}}.
\end{example}

\begin{remark}
In a more abstract approach in which the indices to the operators are not variables but parameters, i.e., an approach in which we only keep the nature (``creation'' vs. ``annihilation'') of the operators, Catalan numbers count the number of sequences of creation and annihilation operators in a chain — i.e., the number of patterns — leading to an expectation value that can be non-zero.
\end{remark}

\begin{proposition}\label{prop:numbering_fermi}
The number of ways to arrange $n_o$ \textit{o-}creation operators, $n_o$ \textit{o-}annihilation operators, $n_v$ \textit{v-}creation operators, and $n_v$ \textit{v-}annihilation operators into a $(2n_o+2n_v)$--long chain whose expectation value relatively to the Fermi vacuum is possibly different from zero is equal to
\begin{equation*}
C_{n_o}C_{n_v}\binom{2n_o+2n_v}{2n_o}(n_o!)^2(n_v!)^2.
\end{equation*}
\end{proposition}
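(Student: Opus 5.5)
We mimic the proof of Proposition \ref{prop:numbering_physic}, adapting it to the Fermi-vacuum criterion of Part I. There, a chain of $n_o$ \textit{o-}creation, $n_o$ \textit{o-}annihilation, $n_v$ \textit{v-}creation and $n_v$ \textit{v-}annihilation operators has a possibly non-zero Fermi-vacuum expectation value exactly when two subsequences are Dyck words. The first is the \textit{o-}subsequence, where an \textit{o-}creation operator reads as an opening bracket and an \textit{o-}annihilation operator as a closing bracket. The second is the \textit{v-}subsequence, where a \textit{v-}annihilation operator reads as opening and a \textit{v-}creation operator as closing. This is consistent with \eqref{eq:contraction_occ} and \eqref{eq:contraction_vir}: any non-vanishing full contraction pairs \textit{o-}operators with \textit{o-}operators and \textit{v-}operators with \textit{v-}operators, in the orders imposed there. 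We therefore take this characterization from Part I as the starting point and count the chains meeting it.

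The counting proceeds as a product of independent choices. First, we choose which $2n_o$ of the $2n_o+2n_v$ positions in the chain hold \textit{o-}operators; the remaining positions hold \textit{v-}operators. This gives $\binom{2n_o+2n_v}{2n_o}$ choices. Second, we choose the creation/annihilation pattern on the \textit{o-}positions, read left to right. It must be a Dyck word with $n_o$ pairs, so Proposition \ref{prop:catalan-dyck} gives $C_{n_o}$ choices. Third, we choose the pattern on the \textit{v-}positions in the same way, giving $C_{n_v}$ choices. Finally, as in the proof of Proposition \ref{prop:numbering_physic}, we assign the labelled operators to the slots of each of the four types. The $n_o$ \textit{o-}creation operators can be permuted among their slots, and likewise the \textit{o-}annihilation, \textit{v-}creation and \textit{v-}annihilation operators. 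This contributes $(n_o!)^2(n_v!)^2$, since permuting operators within a type leaves the translation unchanged.

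To conclude, we check that this construction is a bijection onto the set of admissible chains. Any admissible chain determines its \textit{o-}positions, its two Dyck patterns and its labelled assignment uniquely. Conversely, every tuple of choices yields a distinct admissible chain. Multiplying the four factors gives the stated formula.

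The main obstacle is the first step: stating precisely the Fermi-vacuum criterion from Part I and justifying that the two Dyck conditions are independent of how the \textit{o-} and \textit{v-}operators are interleaved. Only this independence makes the interleaving factor $\binom{2n_o+2n_v}{2n_o}$ a free multiplicative choice. I would argue it from the Wick-type factorization used in Proposition \ref{prop:read_diag_full_contracted}. Reordering the chain into $\hat{C}_S$, with the \textit{o-}block and \textit{v-}block separated, changes at most the sign and never the nullity. Hence the nullity condition depends only on the two subsequences.
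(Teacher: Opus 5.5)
Your proposal is correct and follows essentially the same route as the paper. The paper starts from the Part I criterion that the split-$\mathcal{L}_2$ translation $S_o\&S_v$ must be a pair of Dyck words, obtains $C_{n_o}(n_o!)^2$ and $C_{n_v}(n_v!)^2$ by repeating the argument of Proposition \ref{prop:numbering_physic} on each subsequence, and multiplies by the interleaving factor $\binom{2n_o+2n_v}{2n_o}$. The independence you flag as the main obstacle is already built into the split-$\mathcal{L}_2$ translation, since $S_o$ and $S_v$ are read off the \textit{o-} and \textit{v-}subsequences separately, so your $\hat{C}_S$ reordering argument is valid but not needed.
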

\begin{proof}
Let $\hat{C}$ a chain composed of $n_o$ \textit{o-}creation, $n_o$ \textit{o-}annihilation, $n_v$ \textit{v-}creation, and $n_v$ \textit{v-}annihilation operators. Let $S_o\&S_v$ be the split-$\mathcal{L}_2$ translation of $\hat{C}$.
We know from Part I that if $S_o\&S_v$ is not a pair of Dyck words the expectation value $\braket{\hat{C}}_{\Psi_0}$ is equal to zero. 
Following the proof of {\normalfont Proposition \ref{prop:numbering_physic}}, the \textit{o-} (respectively, \textit{v-})operators can be arranged in $C_{n_o}(n_o!)^2$ (respectively, $C_{n_v}(n_v!)^2)$ ways in order for $S_o$ (respectively, $S_v$) to be a Dyck word.
Among the $2n_o+2n_v$ operators in the $\hat{C}$ chain, $2n_o$ are \textit{o-}operators, the $2n_v$ remaining are \textit{v-}operators.
For a given sequence of \textit{o-}operators and \textit{v-}operators there are $\binom{2n_o+2n_v}{2n_o}$ ways to arrange them in order to form a $\hat{C}$ chain with $\braket{\hat{C}}_{\Psi_0}$ possibly being different from zero. 
\end{proof}

\begin{proposition}\label{prop:numbering_fermi_ex_dex}
There are $C_n(n!)^2$ ways to arrange $n$ excitation and $n$ deexcitation operators in order to get a $2n$--long chain with a possibly non-zero expectation value relatively to the Fermi vacuum. 
\end{proposition}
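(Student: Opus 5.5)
The plan is to mirror the proof of Proposition \ref{prop:numbering_physic}, replacing the $\mathcal{L}_1$ translation by the Fermi-$\mathcal{L}_{1,1}$ translation. In that translation a deexcitation operator $\hat{D}_a^i=\hat{o}^\dag_i\hat{v}_a$ becomes an opening bracket and an excitation operator $\hat{E}_j^b=\hat{v}^\dag_b\hat{o}_j$ becomes a closing bracket. From Part I (the Fermi-$\mathcal{L}_{1,1}$ nullity criterion, cf. Proposition IV.5 of Part I as used in Example \ref{ex:full_connection_D_adaga_E}) we may take as known that if the translation of a chain of (de)excitation operators is not a Dyck word, its expectation value relatively to the Fermi vacuum vanishes. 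Therefore only chains whose translation is a Dyck word with $n$ opening and $n$ closing brackets can have a possibly non-zero expectation value.

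Fix $n$ deexcitation operators $\hat{D}_{a_1}^{i_1},\ldots,\hat{D}_{a_n}^{i_n}$ and $n$ excitation operators $\hat{E}_{j_1}^{b_1},\ldots,\hat{E}_{j_n}^{b_n}$, each treated as an indivisible pair. For each Dyck word $S$ I will build the unique chain $\hat{C}$ that satisfies two conditions. First, its $m^\text{th}$ pair is a deexcitation (respectively, excitation) operator exactly when the $m^\text{th}$ bracket of $S$ is opening (respectively, closing). Second, its $k^\text{th}$ deexcitation (respectively, excitation) operator is $\hat{D}_{a_k}^{i_k}$ (respectively, $\hat{E}_{j_k}^{b_k}$). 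Its translation is $S$.

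Any chain with translation $S$ is then obtained from this $\hat{C}$ by permuting the deexcitation operators among themselves and the excitation operators among themselves. Since the translation records only the nature of each pair, these permutations do not change it. Conversely, each arrangement with translation $S$ arises from exactly one pair of permutations, which gives $(n!)^2$ chains per word. Summing over the $C_n$ Dyck words of Proposition \ref{prop:catalan-dyck} yields $C_n(n!)^2$.

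The main point needing care is the step linking the translation to nullity. One must make clear that the (de)excitation operators are arranged as units and not split into their constituent \textit{o-} and \textit{v-}operators. Otherwise the count would be that of Proposition \ref{prop:numbering_fermi}. One should also note that no extra constraint arises beyond the Dyck condition, because the count concerns \emph{possibly} non-zero chains, and the Dyck criterion is exactly the criterion of Part I. I would also check that distinct pairs of permutations give distinct chains, which holds because the operators are labelled by distinct index tuples treated as parameters.
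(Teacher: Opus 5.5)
Your proposal is correct and follows the paper's own argument. The paper's proof invokes the Part I criterion that a chain of (de)excitation operators whose Fermi-$\mathcal{L}_1$ translation is not a Dyck word has zero Fermi-vacuum expectation value, then says to repeat the proof of Proposition \ref{prop:numbering_physic}; that is exactly the $C_n\cdot(n!)^2$ count you spell out, and on chains made only of (de)excitation operators the Fermi-$\mathcal{L}_{1,1}$ translation you use coincides with the Fermi-$\mathcal{L}_1$ one the paper cites.
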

\begin{proof}
If a chain of $n$ excitation and $n$ deexcitation operators does not Fermi-$\mathcal{L}_1$ translate into a Dyck word its expectation value relatively to the Fermi vacuum is zero. This result can be obtained by following the proof of {\normalfont Proposition \ref{prop:numbering_physic}}.
\end{proof}

\begin{example}\label{ex:numbering}
Let $(i,j)$ be a couple of integers, both belonging to $\llbracket 1,N\rrbracket$. Let $(a,b)$ be a couple of integers, both belonging to $\llbracket N+1,L\rrbracket$. Consider the four following operators: $\hat{o}_i^\dag$, $\hat{o}_j$, $\hat{v}_a$, and $\hat{v}_b^\dag$. According to {\normalfont Proposition \ref{prop:numbering_fermi}} there are six ways to arrange these four operators in a chain with a possibly non-zero expectation value relatively to the Fermi vacuum. Indeed, the six chains are
\begin{equation*}
\hat{o}_i^\dag\hat{o}_j\hat{v}_a\hat{v}_b^\dag, \quad
\hat{o}_i^\dag\hat{v}_a\hat{o}_j\hat{v}_b^\dag, \quad
\hat{v}_a\hat{o}_i^\dag\hat{o}_j\hat{v}_b^\dag, \quad
\hat{v}_a\hat{v}_b^\dag\hat{o}_i^\dag\hat{o}_j, \quad
\hat{v}_a\hat{o}_i^\dag\hat{v}_b^\dag\hat{o}_j, \quad
\hat{o}_i^\dag\hat{v}_a\hat{v}_b^\dag\hat{o}_j,
\end{equation*}
their expectation value relatively to the Fermi vacuum is equal to $\pm\delta_{i,j}\delta_{a,b}$. The four individual operators can be used to define exactly one deexcitation operator $\hat{D}_a^i=\hat{o}_i^\dag\hat{v}_a$ and exactly one excitation operator $\hat{E}_j^b=\hat{v}_b^\dag\hat{o}_j$. According to {\normalfont Proposition \ref{prop:numbering_fermi_ex_dex}} there exists only one way to arrange these two operators in order to obtain a chain with a possibly non-zero expectation value relatively to the Fermi vacuum: $\hat{D}_a^i\hat{E}_j^b$.
\end{example}

\subsection{Corollary of Wick's theorem}
The corollary of Wick's theorem — see Theorem \ref{theo:Wick} — states that any expectation value of a chain of second quantization operators can be decomposed into a sum of fully contracted chains expectation values. 

\begin{definition}
Let $\hat{C}$ be a chain of excitation and deexcitation operators. Then, $\mathcal{W}(\hat{C})$ denotes the number of possible ``well'' full contractions of $\hat{C}$, i.e., a full contraction about which we cannot say, using solely \eqref{eq:contraction_occ} to \eqref{eq:contraction_in_chain-vir}, that its expectation value is equal to zero.
\end{definition}

\begin{remark}
From what has been learned in {\normalfont Part  I} we readily deduce that if a $\hat{C}$ chain's Fermi-$\mathcal{L}_1$ translation is not a Dyck word, then $\mathcal{W}(\hat{C})$ is equal to zero.
\end{remark}

\begin{proposition}
Let $\hat{C}$ be a chain composed of $n$ excitation operators and $n$ deexcitation operators whose Fermi-$\mathcal{L}_1$ translation, $S$, is a Dyck word. Let $(d_k^\uparrow)_{1\leq k\leq n}$ be the list of opening depths of $S$. Then, the $\mathcal{W}(\hat{C})$ number is equal to the square of the product of opening depths, i.e.,
\begin{equation}
\mathcal{W}(\hat{C}) = \left( \prod_{k=1}^n d_k^\uparrow\right)^{\!2}.
\end{equation}
\end{proposition}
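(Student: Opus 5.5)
The plan is to reduce the count to the upper half-plane via Rule \ref{rule:symmetry} and Proposition \ref{prop:symmetry}, and then count the upper-half-plane connection patterns directly. Since $\hat{C}$ contains only (de)excitation operators, a well full contraction is exactly a well-formed fully connected diagram. By Proposition \ref{prop:nullity_one_contraction}, such a diagram is determined by two independent choices. One is a set of connections in the upper half-plane between \textit{o-}operators satisfying P2. The other is a set of connections in the lower half-plane between \textit{v-}operators satisfying P2. By Proposition \ref{prop:symmetry} the two half-planes admit the same number $K$ of admissible patterns, so $\mathcal{W}(\hat{C})=K^2$. It therefore suffices to prove $K=\prod_{k=1}^n d_k^\uparrow$.

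To count $K$, I read the upper half-plane as the Dyck word $S$ itself. Each opening bracket (deexcitation) contributes one \textit{o-}creation extremity pointing right. Each closing bracket (excitation) contributes one \textit{o-}annihilation extremity pointing left. By P2, an admissible pattern is exactly a perfect matching that pairs each closing bracket with some opening bracket located to its left. It is not required to be the Dyck-nested partner.

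I will count these matchings by scanning $S$ from left to right and maintaining the set of currently unmatched opening brackets. At a closing bracket at position $m$, the number of available partners is the number of opening brackets before $m$ minus the number of closing brackets before $m$. This is the height of the Dyck path just before position $m$, and it does not depend on earlier choices. Every choice keeps the count of unmatched openings nonnegative, so every sequence of choices completes to a full matching. Hence $K=\prod_{\text{closing } m} h_m$, where $h_m$ is the height just before the $m$-th closing bracket.

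The step I expect to be the main obstacle is identifying this product with $\prod_k d_k^\uparrow$, because it depends on the exact definition of ``opening depth'' from Part I. If $d_k^\uparrow$ is the depth (height) reached by the $k$-th opening bracket counted at the closing bracket that matches it, the identification follows directly. If instead the opening depths are attached to the opening brackets themselves, I would first try the mirror-image count. In that version I scan from right to left and match each opening bracket with a closing bracket to its right, so the number of choices at the $k$-th opening bracket is the height just after it, i.e.\ its depth $d_k^\uparrow$. This gives $K=\prod_k d_k^\uparrow$ directly. Both scans count the same set of matchings, which also shows the two products agree. I would close by checking the formula against Example \ref{ex:full_connection_DE}: the word $\bm{(\,(\;)\,)}$ has depths $1,2$, so the formula gives $(1\cdot 2)^2=4$, which matches the four diagrams there.
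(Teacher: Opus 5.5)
Your proof is correct and follows essentially the same route as the paper. The paper also obtains the square from one count, by noting that the two half-planes (phrased there as the split-$\mathcal{L}_2$ words $S_o$ and $S_v$) have the same bracket sequence as $S$. For the count itself, the paper cites Proposition III.5 of Part I (number of full expulsions of a Dyck word equals the product of its opening depths). Your right-to-left scan is a self-contained proof of that cited fact. Since the paper uses ``opening depth'' for the depth reached just after an opening bracket (e.g.\ the list $(1)$ for $\bm{(\;)}$), your second scan matches the definition directly.
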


\begin{proof}
From formulas \eqref{eq:contraction_occ} to \eqref{eq:contraction_in_chain-vir}, we know that in order to get a $\braket{\hat{C}}_{\Psi_0}$ value different from zero we have to contract each \textit{o-}creation (respectively \textit{v-}annihilation) with an \textit{o-}annihilation (respectively \textit{v-}creation) on its right. 
Let $S_o\& S_v$ be the $\mathcal{L}_2$ translation of $\hat{C}$. The $\hat{C}$ chain being composed solely of excitation and deexcitation operators, both $S_o$ and $S_v$ have the same sequence of opening and closing brackets as $S$. Consequently, their lists of opening depths of $S$, $S_o$, and $S_v$ are the same. Contracting two operators in order to get a possible non-zero expectation value is equivalent to expulsing the corresponding opening and closing brackets in $S_o$ or $S_v$. 
A full contraction of the $\hat{C}$ chain corresponds to one opening-closing bracket pairing scheme in $S_o$ and $S_v$ in view of fully expulsing $S_o$ and $S_v$. According to Proposition III.5  in Part I, the number of ways to fully expulse $S_o$ is equal to the number of ways to fully expulse $S_v$, which is equal to the product of opening depths of $S$.
 Therefore the number of ways to ``well'' fully contract the $\hat{C}$ chain of operators is equal to the square of the product of opening depths of $S$.
\end{proof}

\begin{remark}\label{rem:N_not_number_of_electrons}
In the following proposition, $N$ is a variable number that should not be confused with the number of electrons.
\end{remark}

\begin{proposition}\label{prop:numbering_wick}
Let $\hat{C}$ be a chain composed of $n$ excitation operators and $n$ deexcitation operators. 
Consider a new chain, $\hat{C}^N_{K}$, constructed by introducing $N$ arbitrary creation operators followed by $N$ arbitrary annihilation operators after — i.e., on the right of — the $K^\text{th}$ operator in $\hat{C}$ read from the left to the right. Then,

$\;$

\noindent{\normalfont (i)} If the Fermi-$\mathcal{L}_1$ translation of $\hat{C}$ is not a Dyck word, then $\mathcal{W}(\hat{C}^N_{K})$ is equal to zero. 

\noindent{\normalfont (ii)} If the Fermi-$\mathcal{L}_1$ translation of $\hat{C}$ is a Dyck word, then $\mathcal{W}(\hat{C}^N_{K})$ is equal to
\begin{equation}
N!\mathcal{M}^N_{d_K}\mathcal{W}(\hat{C}),
\end{equation}
where $\mathcal{M}_{d_K}^N$ is a multiplicative factor that depends on both the number of arbitrary operators and their insertion depth:
\begin{equation}\label{eq:multiplicateur}
\mathcal{M}_{d_K}^N = \sum_{\ell=0}^{min(N,{d_K})}\binom{N}{\ell}^2\binom{{d_K}+N-\ell}{N},
\end{equation}
where $d_K$ is the depth reached after the $K^\mathrm{th}$ position in the Fermi-$\mathcal{L}_1$ translation of $\hat{C}$.
\end{proposition}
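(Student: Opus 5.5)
The plan is to reduce $\mathcal{W}(\hat{C}^N_K)$ to counting bracket-expulsion schemes in the pair $S_o\&S_v$, exactly as in the proof of the preceding proposition. The new ingredient is that each arbitrary operator, pointing at a spin-orbital of undetermined occupation, can play either an \textit{o-} or a \textit{v-}role. I will use three facts.

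\begin{itemize}
\item[(a)] In a ``well'' full contraction, every \textit{o-}creation is paired with an \textit{o-}annihilation on its right. Likewise, every \textit{v-}annihilation is paired with a \textit{v-}creation on its right; this follows from \eqref{eq:contraction_occ}--\eqref{eq:contraction_in_chain-vir}.
\item[(b)] An arbitrary creation operator is therefore either an opening bracket of $S_o$ (\textit{o-}creation) or a closing bracket of $S_v$ (\textit{v-}creation). An arbitrary annihilation operator is either a closing bracket of $S_o$ or an opening bracket of $S_v$.
\item[(c)] By the counting behind Proposition III.5 of Part I, the number of ways to fully expulse a word equals the product, over its closing brackets, of the number of unmatched opening brackets present when that closing bracket is read, i.e.\ the current depth.
\end{itemize}

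Part (i) is immediate. If the Fermi-$\mathcal{L}_1$ translation of $\hat{C}$ is not a Dyck word, some prefix of $S_o$ or of $S_v$ built from the (de)excitation operators alone already violates the Dyck condition. I will check that inserting the arbitrary block cannot repair this. The creations come first, so any \textit{v-}closings they contribute only lower the $S_v$ depth, and the block is net-neutral in each word, as shown below. Hence a prefix of $S_o$ or $S_v$ still goes negative, or a word ends unbalanced, and no well full contraction exists.

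For (ii), fix a role assignment. Suppose $m$ of the $N$ creations act as \textit{o-}creations and $p$ of the $N$ annihilations act as \textit{o-}annihilations. After the block, the $S_o$ depth is $d_K+m-p$ and the $S_v$ depth is $d_K-(N-m)+(N-p)$. The remainder of $\hat{C}$ must bring both words to zero, and its bracket pattern is identical in $S_o$ and $S_v$. Both words must therefore re-enter the remainder at depth $d_K$, which forces $m=p$. Write $\ell=N-m$ for the number of creations acting as \textit{v-}closings; the Dyck condition forces $\ell\le d_K$. The positions of these roles can be chosen in $\binom{N}{\ell}$ ways inside the creation block and, independently, in $\binom{N}{\ell}$ ways inside the annihilation block.

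Now apply the product-of-depths count (c) to each word.

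\begin{itemize}
\item In $S_v$, the $\ell$ closings occur while the depth decreases from $d_K$, giving the factor $d_K!/(d_K-\ell)!$. The subsequent \textit{v-}openings restore depth $d_K$, whatever their interleaving with \textit{o-}roles.
\item In $S_o$, the $N-\ell$ openings raise the depth to $d_K+N-\ell$. The $N-\ell$ closings then contribute $(d_K+N-\ell)!/d_K!$, again independently of interleaving. These closings may pair with arbitrary creations or with earlier \textit{o-}creations of $\hat{C}$, and the depth count handles both.
\item Every closing bracket belonging to $\hat{C}$ itself, before or after the block, sees exactly the same depth as in $S$. Its factors therefore reproduce $\mathcal{W}(\hat{C})$ by the preceding proposition.
\end{itemize}

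Multiplying these factors and summing over $\ell$ gives
\[
\mathcal{W}(\hat{C}^N_K)=\mathcal{W}(\hat{C})\sum_{\ell=0}^{\min(N,d_K)}\binom{N}{\ell}^2\frac{(d_K+N-\ell)!}{(d_K-\ell)!}.
\]
Since $\frac{(d_K+N-\ell)!}{(d_K-\ell)!}=N!\binom{d_K+N-\ell}{N}$, this is $N!\,\mathcal{M}^N_{d_K}\mathcal{W}(\hat{C})$, as claimed.

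The main obstacle is making rigorous that the product-of-depths count for $S_o$ and $S_v$ is independent of how the \textit{o-} and \textit{v-}roles are interleaved within each block. It must also be checked that no admissible contraction is missed. In particular, a \textit{v-}annihilation inside the annihilation block cannot pair with an arbitrary creation, which sits to its left, so only the $\binom{N}{\ell}^2$ choice of role positions remains. I would settle this by noting that, within the creation block, only \textit{v-}closings consult the $S_v$ depth and only \textit{o-}openings change the $S_o$ depth; the symmetric statement holds in the annihilation block. The two words evolve independently, so each product depends only on the counts of each role, not on their interleaving.
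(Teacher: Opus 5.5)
Your proposal is correct and follows essentially the same route as the paper. The paper also fixes the occupation roles of the arbitrary operators, with equal numbers $\ell=N_v$ of \emph{v-}creations and \emph{v-}annihilations, restricts to $\ell\le d_K$, applies the product-of-depths count of Proposition III.5 of Part I to $S'_o$ and $S'_v$ to obtain $\mathcal{W}(\hat{C})\,(d_K+N-\ell)!/(d_K-\ell)!$, and sums with weights $\binom{N}{\ell}^2$. The only difference is cosmetic: you take the product of depths over closing brackets (the depth just before each one), whereas the paper uses opening depths. For a Dyck word these two products coincide, and both give the same new factors $d_K-\ell+1,\ldots,d_K+N-\ell$.
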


\begin{proof}
The combinatorial details of the proof are provided in Appendix.
\end{proof}

\begin{example}
Let $(i,j)$ be a couple of integers, both belonging to $\llbracket 1,N\rrbracket$. 
Let $(a,b)$ be a couple of integers, both belonging to $\llbracket N+1,L\rrbracket$.
Let $\hat{a}^\dag_{r}$, and $\hat{a}_{s}$ be two second quantization operators, each pointing at a spin-orbital with non-definite occupation number relatively to the Fermi vacuum. 

Consider the $\hat{D}_{a}^{i}\hat{E}_{j}^{b}$ chain, hereafter named $\hat{C}$. Its translation is $\bm{(\;)}$. The corresponding list of depths is $(0,1,0)$, and the corresponding list of opening depth $(1)$.

We introduce the $\hat{a}^\dag_{r}\hat{a}_{s}$ pair of operators after the $\hat{D}_{a}^{i}$ deexcitation operator. The depth reached after this operator is equal to one, the resulting chain is named $\hat{C}_1^1 = \hat{D}_{a}^{i}\hat{a}^\dag_{r}\hat{a}_{s}\hat{E}_{j}^{b}$.

According to {\normalfont Proposition \ref{prop:numbering_wick}}, $\mathcal{W}(\hat{C}_1^1)$ is equal to three. Indeed, there exist three well-formed full contractions of $\hat{C}^1_1$. They were already represented in {\normalfont Example \ref{ex:full_connection_D_adaga_E}}.
\end{example}

\begin{remark}
For $N$ going from $1$ to $10$, formula {\normalfont(\ref{eq:multiplicateur})} corresponds to the crystal ball sequence for the $A_N$ root lattice {\normalfont\cite{conway_lowdimensional_1997}}. For small values of $N$, simplified expressions for $\mathcal{M}^N_d$ can be derived:
\begin{equation*}
\mathcal{M}^1_d = 2d+1,		\quad      % = (d+1)^2 - d^2,\quad
\mathcal{M}^2_d = 3d(d+1)+1,	\quad% = (d+1)^3 - d^3,\quad
%\mathcal{M}^3_d = \dfrac{(2d+1)(5d^2+5d+3)}{3},\quad
\mathcal{M}^3_d = \dfrac{d(10d^2+15d+11)}{3}+1.\quad
%\mathcal{M}^4_d = \dfrac{5d(d+1)(7d^2+7d+10)}{12}+1.
\end{equation*}
\end{remark}

\section{Commutator simplification using a diagrammatic representation}

In the presentation done in this section we limit ourselves to (nested) commutators involving one creation-annihilation operator pair per slot, with at least one pair being an excitation or a deexcitation operator.

\subsection{Diagrammatic representation of a commutator}

\begin{definition}[Diagrammatic representation of a commutator]\label{def:commutator_representation}
Let $\hat{C}_1$ and $\hat{C}_2$ be two pairs of creation-annihilation operators. The {\normalfont diagrammatic representation} of the $[\hat{C}_1,\hat{C}_2]$ commutator  is the $\mathcal{L}_{1,1}$ diagrammatic representation of $\hat{C}_1$ and $\hat{C}_2$, in this order, separated by the {\normalfont $\copyright$} symbol and placed between two square brackets.
The diagrammatic representation of $\hat{C}_1$ is named the {\normalfont left part} of the diagram and the diagrammatic representation of $\hat{C}_2$ is named the {\normalfont right part} of the diagram.
\end{definition}

\begin{example}
Let $(i,j)$ be a couple of integers, both belonging to $\llbracket 1,N\rrbracket$. 
Let $(a,b)$ be a couple of integers, both belonging to $\llbracket N+1,L\rrbracket$. 
The diagrammatic representation of the $[\hat{D}_a^i,\hat{E}_j^b]$ commutator is depicted in {\normalfont Figure \ref{fig:commu_E-D}}.
\begin{figure}[h!]
\begin{center}
\begin{tikzpicture}[scale=0.7]
\draw (-.5,1.75) -- ++(-.25,0) -- ++(0,-3.5) -- ++(.25,0);
\pdeex{0}{i}{a}\pexci{3}{j}{b} 
\draw (1.5,0) node {$\copyright$};
\draw (3.5,1.75) -- ++(.25,0) -- ++(0,-3.5) -- ++(-.25,0);
\end{tikzpicture}
\end{center}
\caption{Diagrammatic representation of the $[\hat{D}_a^i,\hat{E}_j^b]$ commutator.}
\label{fig:commu_E-D}
\end{figure}
\end{example}

\begin{definition}[Diagrammatic representation of nested commutator]\label{def:nested_commutator_representation}
The {\normalfont diagrammatic representation of the commutator} is obtained by successively constructing the diagrammatic representation from the innermost commutator to the outermost commutator following {\normalfont Definition \ref{def:commutator_representation}}.
\end{definition}

\begin{example}
Let $(i,j)$ be a couple of integers, both belonging to $\llbracket 1,N\rrbracket$. 
Let $(a,b)$ be a couple of integers, both belonging to $\llbracket N+1,L\rrbracket$. 
Let $\hat{a}^\dag_r$ and $\hat{a}_s$ be two second quantization operators, each pointing at a spin-orbital with non-definite occupation number relatively to the Fermi vacuum.
The diagrammatic representation of the two nested commutators $[ [ \hat{a}^\dag_r\hat{a}_s, \hat{E}_i^a ], \hat{E}_j^b ]$ and $[ [ \hat{o}^\dag_i\hat{v}_a, \hat{a}^\dag_s\hat{a}_r ], \hat{v}^\dag_b\hat{o}_j]$ are illustrated in Figure \ref{fig:double_commu}.
\begin{figure}[h!]
\begin{center}
\begin{tikzpicture}[scale=0.7]
\draw (-2,1.75) -- ++(-.25,0) -- ++(0,-3.5) -- ++(.25,0);
\draw (-1.5,1.75) -- ++(-.25,0) -- ++(0,-3.5) -- ++(.25,0);
\draw (2,0) node {$\copyright$};
\padaga{0}{}{r}{}{s}\pexci{3}{i}{a} 
\draw (3.5,1.75) -- ++(.25,0) -- ++(0,-3.5) -- ++(-.25,0);
\tikzset{shift={(3,0)}}
\draw (1.5,0) node {$\copyright$};
\tikzset{shift={(2,0)}}
\pexci{1}{j}{b}
\draw (1.5,1.75) -- ++(.25,0) -- ++(0,-3.5) -- ++(-.25,0);

\tikzset{shift={(6,0)}}
\draw (0,1.75) -- ++(-.25,0) -- ++(0,-3.5) -- ++(.25,0);
\draw (-.5,1.75) -- ++(-.25,0) -- ++(0,-3.5) -- ++(.25,0);
\draw (1,0) node {$\copyright$};
\pdeex{0}{i}{a}\padaga{3}{}{r}{}{s} 
\draw (4.5,1.75) -- ++(.25,0) -- ++(0,-3.5) -- ++(-.25,0);
\tikzset{shift={(4,0)}}
\draw (1.5,0) node {$\copyright$};
\tikzset{shift={(2,0)}}
\pexci{1}{j}{b}
\draw (1.5,1.75) -- ++(.25,0) -- ++(0,-3.5) -- ++(-.25,0);
\end{tikzpicture}
\end{center}
\caption{Diagrammatic representation of $[ [ \hat{a}^\dag_r\hat{a}_s, \hat{E}_i^a ], \hat{E}_j^b ]$ (left), and of $[ [ \hat{o}^\dag_i\hat{v}_a, \hat{a}^\dag_s\hat{a}_r ], \hat{v}^\dag_b\hat{o}_j]$ (right).}
\label{fig:double_commu}
\end{figure}
\end{example}

\begin{remark}
As we see comparing figures \ref{fig:commu_E-D} and \ref{fig:double_commu}, whilst for simple commutators the ``{\normalfont $\copyright\!$}'' and ``{\normalfont $[\;]\!$}'' notations are strictly redundant, this is not the case for nested commutators.
However, the outermost square brackets in a diagram can always be ignored without loss of information.
\end{remark}

\newpage\subsection{Diagrammatic simplification of a simple commutator}

A commutator involving a pair of creation-annihilation operators and a (de)excitation operator can be simplified through a diagrammatic representation.

\begin{ourrule}[Conditioning of a commutator involving at least one (de)excitation operator]\label{rule:preconditioning}
Consider $[\hat{C}_1, \hat{C}_2]$ a commutator involving at least one (de)excitation operator. We have three possible cases: \\

Case i. If $\hat{C_1}$ is the only excitation operator and/or if $\hat{C_2}$ is the only deexcitation operator the {\normalfont conditioned commutator} associated with $[\hat{C}_1, \hat{C}_2]$ is $[\hat{C}_2, \hat{C}_1]$.\\

Case ii. If $\hat{C_2}$ is the only excitation operator and/or if $\hat{C_1}$ is the only deexcitation operator the {\normalfont conditioned commutator} associated with $[\hat{C}_1, \hat{C}_2]$ is $[\hat{C}_1, \hat{C}_2]$.\\

Case iii. If $\hat{C}_1$ and $\hat{C}_2$ are both excitation operators, or if $\hat{C}_1$ and $\hat{C}_2$ are both deexcitation operators, the {\normalfont conditioned commutator} associated with $[\hat{C}_1, \hat{C}_2]$ is $[\hat{C}_1, \hat{C}_2]$.
\end{ourrule}

\noindent A commutator involving at least an excitation or deexcitation operator can be diagrammatically developed using its diagrammatic representation.

\begin{ourrule}[Diagrammatic development of a conditioned commutator involving at least one (de)excitation operator]\label{rule:develop_commu}
The diagrammatic representation of the conditioned commutator involving at least one (de)excitation operator is constructed following {\normalfont Definition \ref{def:commutator_representation}}. All diagrams obtained by linking with a single connection the left and right part of the diagrammatic representation of the commutator are built. Each connection must respect conditions \textit{P1}, \textit{P2}, and \textit{P3} reported in {\normalfont Proposition \ref{prop:nullity_one_contraction}}. 

If a connection involves the upper extremity of the diagrammatic representation of an excitation (respectively, deexcitation) operator, the corresponding bracket is pulled to the left (respectively, right) in order to put the lower extremity on the most left position of the diagram. Diagrams obtained from this procedure are separated by the ``$\bigoplus$'' symbol.
\end{ourrule}

\noindent The three following examples illustrate the application of Rule \ref{rule:develop_commu}.

\begin{example}\label{ex:commu_A-E}
Let $(i,a)$ be a couple of integers belonging to $\llbracket 1,N\rrbracket \times \llbracket N+1,L\rrbracket$.
Let $\hat{a}^\dag_r$ and $\hat{a}_s$ be two second quantization operators, each pointing at a spin-orbital with non-definite occupation number relatively to the Fermi vacuum. Consider the $[\hat{a}^\dag_r\hat{a}_s,\hat{E}_i^a]$ commutator.
It involves an excitation operator and a pair of arbitrary creation-annihilation operators. The excitation operator being already in the right position in the commutator, the conditioned commutator corresponding to $[\hat{a}^\dag_r\hat{a}_s,\hat{E}_i^a]$ is $[\hat{a}^\dag_r\hat{a}_s,\hat{E}_i^a]$ itself.
After building the diagrammatic representation of the commutator, two possible connections exist, both depicted in {\normalfont Figure \ref{fig:commu_A-E_dev} a)}. The first connection implies extremities annotated with operators $\hat{a}_s$ and $\hat{v}_a^\dag$. The second one implies extremities annotated with operators $\hat{a}_r^\dag$ and $\hat{o}_i$. This connection involves the upper extremity of the bracket corresponding to the excitation operators. This bracket has to be pulled on the left of the diagram, in order to set the lower extremity on the most left position in the diagram.
The diagrammatic development of the $[\hat{a}^\dag_r\hat{a}_s,\hat{E}_i^a]$ commutator is illustrated in {\normalfont Figure \ref{fig:commu_A-E_dev} b)}.
\begin{figure}[h!]
\centering

\begin{tikzpicture}[scale=0.7]
\draw (-2,0) node {a)};
\padaga{0}{}{r}{}{s}\tcr{\pexci{3}{i}{a}}
\draw (2,0) node {$\copyright$};

\tikzset{shift={(2,2.5)}}

\draw[->,thick] (1.5,-1) -- (3,0) --node[above,black] {Connection of} node[below] {$\hat{a}_s$ and $\hat{v}_a^\dag$} ++(4,0);
\tikzset{shift={(7.5,0)}}
\adaga{none}{r}{deex}{s}\tcr{\exciend{i}{a}}

\tikzset{shift={(-10,-5)}}

\draw[->,thick] (1.5,1) -- (3,0) --node[above,black] {Connection of} node[below] {$\hat{a}_r^\dag$ and $\hat{o}_i$} ++(4,0);
\tikzset{shift={(9.5,0)}}
\draw[line width=1pt] (0,0) node[above] { $\hat{a}_s$} 
	-- (-1.5,0) node[above] { $\hat{a}_r^\dag$} arc(270:90:.5)
	-- (.3,1); 
\draw[line width=1pt,red] (.3,1)node[above] { $\hat{o}_i$} arc(90:0:.5)
	-- (.8,-.5) arc(0:-90:.5)node[below] { $\hat{v}^\dag_a$}; 
\draw[->,thick] (1.7,0) --node[above] {pulling} ++(2.5,0);
\tikzset{shift={(4.7,0)}}
\tcr{\exci{i}{a}}\tikzset{shift={(1,0)}}\adaga{deex}{r}{}{s}
\end{tikzpicture}

\begin{tikzpicture}[scale=0.7]
\draw (-2,0) node {b)};
\padaga{0}{}{r}{}{s}\pexci{3}{i}{a} 
\draw (2,0) node {$\copyright$};
\draw[->,thick] (4,0) --node[above] {Development} ++(3,0);
\tikzset{shift={(8,0)}}
\adaga{none}{r}{deex}{s}\exciend{i}{a} 
\draw (1.5,0) node {$\bigoplus$};
\tikzset{shift={(2,0)}}
\exci{i}{a}\adaga{deex}{r}{a}{s}
\end{tikzpicture}
\caption{a) The possible connections between the two parts of the diagrammatic representation of the commutator as described in {\normalfont Rule \ref{rule:develop_commu}}; b) Diagrammatic development of the $[\hat{a}^\dag_r\hat{a}_s,\hat{E}_i^a]$ commutator.}
\label{fig:commu_A-E_dev}
\end{figure}
\end{example}

\begin{example}\label{ex:commu_D-A}
Let $(i,a)$ be a couple of integers belonging to $\llbracket 1,N\rrbracket \times \llbracket N+1,L\rrbracket$.
Let $\hat{a}^\dag_r$ and $\hat{a}_s$ be two second quantization operators, each pointing at a spin-orbital with non-definite occupation number relatively to the Fermi vacuum. Consider the $[\hat{a}^\dag_r\hat{a}_s,\hat{D}_a^i]$ commutator.
It involves a deexcitation operator and a pair of arbitrary creation-annihilation operators. The deexcitation operator is in the right position in the commutator, hence the conditioned commutator associated with $[\hat{a}^\dag_r\hat{a}_s,\hat{D}_a^i]$ is
\begin{equation*}
 \left[\hat{D}_a^i,\hat{a}^\dag_r\hat{a}_s\right].
\end{equation*}
After building the diagrammatic representation of the $[\hat{D}_a^i,\hat{a}^\dag_r\hat{a}_s]$ commutator, two possible connections are possible, both depicted in {\normalfont Figure \ref{fig:commu_D-A_dev} a)}. The first connection implies extremities annotated with operators $\hat{v}_a$ and $\hat{a}_r^\dag$. The second one implies extremities annotated with operators $\hat{o}_i^\dag$ and $\hat{a}_s$. This connection involves the upper extremities of the bracket corresponding to the deexcitation operators. This bracket has to be pulled on the right of the diagram, in order to set the lower extremity on the most right position in the diagram.
The diagrammatic development of the $[\hat{D}_a^i, \hat{a}^\dag_r\hat{a}_s]$ commutator is illustrated in {\normalfont Figure \ref{fig:commu_D-A_dev} b)}.
\begin{figure}[h!]
\centering

\begin{tikzpicture}[scale=0.7]
\draw (-2,0) node {a)};
\tcr{\pdeex{-1}{i}{a}} \padaga{2}{}{r}{}{s}
\draw (0,0) node {$\copyright$};

\tikzset{shift={(2,2.5)}}

\draw[->,thick,white] (3,0) --node[above,black] {Connection of} ++(4,0);
\draw[->,thick] (1.5,-1) -- (3,0) --node[below] {$\hat{v}_a$ and $\hat{a}_r^\dag$} ++(4,0);
\tikzset{shift={(8,0)}}
\tcr{\deexend{a}{i}}\adaga{exci}{r}{}{s}

\tikzset{shift={(-10.5,-5)}}

\draw[->,thick,white] (3,0) --node[above,black] {Connection of} ++(4,0);
\draw[->,thick] (1.5,1) -- (3,0) --node[below] {$\hat{o}_i^\dag$ and $\hat{a}_s$} ++(4,0);
\tikzset{shift={(8.5,0)}}
\draw[line width=1pt] (0,0) node[above] { $\hat{a}^\dag_r$} 
	-- (1.5,0) node[above] {$\hat{a}_s$} arc(-90:90:.5)
	-- (-.3,1); 
\draw[line width=1pt, red] (-.3,1)node[above] { $\hat{o}_i^\dag$} arc(90:180:.5)
	-- (-.8,-.5) arc(180:270:.5)node[below] { $\hat{v}_a$}; 
\draw[->,thick] (2.5,0) --node[above] {pulling} ++(2.5,0);
\tikzset{shift={(5.5,0)}}
\adaga{}{r}{exci}{s}\tcr{\deex{a}{i}}
\end{tikzpicture}
%\tikzset{shift={(-17,-5)}}
\begin{tikzpicture}[scale=0.7]
\draw (-2,0) node {b)};
\pdeex{-1}{i}{a} \padaga{2}{}{r}{}{s}
\draw (0,0) node {$\copyright$};
\draw[->,thick] (4,0) --node[above] {Development} ++(2.5,0);
\tikzset{shift={(8.5,0)}}
\deexend{a}{i}\adaga{exci}{r}{none}{s}
\draw (1,0) node {$\bigoplus$};
\tikzset{shift={(2,0)}}
\adaga{none}{r}{exci}{s}\deex{a}{i}
\end{tikzpicture}

\caption{a) The possible connections between the two parts of the diagrammatic representation of the commutator as described in {\normalfont Rule \ref{rule:develop_commu}}; b) Diagrammatic development of the $[\hat{D}_a^i, \hat{a}^\dag_r\hat{a}_s]$ commutator.}
\label{fig:commu_D-A_dev}
\end{figure}
\end{example}

\begin{remark}
We can notice that the two diagrams obtained after the development of the commutator represented in {\normalfont Figure \ref{fig:commu_D-A_dev}} are mirror images of the two diagrams obtained from the development of the commutator involving an excitation operator reported in {\normalfont Figure \ref{fig:commu_A-E_dev} b)}.
\end{remark}

\begin{example}\label{ex:commu_E-E}
Let $(i,j)$ be a couple of integers, both belonging to $\llbracket 1,N\rrbracket$. 
Let $(a,b)$ be a couple of integers, both belonging to $\llbracket N+1,L\rrbracket$.
Consider the $[\hat{E}_i^a,\hat{E}_j^b]$ commutator.
It involves two excitation operators. There is already an excitation operator on the right position of the commutator deexcitation operator, the {\normalfont Rule \ref{rule:develop_commu}} can directly be applied. Looking at {\normalfont Figure \ref{fig:commu_E-E_dev}}, we see that no connection can be built between the two parts of the diagram. The graphical development of the commutator is the empty diagram.

\begin{figure}[h!]
\centering
\begin{tikzpicture}[scale=0.7]
\pexci{0}{i}{a}\pexci{2}{j}{b} 
\draw (1,0) node {$\copyright$};
\draw[->,thick] (3,0) --node[above] {Development} ++(2.5,0);
\end{tikzpicture}

\caption{The diagrammatic representation of the $[\hat{E}_i^a,\hat{E}_j^b]$ commutator is the empty diagram.}
\label{fig:commu_E-E_dev}
\end{figure}
\end{example}

\begin{remark}
{\normalfont Example \ref{ex:commu_E-E}} illustrates the fact that two excitation operators always commute. This observation also applies to the commutator of two deexcitation operators.
\end{remark}

\noindent Definition \ref{def:commutator_representation} and Rule \ref{rule:develop_commu} have been designed in order to ``read'' each diagram as the development of the commutator.

\begin{ourrule}[Interpretation of the diagrammatic development of a conditioned commutator]\label{rule:interpretation_diagram_commutator}
Let $D$ be the diagrammatic development of a conditioned commutator. The $D$ diagram is a juxtaposition of ``sub-diagrams''.

A connection in a sub-diagram linking one extremity annotated with an operator pointing at the $\varphi_p$ spin-orbital and another extremity pointing at the $\varphi_q$ spin-orbital reads $\delta_{p,q}$.

Let the most left extremity in a sub-diagram be annotated with the $\hat{p}^\dag_l$ operator, and the most right extremity be annotated with the $\hat{p}_r$ operator. The unsigned operator corresponding to that sub-diagram is the product of a number — the product of all the Kronecker deltas corresponding to the connections in the sub-diagram —, and the $\hat{p}^\dag_l\hat{p}_r$ pair.

Let $\mathcal{N}$ be the number of crossings in a sub-diagram. The signature of the operator corresponding to that sub-diagram is $(-1)^{\mathcal{N}}$.

The developed conditioned commutator is equal to the summation of signed operators corresponding to every sub-diagram of $D$.
\end{ourrule}

\begin{ourrule}[Simplification of a commutator involving at least one (de)excitation operator]\label{rule:signed_pre-computed_commutator}
The simplification of a commutator involving at least one (de)excitation operator is obtained by interpreting the diagrammatic development of the corresponding conditioned commutator — see {\normalfont Rule \ref{rule:interpretation_diagram_commutator} } —, and by multiplying the result by $(-1)$ if the conditioned commutator has been obtained following case i in {\normalfont Rule \ref{rule:preconditioning}}.
\end{ourrule}

\begin{example}
Consider the three commutators introduced in examples {\normalfont \ref{ex:commu_A-E}, \ref{ex:commu_D-A}}, and {\normalfont \ref{ex:commu_E-E}} and their graphical development represented in figures {\normalfont \ref{fig:commu_A-E_dev}, \ref{fig:commu_D-A_dev}}, and {\normalfont \ref{fig:commu_E-E_dev}}. 
According to {\normalfont Rule \ref{rule:interpretation_diagram_commutator}} and {\normalfont Rule \ref{rule:signed_pre-computed_commutator}}, the three commutators can be simplified as follows:
\begin{align*}
\left[\hat{a}^\dag_r\hat{a}_s,\hat{E}_i^a\right] 
&= \delta_{s,a}\hat{a}_r^\dag\hat{o}_i - \delta_{i,r}\hat{v}_a^\dag\hat{a}_s,\\
\left[\hat{a}^\dag_r\hat{a}_s,\hat{D}_a^i\right] 
&= -\left[\hat{D}_a^i,\hat{a}^\dag_r\hat{a}_s\right]
 = -\delta_{a,r}\hat{o}_i^\dag\hat{a}_s +\delta_{s,i}\hat{a}_r^\dag\hat{v}_a,\\
\left[\hat{E}_i^a,\hat{E}_j^b\right]
&= 0\hat{\mathds{I}}.
\end{align*}
The first commutator can be directly simplified by interpreting its diagrammatic development. On the other hand, the conditioned commutator in {\normalfont Example {\normalfont \ref{ex:commu_D-A}}} has been obtained according to case i. in {\normalfont Rule \ref{rule:preconditioning}}, so the sign of each term obtained when interpreting the diagrammatic development of the conditioned commutator corresponding to $[\hat{a}^\dag_r\hat{a}_s,\hat{D}_a^i]$ has to be changed for simplifying $[\hat{a}^\dag_r\hat{a}_s,\hat{D}_a^i]$. Indeed, the diagrammatic development has been done for $[\hat{D}_a^i,\hat{a}^\dag_r\hat{a}_s]$ and not for $[\hat{a}^\dag_r\hat{a}_s,\hat{D}_a^i]$, both being equal up to a sign. The last commutator, i.e., $[\hat{E}_i^a,\hat{E}_j^b]$ is diagrammatically developed into an empty diagram. As a consequence, the corresponding operator is the zero operator.
\end{example}

\subsection{Diagrammatic simplification of nested commutators}

\begin{ourrule}[Conditioning of nested commutators involving (de)excitation operators]\label{rule:preconditioning_nested}
The {\normalfont conditioned nested commutator} associated with a nested commutator involving (de)excitation operators is obtained by successively conditioning the constitutive commutators from the innermost to the outermost commutator. 
\end{ourrule}

\begin{ourrule}[Diagrammatic development of conditioned nested commutators involving some (de)excitation operators]\label{rule:nest_commu_simpli}
In order to develop conditioned nested commutators involving in the innermost commutator a pair of creation-annihilation operators, and in all other positions (de)excitation operators, {\normalfont  Rule \ref{rule:develop_commu}} is repeated from the innermost commutator to the outermost one. The developed diagram obtained from an inner commutator is used to build the diagrammatic representation of the next outer commutator.
\end{ourrule}

\begin{ourrule}[Interpretation of the diagrammatic development of conditioned nested commutators involving (de)excitation operators]\label{rule:interpretation_diagram_nested_commutator}
The expanded operator corresponding to a conditioned nested commutator involving (de)excitation operators is ``read'' following {\normalfont Rule \ref{rule:interpretation_diagram_commutator}}.
\end{ourrule}

\begin{ourrule}[Simplification of nested commutators involving (de)excitation operators]\label{rule:signed_pre-computed_nested_commutator}
The simplification of nested commutators involving (de)excitation operators is obtained by interpreting the diagrammatic development of the corresponding conditioned nested commutator — see {\normalfont Rule \ref{rule:interpretation_diagram_nested_commutator}} —, and by multiplying the result by $(-1)$ if conditioned commutators have been obtained following case i in {\normalfont Rule \ref{rule:preconditioning}} an odd number of times.
\end{ourrule}

\noindent The two examples given in {\normalfont Figure \ref{fig:double_commu}} will be used to illustrate the diagrammatic simplification of nested commutators involving (de)excitation operators. 

\begin{example}\label{ex:commu_exci_exci}
Let $(i,j)$ be a couple of integers, both belonging to $\llbracket 1,N\rrbracket$. 
Let $(a,b)$ be a couple of integers, both belonging to $\llbracket N+1,L\rrbracket$. 
Let $\hat{a}^\dag_r$ and $\hat{a}_s$ be two second quantization operators, each pointing at a spin-orbital with non-definite occupation number relatively to the Fermi vacuum. Consider the following nested commutators:
\begin{equation}\label{eq:commu_A-E-E}
\left[ \left[ \hat{a}^\dag_r\hat{a}_s, \hat{E}_i^a \right], \hat{E}_j^b \right]. 
\end{equation}

\noindent The two commutators can be developed into four terms, each involving six second quantization operators.

\begin{align*}
\left[ \left[ \hat{a}^\dag_r\hat{a}_s, \hat{E}_i^a \right], \hat{E}_j^b \right] = \left[ \left[ \hat{a}^\dag_r\hat{a}_s, \hat{v}^\dag_a\hat{o}_i \right], \hat{v}^\dag_b\hat{o}_i \right] 
&= 
   \hat{a}^\dag_r\hat{a}_s\hat{v}^\dag_a\hat{o}_i\hat{v}^\dag_b\hat{o}_j 
-  \hat{v}^\dag_a\hat{o}_i\hat{a}^\dag_r\hat{a}_s\hat{v}^\dag_b\hat{o}_j \\
&- \hat{v}^\dag_b\hat{o}_j\hat{a}^\dag_r\hat{a}_s\hat{v}^\dag_a\hat{o}_i
+  \hat{v}^\dag_b\hat{o}_j\hat{v}^\dag_a\hat{o}_i\hat{a}^\dag_r\hat{a}_s. 
\end{align*}
\noindent By applying anticommutation rules, the four terms can be reduced to two terms, each involving two second quantization operator and two Kronecker deltas. 

\begin{equation*}\label{eq:double_commutator_adaga_E_E}
\left[ \left[ \hat{a}^\dag_r\hat{a}_s, \hat{E}_i^a \right], \hat{E}_j^b \right] 
=
-\delta_{j,r}\delta_{a,s}\hat{E}_i^b 
-\delta_{i,r}\delta_{b,s}\hat{E}_j^a .
\end{equation*}

\noindent We will now use the {\normalfont Rule \ref{rule:nest_commu_simpli}} on the diagrammatic representation of the nested commutator -- represented in {\normalfont Figure \ref{fig:double_commu}} -- to diagrammatically derive this result. The inner commutator, $[ \hat{a}^\dag_r\hat{a}_s, \hat{E}_i^a ]$, has already been developed in {\normalfont Example \ref{ex:commu_A-E}}. The diagram development obtained in {\normalfont Figure \ref{fig:commu_A-E_dev}} can be used to build the diagrammatic representation of the outer commutator. It is illustrated in {\normalfont Figure \ref{fig:commu_A-E-simpli-E}}.

\begin{figure}[h!]
\begin{center}
\begin{tikzpicture}[scale=0.7]
\draw (-.5,1.75) -- ++(-.25,0) -- ++(0,-3.5) -- ++(.25,0);
\adaga{none}{r}{deex}{s}\exciend{i}{a} 
\draw (1.5,0) node {$\bigoplus$};
\tikzset{shift={(2,0)}}
\exci{i}{a}\adaga{deex}{r}{a}{s}
\draw (.5,1.75) -- ++(.25,0) -- ++(0,-3.5) -- ++(-.25,0);
\draw (1.5,0) node {$\copyright$};
\tikzset{shift={(2,0)}}
\pexci{1}{j}{b} 
\end{tikzpicture}
\end{center}
\caption{Graphical representation of nested commutators $[[\hat{a}^\dag_r\hat{a}_s,\hat{E}_i^a ], \hat{E}_j^b]$ with the inner commutator beeing already developed.}
\label{fig:commu_A-E-simpli-E}
\end{figure}
\noindent Each sub-diagram obtained from the inner commutator can form one allowed connection with the right part of the outer commutator diagrammatic representation. They are both represented in {\normalfont Figure \ref{fig:commu_A-E-E_connect}}. 

\begin{figure}[h!]
\begin{center}
\begin{tikzpicture}[scale=0.7]
\draw (-1,0) node {a)};
\adaga{}{r}{deex}{s}\exciend{i}{a}\tcr{\pexci{2.5}{j}{b}}
\draw (1.5,0) node {$\copyright$};
\draw[->,thick,white] (3.5,0) --node[above,black] {Connection of} ++(2.5,0);
\draw[->,thick] (3.5,0) --node[below] {$\hat{a}^\dag_r$ and $\hat{o}_j$} ++(2.5,0);
\tikzset{shift={(6.5,0)}}
\tcr{\exci{j}{b}}\tikzset{shift={(1,0)}}\adaga{deex}{r}{deex}{s}\exciend{i}{a}
\end{tikzpicture}
%\tikzset{shift={(-14,-5)}}
\begin{tikzpicture}[scale=0.7]
\draw (-.5,0) node {b)};
\exci{i}{a}\adaga{deex}{r}{}{s}\tcr{\pexci{2}{j}{b}}
\draw (1,0) node {$\copyright$};
\draw[->,thick,white] (3,0) --node[above,black] {Connection of} ++(2.5,0);
\draw[->,thick] (3,0) --node[below] {$\hat{a}_s$ and $\hat{v}_b^\dag$} ++(2.5,0);
\tikzset{shift={(6,0)}}
\exci{i}{a}\adaga{deex}{r}{deex}{s}\tcr{\exciend{j}{b}}
\end{tikzpicture}
\end{center}
\caption{a) and b) represent the only two allowed connections between the two sub-diagrams obtained from the inner commutator diagrammatic development, and the $\mathcal{L}_{1,1}$ diagrammatic representation of the $\hat{E}_j^b$ operator.}
\label{fig:commu_A-E-E_connect}
\end{figure}
The diagrammatic development of the nested commutator defined in \eqref{eq:commu_A-E-E} is represented in {\normalfont Figure \ref{fig:commu_A-E-E_dev}}. {\normalfont Figure \ref{fig:commu_A-E-E_dev}} can be read following {\normalfont Rule \ref{rule:interpretation_diagram_commutator}}:

\begin{equation*}
\left[ \left[ \hat{a}^\dag_r\hat{a}_s, \hat{E}_i^a \right], \hat{E}_j^b \right] 
=
-\delta_{j,r}\delta_{a,s}\hat{v}_b^\dag\hat{o}_i
-\delta_{i,r}\delta_{b,s}\hat{v}_a^\dag\hat{o}_j
=
-\delta_{j,r}\delta_{a,s}\hat{E}_i^b 
-\delta_{i,r}\delta_{b,s}\hat{E}_j^a.
\end{equation*}
\begin{figure}[h!]
\begin{center}
\begin{tikzpicture}[scale=0.7]
\draw (-1.5,1.75) -- ++(-.25,0) -- ++(0,-3.5) -- ++(.25,0);
\draw (2,0) node {$\copyright$};
\padaga{0}{}{r}{}{s}\pexci{3}{i}{a} 
\draw (3.5,1.75) -- ++(.25,0) -- ++(0,-3.5) -- ++(-.25,0);
\tikzset{shift={(3,0)}}
\draw (1.5,0) node {$\copyright$};
\tikzset{shift={(2,0)}}
\pexci{1}{j}{b} 
\draw[->,thick] (2,0) --node[above] {Development} ++(2.5,0);
\tikzset{shift={(5,0)}}
\exci{j}{b}\adaga{deex}{r}{deex}{s}\exciend{i}{a} 
\tikzset{shift={(2.5,0)}}
\draw (-.75,0) node {$\bigoplus$};
\exci{i}{a}\adaga{deex}{r}{deex}{s}\exciend{j}{b}
\tikzset{shift={(-10,-2.5)}}
\end{tikzpicture}
\end{center}
\caption{Diagrammatic development of the $[ [ \hat{a}^\dag_r\hat{a}_s, \hat{E}_i^a ], \hat{E}_j^b ]$ nested commutators, with the inner commutator being already developed.}
\label{fig:commu_A-E-E_dev}
\end{figure}
\end{example}

\begin{example}\label{ex:commu_deex_exci}

Let $(i,j)$ be a couple of integers, both belonging to $\llbracket 1,N\rrbracket$. 
Let $(a,b)$ be a couple of integers, both belonging to $\llbracket N+1,L\rrbracket$. 
Let $\hat{a}^\dag_r$ and $\hat{a}_s$ be two second quantization operators, each pointing at a spin-orbital with non-definite occupation number relatively to the Fermi vacuum. Consider the 
\begin{equation}\label{eq:commu_D-A-E}
\left[ \left[ \hat{D}_a^i, \hat{a}^\dag_r\hat{a}_s \right], \hat{E}_j^b \right]
\end{equation}
nested commutators. The diagrammatic representation of this commutator is given in {\normalfont Figure \ref{fig:double_commu}}. The inner commutator, $[\hat{D}_a^i, \hat{a}^\dag_s\hat{a}_r ]$, has been already developed in {\normalfont Example \ref{ex:commu_D-A}}. The diagram development obtained in {\normalfont Figure \ref{fig:commu_D-A_dev}} is used to build the diagrammatic representation of the outer commutator. It is illustrated in {\normalfont Figure \ref{fig:commu_D-A-simpli-E}}. After the simplification of the inner commutator, the outer one can be considered. 

\begin{figure}[h!]
\begin{center}
\begin{tikzpicture}[scale=0.7]
\draw (-1,1.75) -- ++(-.25,0) -- ++(0,-3.5) -- ++(.25,0);
\deexend{a}{i}\adaga{exci}{r}{none}{s}
\draw (1,0) node {$\bigoplus$};
\tikzset{shift={(2,0)}}
\adaga{none}{r}{exci}{s}\deex{a}{i}
\draw (.5,1.75) -- ++(.25,0) -- ++(0,-3.5) -- ++(-.25,0);
\draw (1.5,0) node {$\copyright$};
\tikzset{shift={(2,0)}}
\pexci{1}{j}{b} 
\end{tikzpicture}
\end{center}
\caption{Diagrammatic representation of the $[[\hat{D}_a^i, \hat{a}^\dag_r\hat{a}_s], \hat{E}_j^b ]$ nested commutators, with the inner commutator being already developed.}
\label{fig:commu_D-A-simpli-E}
\end{figure}
The first sub-diagram obtained from the inner commutator can form two connections, one with the lower extremity and one with the upper extremity of the $\mathcal{L}_{1,1}$ diagram representation of the excitation operator. Both connections are represented in {\normalfont Figure \ref{fig:commu_D-A-E_simpli-1}}.

\begin{figure}[h!]
\begin{center}
\begin{tikzpicture}[scale=.7]
\deexend{a}{i}\adaga{exci}{r}{none}{s}\tcr{\pexci{2}{j}{b}}
\draw (1,0) node {$\copyright$};

\tikzset{shift={(1.5,2.5)}}

\draw[->,thick] (1.5,-1) -- (3,0) --node[above,black] {Connection of} node[below] {$\hat{o}_i^\dag$ and $\hat{o}_j$} ++(4,0);
\tikzset{shift={(8,0)}}
\tcr{\exci{j}{b}}\tikzset{shift={(1,0)}}\deex{a}{i}\adaga{exci}{r}{none}{s}

\tikzset{shift={(-12.5,-5)}}

\draw[->,thick] (1.5,1) -- (3,0) --node[above,black] {Connection of} node[below] {$\hat{a}_s$ and $\hat{v}^\dag_b$} ++(4,0);
\tikzset{shift={(8.5,0)}}
\deexend{a}{i}\adaga{exci}{r}{deex}{s}\tcr{\exciend{j}{b}}
\end{tikzpicture}
\end{center}
\caption{The possible connections between the first sub-diagram from the inner commutator diagrammatic development and the $\mathcal{L}_{1,1}$ diagrammatic representation of the excitation operator (given in {\normalfont Figure \ref{fig:commu_D-A-simpli-E}}).}
\label{fig:commu_D-A-E_simpli-1}
\end{figure}

\noindent Similarly, the second sub-diagram obtained from the inner commutator can form two connections, one with the lower extremity and one with the upper extremity of the $\mathcal{L}_{1,1}$ diagram representation of the excitation operator. Both connections are represented in {\normalfont Figure \ref{fig:commu_D-A-E_simpli-2}}.

\begin{figure}[h!]
\begin{center}
\begin{tikzpicture}[scale=.7]
\adaga{none}{r}{exci}{s}\deex{a}{i}\tcr{\pexci{2}{j}{b}}
\draw (.75,0) node {$\copyright$};

\tikzset{shift={(1.5,2.5)}}

\draw[->,thick] (1.5,-1) -- (3,0) --node[above,black] {connection of}node[below] {$\hat{v}_a$ and $\hat{v}^\dag_b$} ++(4,0);
\tikzset{shift={(8,0)}}
\adaga{none}{r}{exci}{s}\deex{a}{i}\tcr{\exciend{j}{b}}

\tikzset{shift={(-11.5,-5)}}

\draw[->,thick] (1.5,1) -- (3,0) --node[above,black] {connection of}node[below] {$\hat{a}^\dag_r$ and $\hat{o}_j$} ++(4,0);
\tikzset{shift={(8,0)}}
\tcr{\exci{j}{b}}\tikzset{shift={(1,0)}}\adaga{deex}{r}{exci}{s}\deex{a}{i}
\end{tikzpicture}
\end{center}
\caption{The possible connections between the second sub-diagram from the inner commutator diagrammatic development and the $\mathcal{L}_{1,1}$ diagrammatic representation of the excitation operator (given in {\normalfont Figure \ref{fig:commu_D-A-simpli-E}}).}
\label{fig:commu_D-A-E_simpli-2}
\end{figure}

\noindent Each sub-sub-diagram derived in figures {\normalfont \ref{fig:commu_D-A-E_simpli-1}} and {\normalfont \ref{fig:commu_D-A-E_simpli-2}} can be read following {\normalfont Rule \ref{rule:interpretation_diagram_nested_commutator}} and {\normalfont Rule \ref{rule:signed_pre-computed_nested_commutator}} in order to obtain the simplification of the nested commutators:

\begin{equation}\label{eq:commu_adaga-deex-exci_result}
\left[ \left[ \hat{D}_a^i, \hat{a}^\dag_r\hat{a}_s \right], \hat{E}_j^b \right]
= \delta_{a,r}\delta_{s,b}\hat{o}^\dag_i\hat{o}_j
- \delta_{j,i}\delta_{a,r}\hat{v}^\dag_b\hat{a}_s
- \delta_{s,i}\delta_{a,b}\hat{a}^\dag_r\hat{o}_j
+ \delta_{j,r}\delta_{s,i}\hat{v}^\dag_b\hat{v}_a.
\end{equation}

\noindent This result could alternatively be derived by developing commutators and applying the anti-commutation rules.

\end{example}

\section{Algorithm}

\subsection{General description and example}

The procedure to fully expulse all brackets from a Dyck word, described in Part I, can be used to find all full contractions of a chain of second quantization operators. 
This procedure by itself generates all allowed contractions according to \eqref{eq:contraction_occ} and \eqref{eq:contraction_vir}. A method to compute the signature step by step has to be added to the procedure in order to recover the list of contractions and the corresponding signature.
The contribution of each contraction to the signature is explicit in equations \eqref{eq:contraction_in_chain-occ} and \eqref{eq:contraction_in_chain-vir}. Counting the number of operators between the two contracted ones gives the contribution to the signature. After translating the chain, the number of brackets between the two expulsed ones gives the contribution to the signature: If that number is odd (respectively, even) the sign is $(-1)$ (respectively, 1).

Let $n_{op}$ (respectively, $n_{cl}$) be the number of brackets in the last block of opening (respectively, closing) brackets. According to the definition of a Dyck word, $n_{cl}$ is greater than $n_{op}$ in a Dyck word. Among the $n_{cl}$ closing brackets, $n_{op}$ of them are chosen to be expulsed together with the last block of opening brackets.
All permutations of the set of chosen closing brackets has to be considered.
All non-chosen brackets are now part of the ``newly last'' block of closing brackets and the procedure can be repeated.

This algorithm can be direclty applied with the $\mathcal{L}_1$ translation if the expectation value is evaluated relatively to the physical vacuum. 
Relatively to the Fermi vacuum, three types of operators can appear in the chain: arbitrary, occupied, and virtual operators. The method to compute the signature can be applied for each word obtained from a split-$\mathcal{L}_2$ translation. 
First, an occupation number has to be attributed to the spin-orbital each arbitrary operator is pointing at.
Secondly, if the chain of operators is not a chain of (de)excitation operators, an additional contribution appears from the splitting of the \textit{o-}operators and \textit{v-}operators. It should be mentioned that after attributing occupation numbers of arbitrary operators and split-translating the chain, the resulting string may or may not be a Dyck word. In the latter case, the attribution of occupation numbers can be rejected. The procedure should be repeated, iterating on all the possible occupation number attributions.

\begin{example}
Let $(r_{i})_{1\leq i\leq 10}$ be a 10-tuple of integers, all belonging to $\llbracket 1,L\rrbracket$.  Consider the 
\begin{equation}
\hat{a}_{r_1}\hat{a}_{r_2}
\hat{a}_{r_3}^\dag
\hat{a}_{r_4}\hat{a}_{r_5}
\hat{a}_{r_6}^\dag
\hat{a}_{r_7}
\hat{a}_{r_8}^\dag\hat{a}_{r_9}^\dag\hat{a}_{r_{10}}^\dag
\end{equation}
chain of second quantization operators. Its $\mathcal{L}_1$-translation is represented in the initial step of {\normalfont Figure \ref{fig:block_bracket}}; blocks of closing and opening brackets are highlighted in blue and red, respectively.

\begin{figure}[h!]
\centering
\begin{tikzpicture}
\node (step_0)      at (-2,0)  
  {$\bm{\stackrel{1}{\tcr{(}}}\;
\bm{\stackrel{2}{\tcr{(}}}\;
\bm{\stackrel{3}{\tcb{)}}}\;
\bm{\stackrel{4}{\tcr{(}}}\;
\bm{\stackrel{5}{\tcr{(}}}\;
\bm{\stackrel{6}{\tcb{)}}}\;
\bm{\stackrel{7}{\tcr{(}}}\;
\bm{\stackrel{8}{\tcb{)}}}\;
\bm{\stackrel{9}{\tcb{)}}}\;
\bm{\stackrel{10}{\tcb{)}}} $};

\node (step_1)      at (5,3)  
  {$\bm{\stackrel{1}{\tcr{(}}}\;
\bm{\stackrel{2}{\tcr{(}}}\;
\bm{\stackrel{3}{\tcb{)}}}\;
\bm{\stackrel{4}{\tcr{(}}}\;
\bm{\stackrel{5}{\tcr{(}}}\;
\bm{\stackrel{6}{\tcb{)}}}\;
\bm{\stackrel{8}{\tcb{)}}}\;
\bm{\stackrel{10}{\tcb{)}}}$};

\node (step_2)    at (9,3)  
  {$\bm{\stackrel{1}{\tcr{(}}}\;
\bm{\stackrel{2}{\tcr{(}}}\;
\bm{\stackrel{3}{\tcb{)}}}\;
\bm{\stackrel{10}{\tcb{)}}}$};

\node (step_3)    at (12,3)  
  {\normalfont End};

\node (step_1')      at (5,0)  
  {$\bm{\stackrel{1}{\tcr{(}}}\;
\bm{\stackrel{2}{\tcr{(}}}\;
\bm{\stackrel{3}{\tcb{)}}}\;
\bm{\stackrel{4}{\tcr{(}}}\;
\bm{\stackrel{5}{\tcr{(}}}\;
\bm{\stackrel{6}{\tcb{)}}}\;
\bm{\stackrel{9}{\tcb{)}}}\;
\bm{\stackrel{10}{\tcb{)}}}$};

\node (step_1'')      at (5,-3)  
  {$\bm{\stackrel{1}{\tcr{(}}}\;
\bm{\stackrel{2}{\tcr{(}}}\;
\bm{\stackrel{3}{\tcb{)}}}\;
\bm{\stackrel{4}{\tcr{(}}}\;
\bm{\stackrel{5}{\tcr{(}}}\;
\bm{\stackrel{6}{\tcb{)}}}\;
\bm{\stackrel{8}{\tcb{)}}}\;
\bm{\stackrel{9}{\tcb{)}}}$};

\coordinate (S) at ($(step_0.east)!0.5!(step_1.west)$);
\coordinate (S') at  ($(step_0)!0.5!(step_1')$);
\coordinate (S'') at ($(step_0.east)!0.5!(step_1''.west)$);

\node (step_2')  at (9,0)  {$\cdots$};
\node (step_2'') at (9,-3) {$\cdots$};

\draw[->] (step_1')  to node[below] {} (step_2');
\draw[->] (step_1'') to node[below] {} (step_2'');

\node (ex1') at (5,1) {$(\tcr{7}-\tcb{8})$};
\node  at       (5,1.5)    {$1$};

\node (ex1'') at (5,-2) {$(\tcr{7}-\tcb{10})$};
\node  at        (5,-1.5)    {$1$};

\node (ex1) at (5,4)    {$(\tcr{7}-\tcb{9})$};
\node  at      (5,4.5)  {$-1$};
\node (ex2) at (9,4)    {$(\tcr{4}-\tcb{6},\tcr{5}-\tcb{8})$};
\node  at      (9,4.5)  {$-1$};
\node (ex3) at (12,4)   {$(\tcr{1}-\tcb{10},\tcr{2}-\tcb{3})$};
\node  at      (12,4.5) {$+1$};

\draw[->] (S)  	 to [bend left=30]   (ex1);
\draw[->] (S')   to [bend right=10]  (ex1');
\draw[->] (S'')  to [bend right=25]  (ex1'');

\draw[->] (step_1.east)  to [bend right=15]  (ex2.south);
\draw[->] (step_2.east)  to [bend right=15]  (ex3.south);

\draw[->] (step_0.east) to node[midway,fill=white] {Step 1a} (step_1.west);
\draw[->] (step_0.east) to node[midway,fill=white] {Step 1b} (step_1'.west);
\draw[->] (step_0.east) to node[midway,fill=white] {Step 1c} (step_1''.west);
\draw[->] (step_1)      to node[below] {Step 2a} (step_2);
\draw[->] (step_2)      to node[below] {Step 3a} (step_3);

\end{tikzpicture}
\caption{The $\bm{(\,(\,)\,(\,(\,)\,(\,)\,)\,)}$ Dyck word is an alternate succession of 3 blocks of opening brackets (in red) and 3 blocks of closing brackets (in blue). Each step corresponds to expulsion choice(s). We have only detailed the first bench of options (``Step 1'') — the Step 2 corresponding to expulsions following Step 1a also divides into branches; Step 2a is only one of the possible choices we can make after Step 1a; Step 3a is only one of the possible choices we can make after Step 2a. The contribution of each group of expulsions is specified over it. The signature is written above the expulsion details.}
\label{fig:block_bracket}
\end{figure}
\noindent In each step we expulse all bracket from the last block of opening brackets, and an identical number of closing brackets from the last block of closing brackets. One possible way to choose brackets to fully expulse the word is depicted in {\normalfont Figure \ref{fig:block_bracket}}: The {\normalfont (7-9)} pair is expulsed at {\normalfont Step 1a}; then the {\normalfont ((4-6), (5-8))} pairs are expulsed in {\normalfont Step 2a}, and at the last step the word is composed of one block of opening and one block of closing brackets. The chosen pairing in {\normalfont Step 3a} is {\normalfont ((1-10), (2-3))}.
The full contraction of the chain corresponding to this succession of choices, and its expectation value relatively to the physical vacuum are 
\begin{equation*}
\braket{
\acontraction[1ex]{}{\hat{a}}{_{r_1}\hat{a}_{r_2}\hat{a}_{r_3}^\dag\hat{a}_{r_4}\hat{a}_{r_5}\hat{a}_{r_6}^\dag\hat{a}_{r_7}\hat{a}_{r_8}^\dag\hat{a}_{r_9}^\dag}{\hat{a}}
\acontraction[2ex]{\hat{a}_{r_1}}{\hat{a}}{_{r_3}}{\hat{a}}
\acontraction[2ex]{\hat{a}_{r_1}\hat{a}_{r_2}\hat{a}_{r_3}^\dag}{\hat{a}}{_{r_4}\hat{a}_{r_5}}{\hat{a}}
\acontraction[3ex]{\hat{a}_{r_1}\hat{a}_{r_2}\hat{a}_{r_3}^\dag\hat{a}_{r_4}}{\hat{a}}{_{r_5}\hat{a}_{r_6}^\dag\hat{a}_{r_7}}{\hat{a}}
\acontraction[4ex]{\hat{a}_{r_1}\hat{a}_{r_2}\hat{a}_{r_3}^\dag\hat{a}_{r_4}\hat{a}_{r_5}\hat{a}_{r_6}^\dag}{\hat{a}}{_{r_7}\hat{a}_{r_8}^\dag}{\hat{a}}
\hat{a}_{r_1}\hat{a}_{r_2}\hat{a}_{r_3}^\dag\hat{a}_{r_4}\hat{a}_{r_5}\hat{a}_{r_6}^\dag\hat{a}_{r_7}\hat{a}_{r_8}^\dag\hat{a}_{r_9}^\dag\hat{a}_{r_{10}}^\dag
} = \delta_{r_1, r_{10}}\delta_{r_2, r_3}\delta_{r_4, r_6}\delta_{r_7, r_9}.
\end{equation*}
\noindent The signature associated with the full contraction of the chain is equal to $(-1)\times(-1)\times1 = 1$. Indeed the number of crossings between contractions is equal to $10$, leading to a signature that is $(-1)^{10} = 1$.
\end{example}

\subsection{Benchmark}

\noindent The abovementioned Dyck language-based algorithm has been implemented with Python within the \texttt{MobiDyck} code \cite{jeremy_mobidyck_2026}, which enables the symbolic evaluation of expectation values of chains of second-quantization operators relatively to the physical or Fermi vacuum. In what follows, we compare our Dyck language-based algorithm to a baseline approach based on normal ordering \cite{pdaggerq}, in which chains of second quantization operators are first normal-ordered before selecting all non-zero full contractions. For that sake, two sequences of chains are considered. 

The first sequence is $(\hat{P}_n)_{1 \leq n \leq 8}$, where each $\hat{P}_n$ consists, when read from the left to the right, of $n$ \textit{v}-annihilation operators followed by $n$ \textit{v}-creation operators. For each chain, the expectation value is evaluated relatively to the physical vacuum using both algorithms. The number of full contractions of a $\hat{P}_n$ chain, i.e., $\mathcal{W}(\hat{P}_n)$, is $n!$. Each evaluation is repeated $1000$ times, and the execution times corresponding to the total runtime over these repetitions are reported in Table \ref{tab:time_crea-anni}. For this first sequence of chains, the Dyck language-based algorithm consistently performs better than the baseline algorithm, as reported in  Table \ref{tab:time_crea-anni}. While the difference is moderate for small values of $n$, i.e., for small number of operators in the chain, it becomes substantial as the chain length increases. For instance, at $n = 8$, the Dyck language-based algorithm completes one evaluation in 0.18 seconds, whereas the normal ordering-based algorithm requires 18 seconds.

\begin{table}[h!]
\begin{center}
\begin{tabular}{c|c|cc}
$n$ & $\mathcal{W}(\hat{P}_n)$ & Normal ordering-based algorithm & Dyck language-based algorithm \\
\hline  
1 & 1 		& 0.071s 			& 0.040s \\ 
2 & 2 		& 0.090s 			& 0.068s \\ 
3 & 6 		& 0.162s 			& 0.081s \\ 
4 & 24 		& 0.702s 			& 0.137s \\ 
5 & 120 	& 5.004s 			& 0.480s \\ 
6 & 720 	& 40.887s 			& 2.820s \\ 
7 & 5040 	& 9m 37.871s 		& 20.679s \\ 
8 & 40320 	& 5h 6m 39.416s 	& 2m 57.605s \\ 
\end{tabular} 
\end{center}
\caption{Duration to symbolically evaluate $1000$ times the expectation value of $\hat{P}_n$ relatively to the physical vacuum, following two methods.}
\label{tab:time_crea-anni}
\end{table}

\noindent The second sequence is $(\hat{F}_n)_{1 \leq n \leq 5}$ where each $\hat{F}_n$ chain is composed, when read from the left to the right, of $n$ deexcitation operators followed by $n$ excitation operators. For each chain, the expectation value is evaluated with respect to the Fermi vacuum using both algorithms. The number of full contractions of an $\hat{F}_n$ chain, i.e., $\mathcal{W}(\hat{F}_n)$, is $(n!)^2$. Each evaluation is repeated $1000$ times, and the execution times corresponding to the total runtime over these repetitions are reported in Table \ref{tab:time_exci-deex}. 

\begin{table}[h!]
\begin{center}
\begin{tabular}{c|c|cc}
$n$ & $\mathcal{W}(\hat{F}_n)$ & Normal ordering-based algorithm & Dyck language-based algorithm \\
\hline  
1 & 1 		& 0.119s 		& 0.064s \\ 
2 & 4 		& 0.307s 		& 0.082s \\ 
3 & 36 		& 2.827s 		& 0.256s \\ 
4 & 576 	& 41s430 		& 2,823s \\ 
5 & 14400 	& 34m54.663s 	& 1m12.472s \\ 
%6 & 518400 	& s 			& 46m 59.868s
\end{tabular} 
\end{center}
\caption{Duration to symbolically evaluate $1000$ times the expectation value of $\hat{F}_n$ relatively to the Fermi vacuum, following two methods.}
\label{tab:time_exci-deex}
\end{table}

\newpage \noindent Reading Table \ref{tab:time_exci-deex} we see that the trend observed for the first sequence is observed for the second sequence of chains. For both algorithms, the execution time grows rapidly with the length of the chain, reflecting the combinatorial increase in the number of full contractions. However, the growth rate is significantly lower for the Dyck language-based algorithm. This improvement can be attributed to the fact that the Dyck language-based algorithm avoids the explicit construction of all full contractions of the chain. Instead, it directly exploits the structure of valid contractions through the relationship between expulsion in Dyck word and contraction in chain of second quantization operators.

Moreover, the two algorithms differ significantly in their memory requirements. The normal ordering-based algorithm requires to generate and store intermediate terms before selecting the non-zero contributions. For instance, in the case of the $\hat{P}_8$ chain, it must handle at least $40320$ non-zero contraction terms, in addition to a large number of intermediate terms that ultimately vanish. In contrast, the Dyck-language based algorithm generates only one non-zero term at a time and does not require storing the full set of contractions simultaneously. This sequential generation strategy results in a much lower memory requirement.

\section{Connection with the Goldstone diagrams}
A well-formed fully connected diagram can be uniquely associated with a graph, \textit{i.e.}, a collection of vertices connected by possibly oriented edges.

\begin{ourrule}[Construction of the graph associated with a given well-formed fully connected $\mathcal{L}_{1,1}$ diagram]\label{rule:graph_construction}
A vertex is placed at the middle of each brackets and dashes in the diagram. 
Two vertices are linked if their corresponding bracket or dash are connected. 
The resulting edge is oriented from the left to the right (respectively, right to left) vertex if the corresponding connection is in the upper (respectively, lower) half-plane. The graph obtained is then flipped clock-wise by ninety degrees.
\end{ourrule}

\noindent In the resulting graph each vertex is connected with two edges, one coming in and one going out. Every line corresponding to a connection between \textit{o-}operators are going down and every line corresponding to a connection between \textit{v-}operators are going up.
There exists a correspondence between the resulting graph and a Goldstone diagram: We have the same vertices, connected by edges with the same directionality, and we have the same labellings. This correspondence is illustrated in the following example.

\begin{example}
Let $(i,j)$ be a couple of integers, both belonging to $\llbracket 1,N\rrbracket$. 
Let $(a,b)$ be a couple of integers, both belonging to $\llbracket N+1,L\rrbracket$.
Let $\hat{a}^\dag_r$ and $\hat{a}_s$ be two second quantization operators, each pointing at a spin-orbital with non-definite occupation number relatively to the Fermi vacuum.
Consider the two following full contractions of the $\hat{D}_a^i\hat{a}^\dag_r\hat{a}_s\hat{E}_j^b$ chain:
\begin{equation}
\braket{\hat{C}_a}_{\Psi_0}
\coloneqq \braket{
\acontraction[1ex]{}{\hat{o}}{^\dag_i\hat{v}_a\hat{a}^\dag_r\hat{a}_s\hat{v}^\dag_b}{\hat{o}}
\acontraction[2ex]{\hat{o}^\dag_i}{\hat{v}}{_a}{\hat{a}}
\acontraction[2ex]{\hat{o}^\dag_i\hat{v}_a\hat{a}^\dag_r}{\hat{a}}{_s}{\hat{v}}
\hat{o}^\dag_i\hat{v}_a\hat{a}^\dag_r\hat{a}_s\hat{v}^\dag_b\hat{o}_j
}_{\Psi_0},
\quad
\braket{\hat{C}_b}_{\Psi_0}
\coloneqq \braket{
\acontraction[1ex]{}{\hat{o}}{^\dag_i\hat{v}_a\hat{a}^\dag_r}{\hat{a}}
\acontraction[2ex]{\hat{o}^\dag_i\hat{v}_a}{\hat{a}}{^\dag_r\hat{a}_s\hat{v}^\dag_b}{\hat{o}}
\acontraction[3ex]{\hat{o}^\dag_i}{\hat{v}}{_a\hat{a}^\dag_r\hat{a}_s}{\hat{v}}
\hat{o}^\dag_i\hat{v}_a\hat{a}^\dag_r\hat{a}_s\hat{v}^\dag_b\hat{o}_j
}_{\Psi_0}.
\label{eq:CaCbGoldstone}
\end{equation}

\noindent The well-formed fully connected diagram corresponding to $\braket{\hat{C}_a}_{\Psi_0}$ and $\braket{\hat{C}_b}_{\Psi_0}$ in \eqref{eq:CaCbGoldstone} are respectively diagram {\normalfont a)} and diagram {\normalfont b)} in {\normalfont Figure \ref{fig:contraction_deex-adaga-exci_decomposition}}. 
According to {\normalfont Rule \ref{rule:graph_construction}}, these two diagrams are associated with graph {\normalfont a$_1$)} and graph {\normalfont b$_1$)} represented in {\normalfont Figure \ref{fig:goldstone}}.
These two graphs are equivalent to the usual Goldstone diagrams — {\normalfont a$_2$)} and {\normalfont b$_2$)} — represented in {\normalfont Figure \ref{fig:goldstone}}.

\begin{figure}[h!]
\center
\begin{tikzpicture}[scale = 2]
%Cas a traduction
\draw (-.5,1) node {a$_1$)};
\node (top)    at (0,1) {•};
\node (center) at (0,0) {•};
\node (bottom) at (0,-1){•};

\draw [postaction={decorate}, 
decoration={markings, mark= at position 0.5 with {\arrow[line width=.75mm]{stealth}}}] 
(top.center)   node[right] {i}  to[bend left=40]  (bottom.center)node[right] {j};
\draw [postaction={decorate}, 
decoration={markings, mark= at position 0.5 with {\arrow[line width=.75mm]{stealth}}}] 
(bottom.center)node[left]  {b}  to[bend left=40]  (center.center)node[below=2mm] {s};
\draw [postaction={decorate}, 
decoration={markings, mark= at position 0.5 with {\arrow[line width=.75mm]{stealth}}}] 
(center.center)node[above=2mm] {r}  to[bend left=40]  (top.center)   node[left] {a};

%Cas a Goldstone
\draw (1,1) node {a$_2$)};
\node (top)    at (1.5,1) {•};
\node (bottom) at (1.5,-1){•};

\draw[white] (bottom.center)  to[bend left=40] node[black, above left] {r} (top.center);
\draw[white] (bottom.center)  to[bend left=40] node[black, below left] {s} (top.center);

\draw [postaction={decorate}, 
decoration={markings, mark= at position 0.5 with {\arrow[line width=.75mm]{stealth}}}] 
(top.center)   node[right] {i}  to[bend left=40]  (bottom.center)node[right] {j};
\draw [postaction=decorate,
decoration={markings, 
mark= at position 0.25 with {\arrow[line width=.75mm]{stealth}}, 
mark= at position 0.75 with {\arrow[line width=.75mm]{stealth}}}]
(bottom.center)node[left] {b}  to[bend left=40] node {•} (top.center)   node[left] {a};

%Cas b trad
\draw (3.5,1) node {b$_1$)};
\node (top)    at (4,1) {•};
\node (center) at (4,0) {•};
\node (bottom) at (4,-1){•};

\draw [postaction={decorate}, 
decoration={markings, mark= at position 0.5 with {\arrow[line width=.75mm]{stealth}}}] 
(top.center)   node[right] {i} to[bend left=40] (center.center)node[above=2mm] {s};
\draw [postaction={decorate},
decoration={markings, mark= at position 0.5 with {\arrow[line width=.75mm]{stealth}}}] 
(center.center)node[below=2mm] {r} to[bend left=40] (bottom.center)node[right] {j};
\draw [postaction={decorate},
decoration={markings, mark= at position 0.5 with {\arrow[line width=.75mm]{stealth}}}] 
(bottom.center)node[left]  {b} to[bend left=40] (top.center)   node[left]  {a};

%Cas b Goldstone
\draw (5,1) node {b$_2$)};
\node (top)    at (5.5,1) {•};
\node (bottom) at (5.5,-1){•};

\draw[white] (top.center)  to[bend left=40] node[black, above right] {s} (bottom.center);
\draw[white] (top.center)  to[bend left=40] node[black, below right] {r} (bottom.center);

\draw [postaction=decorate,
decoration={markings, 
mark= at position 0.25 with {\arrow[line width=.75mm]{stealth}}, 
mark= at position 0.75 with {\arrow[line width=.75mm]{stealth}}}]  
(top.center)   node[right] {i} to[bend left=40] node {•} (bottom.center)node[right] {j};
\draw 
[postaction=decorate,decoration={markings, mark= at position 0.5 with {\arrow[line width=.75mm]{stealth}}}]  
(bottom.center)node[left] {b} to[bend left=40]  (top.center)   node[left] {a};
\end{tikzpicture}
\caption{Comparison of the well-formed fully connected diagrams corresponding to $\braket{\hat{C}_a}_{\Psi_0}$ (a$_1$) and $\braket{\hat{C}_b}_{\Psi_0}$ (b$_1$) from \eqref{eq:CaCbGoldstone} and Goldstone diagrams corresponding to $\braket{\hat{C}_a}_{\Psi_0}$ (a$_2$) and $\braket{\hat{C}_b}_{\Psi_0}$ (b$_2$) from \eqref{eq:CaCbGoldstone}.}
\label{fig:goldstone}
\end{figure}
\end{example}

\begin{remark}
The main difference between Goldstone diagrams and the diagrammatic supplementation of the $\mathcal{L}_{1,1}$ alphabet lies in the method used to determine the signature associated with the diagram.
For a Goldstone diagram, the signature associated to the diagram is equal to $(-1)^{\mathcal{N}_p+\mathcal{N}_h}$ with  $\mathcal{N}_p$ the number of edges oriented from the top to the bottom, and $\mathcal{N}_h$ the number of loops.
As mentioned in {\normalfont Proposition \ref{prop:read_diag_full_contracted}}, counting the number of crossings in a well-formed fully connected $\mathcal{L}_{1,1}$ diagram is needed to obtain the associated signature.
\end{remark}

\section{Conclusion}

In this work, we have extended the connection between fermionic second quantization and Dyck language established in Part I. While translating a chain of second quantization operators into a string of brackets, information about operator indices is lost. This information has been reintroduced by supplementing the translation into a two-dimensional diagrammatic representation.

In the case of chains containing excitation operators, deexcitation operators, and blocks of creation operators followed by annihilation operators, we have shown that a full contraction of the chain can be represented as a well-formed diagram, whose connections directly encode both the contractions and the signature of the corresponding full contraction. In particular, the expectation value of a fully contracted term can be expressed as a signed product of Kronecker deltas, each being represented by a connection, while its sign is determined by the number of crossings in the associated diagram.
By introducing the depth descriptor from Dyck language to the second quantization formalism, we provided formulas to count all non-vanishing contributions in the symbolic evaluation of the expectation value of a second-quantization chain using Wick's theorem, without resorting to explicit operator manipulations. 
The interpretation of chains enabled by the introduction of depth encouraged us to implement this method in a Python program. A comparison of the initial results with a method based on normal-ordering reveals a significant improvement in performance, both in terms of execution duration and memory requirements.

From the exact same diagrammatic representation introduced to cover the corollary of Wick's theorem, a different set of manipulation rules allows for simplifying nested commutators. From the inner commutator to the outer one, diagrams are built step by step, solely selecting terms that will not cancel in the development.

Finally, a connection with Goldstone diagrams has finally been brought in our text.

\section*{Acknowledgements}

The authors are indebted to Emmanuel Giner for decisive discussions and advices. We would also like to warmly thank Julien Toulouse, Saad Yalouz, Anthony Scemama, Pierre-François Loos, and colleagues from the LPCT for very fruitful discussions on the topic.

\section*{Appendix — Proof of Proposition \ref{prop:numbering_wick}}\label{ap:N_body}

\begin{proof} We would like to remind that Remark \ref{rem:N_not_number_of_electrons} still holds during all this proof. Before all we first need to prove the following lemma:
\begin{lemma}\label{lemma:numbering_wick}
Let $\hat{C}$ be a chain of $n$ excitation and $n$ deexcitation operators. 
Let $S$ be the Fermi-$\mathcal{L}_1$ translation of $\hat{C}$.
We $N$ arbitrary creation operators followed by $N$ arbitrary annihilation operators in $\hat{C}$ — read from the left to the right — after the $K^\text{th}$ operator. The chain obtained after this introduction is named $\hat{C}^N_{K}$.
If the Fermi-$\mathcal{L}_1$ translation of $\hat{C}$ is not a Dyck word, then $\braket{\hat{C}^N_K}_{\Psi_0}$ is equal to zero.
\end{lemma}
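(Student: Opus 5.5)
We argue by the corollary of Wick's theorem (Theorem \ref{theo:Wick}). Write $\hat{C}^N_K$ as a sum over full contractions. Each contraction with a possibly non-zero value pairs an \textit{o-}creation (respectively, \textit{v-}annihilation) operator with an \textit{o-}annihilation (respectively, \textit{v-}creation) operator lying to its right, or involves an arbitrary operator. We split each arbitrary operator according to its occupation number, writing $\hat{a}^\dag_r = n_r^{\Psi_0}\hat{a}^\dag_r + (1-n_r^{\Psi_0})\hat{a}^\dag_r$. Then every term of the resulting multilinear expansion is a chain in which each arbitrary operator has become either an \textit{o-} or a \textit{v-}operator. So it suffices to show that every such ``occupation-resolved'' chain has a zero Fermi-vacuum expectation value. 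By Part I, this holds as soon as its split-$\mathcal{L}_2$ translation $S_o\&S_v$ is not a pair of Dyck words.

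The key step is as follows. Since $S$ is not a Dyck word and $S$ has $n$ opening and $n$ closing brackets, there is a prefix of $S$ with strictly more closing than opening brackets. Equivalently, some position $m$ reaches negative depth. Each excitation (respectively, deexcitation) operator contributes one bracket of the same type to both $S_o$ and $S_v$. Hence, restricted to the (de)excitation operators, $S_o$ and $S_v$ both reproduce the bracket sequence of $S$. In particular, at the corresponding position both $S_o$ and $S_v$ would be at depth $d_m<0$ if the inserted block were absent.

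We then check that the inserted block cannot repair both words simultaneously. The block consists of $N$ creations followed by $N$ annihilations. Suppose $p$ of the creations and $q$ of the annihilations are assigned to \textit{o-}orbitals. In $S_o$, the \textit{o-}creations act as closing brackets and the \textit{o-}annihilations as opening ones, while the \textit{v-}operators play the reversed roles in $S_v$. Consider the two cases.

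\emph{Negative depth reached before the block} ($m\le K$). Neither $S_o$ nor $S_v$ is affected there, so both fail to be Dyck words.

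\emph{Negative depth reached after the block.} The block changes the depth of $S_o$ by $q-p$ and that of $S_v$ by $(N-p)-(N-q)=p-q$. If the total bracket counts are unbalanced ($p\neq q$), one of the two words is already not a Dyck word. If $p=q$, the net shift is zero for both words, and the prefix ending at $m$ remains negative in both.

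In every case $S_o\&S_v$ is not a pair of Dyck words, so each term vanishes and $\braket{\hat{C}^N_K}_{\Psi_0}=0$.

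The main obstacle is the bookkeeping of the occupation splitting of the arbitrary operators and of their roles as brackets. In particular, one must confirm that the split-$\mathcal{L}_2$ nullity criterion of Part I applies term by term after the occupation expansion, including the sign and order conventions for \textit{o-} versus \textit{v-}operators. I would first verify the depth-shift computation on the example $\hat{E}\hat{a}^\dag\hat{a}\hat{D}$ before writing it out in general.
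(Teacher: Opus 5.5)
Your proposal is correct and follows essentially the paper's route: you resolve each arbitrary operator into an \textit{o-} or \textit{v-}operator, observe that an unbalanced attribution ($p\neq q$) already spoils the split-$\mathcal{L}_2$ words, and show that a balanced inserted block cannot repair the negative prefix that $S_o$ and $S_v$ inherit from $S$. Your net-shift argument is a more explicit version of the paper's bare claim that adding the pattern ``cannot transform it into a Dyck word''. One cosmetic fix: in the paper's convention \textit{o-}creations and \textit{v-}annihilations are the opening brackets, which is why $\hat{D}$ gives an opening bracket in both words and $S_o$, $S_v$ reproduce $S$ rather than its mirror; hence both block shifts equal $p-q$ (and in any case $(N-p)-(N-q)=q-p$), but neither slip matters because only $p=q$ versus $p\neq q$ is used.
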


\begin{proof} 
When applying Wick's theorem on $\hat{C}^N_{K}$, the occupation number of the spin-orbital each arbitrary operator is pointing at is constrained due to the contraction. Indeed, a contraction will not lead to an expectation value of zero if the contracted operators are both pointing at occupied spinorbitals, or are both pointing at virtual spinorbitals. In order to find all possible full contractions of $\hat{C}^N_{K}$, we can start by attributing an occupation number at the spinorbital each arbitrary operator are pointing at, and study all possible attributions.

The $\hat{C}$ chain contains the same number of \textit{o-}creation and \textit{o-}annihilation operators and the same number of \textit{v-}creation and \textit{v-}annihilation operators. It is imposed that for each arbitrary creation operator chosen to be an \textit{o-}(respectively, a \textit{v-})creation an arbitrary annihilation operator has to be chosen to be an \textit{o-}(respectively, \textit{v-})annihilation, in order to maintain the balance between creation and annihilation operators. Otherwise, the chain will not split-$\mathcal{L}_2$ translate into a pair of Dyck words. 
Among the $N$ arbitrary creation (respectively annihilation) operators $N_v$ operators are chosen to be a \textit{v-}creation (respectively \textit{v-}annihilation) operator, the $N-N_v$ remaining are chosen to be an \textit{o-}creation (respectively \textit{o-}annihilation). After this attribution, the chain of $N$ creation and $N$ annihilation operators can be $\mathcal{L}_2$-split translated: 
\begin{equation*}
\underbrace{\hat{a}^\dag_{\cdot}\cdots\hat{a}^\dag_{\cdot}}_{N}
\underbrace{\hat{a}_{\cdot}^{\textcolor{white}{\dag}}\cdots\hat{a}_{\cdot}^{\textcolor{white}{\dag}}}_{N}
\longrightarrow_{\mathcal{L}_2,\mathcal{S}} 
\underbrace{\bm{(}\cdots\bm{(}}_{N-N_v}\;\underbrace{\bm{)}\cdots\bm{)}}_{N-N_v} \;\&\; 
\underbrace{\bm{)}\cdots\bm{)}}_{N_v}\;\underbrace{\bm{(}\cdots\bm{(}}_{N_v}
\end{equation*}
Let $S_o \& S_v$ be the $\mathcal{L}_2$-split  translation of $\hat{C}$. Adding the $\bm{(}\cdots\bm{)}$ pattern to $S_o$ after the $K^\text{th}$ position cannot transform it into a Dyck word. Similarly, adding the $\bm{)}\cdots\bm{(}$ pattern to $S_v$ after the $K^\text{th}$ position cannot transform it to a Dyck word.
\end{proof}

\noindent We are now equipped for proving Proposition \ref{prop:numbering_wick}, which is stated in the conditions of Lemma \ref{lemma:numbering_wick}.

$\;$

\noindent \textit{Proof of} (i) — If $S$ is not a Dyck word, according to Lemma \ref{lemma:numbering_wick}, $\braket{\hat{C}_{K}^N}_{\Psi_0}$ is equal to zero, and $\mathcal{W}(\hat{C}_K^N)$ is equal to zero.

$\;$

\noindent \textit{Proof of} (ii) — We now consider the case in which $S$ is a Dyck word. Let $(d_k)_{0\leq k\leq 2n}$ be the list of depths of $S$ and $(d^\uparrow_k)_{1\leq k\leq n}$ the list of opening depths of $S$. In particular, $d_K$ denotes the depth reached after the $K^\text{th}$ position in $S$. Let $S'_o \& S'_v$ be the $\mathcal{L}_2$-split translation of $\hat{C}_{K}^N$.
The list of depths of $S'_v$ (respectively $S'_o$) contains at least all elements of the list of depths of $S_v$ (repectively $S_o$). We have the three following cases:

$\;$\\
(\textit{Case} ii.a) $0<N_v<N$. The list of depths of $S'_v$ includes the list of opening depths of $S_v$ and the elements of $({d_K}-N_v+1, \ldots, {d_K})$. Indeed, each closing bracket decreases the depth by one from ${d_K}$ to $({d_K}-N_v)$. The $N_v$ opening brackets bring the depth back at a value of $d_K$, hence the $k^\text{th}$ bracket adds $({d_K}-N_v+k)$ to the list of opening depths of $S'_v$.  
The list of depths of $S'_o$ includes the list of opening depths of $S_o$ and the elements of $({d_K}+1, \ldots, {d_K}+N-N_v)$. Indeed, starting at a depth of ${d_K}$ each opening brackets increases the depth by one, resulting in adding all integers between $({d_K}+1)$ and $({d_K}+N-N_v)$. The $(N-N_v)$ closing brackets bring the depth back at a value of ${d_K}$. 

$\;$\\
(\textit{Case} ii.b) $N_v = 0$. The $S'_o$ word is obtained by adding $N$ opening brackets followed by $N$ closing brackets to $S_o$ after the $K^\text{th}$ bracket. The list of opening depths of $S_o'$ contains the elements of the list of opening depths of $S_o$ and the elements of $({d_K}+1, \ldots, d+N)$, while the elements of the list of opening depths of $S_v'$ and $S_v$ are the same.

$\;$\\
(\textit{Case} ii.c) $N_v = N$. The $S'_v$ word is obtained by adding $N$ opening brackets followed by $N$ closing brackets to $S_v$ after the $K^\text{th}$ bracket. The list of opening depths of $S_v'$ contains the elements of the list of opening depths of $S_o$ and the elements of $({d_K}-N+1, \dots, {d_K})$, while the elements of the list of opening depths of $S_o'$ and $S_o$ are the same.

$\;$\\
For every value of $N_v$ the list of opening depths of $S_o'$ and $S_v'$ contains the same elements as the list of opening depths of $S_o'$ and $S_v'$, and the elements of $({d_K}+1-N_v, \ldots, {d_K}+N-N_v)$. According to Proposition III.5 from Part I the number of ways to expulse all brackets pairs of a Dyck word is equal to the product of its opening depths. This proposition holds for $S_o'$ and $S_v'$. The number of ways to fully contract the chain $\hat{C}_{K}^N$ is equal to the number of ways to expulse all brackets pairs from $S_o'$ times the number of ways to expulse all brackets pairs from $S_v'$, which is equal to 

\begin{equation}\label{eq-ap:number_nv}
\mathcal{W}(\hat{C}) \prod_{k=1}^N ({d_K}+k-N_v) = \mathcal{W}(\hat{C})\dfrac{({d_K}+N-N_v)!}{({d_K}-N_v)!} 
\end{equation}
If the value of $N_v$ exceeds $d_K$ a strictly negative value of depth will appear in the list of depths of $S_v'$ in Case ii.a. Hence, according to Proposition III.1 in Part I, the $S'_v$ string is not a Dyck word. The value of $N_v$ equal to or smaller that ${d_K}$.
For a given value of $N_v$, there are $\binom{N}{N_v}$ ways $N_v$ to choose arbitrary creation (respectively, annihilation) operators among the $N$ arbitrary creation (respectively, annihilation) operators in $\hat{C}_{K}^N$. The number of full contractions of $\hat{C}_{K}^N$ whose expectation value relatively to the Fermi vacuum is not equal to zero due to \eqref{eq:contraction_occ} to \eqref{eq:contraction_in_chain-vir} is therefore obtained by summing \eqref{eq-ap:number_nv} over all possible value of $N_v$, namely

\begin{equation*}
\mathcal{W}(\hat{C}_{K}^N) = \mathcal{W}(\hat{C})\sum_{N_v=0}^{min(N,{d_K})}\binom{N}{N_v}^2\dfrac{({d_K}+N-N_v)!}{(d-N_v)!}.
\end{equation*}
In order to write $\mathcal{W}(\hat{C}_{K}^N)$ from $\mathcal{W}(\hat{C})$ and $N!$, the $\mathcal{M}_{d_K}^N$ multiplicative factor is defined as
\begin{equation*}
\mathcal{M}_{d_K}^N = \dfrac{1}{N!}\sum_{N_v=0}^{min(N,{d_K})}\binom{N}{N_v}^2\dfrac{({d_K}+N-N_v)!}{({d_K}-N_v)!},
\end{equation*}
which can be written as
\begin{equation*}
\mathcal{M}_{d_K}^N = \sum_{N_v=0}^{min(N,{d_K})}\binom{N}{N_v}^2\binom{{d_K}+N-N_v}{N}.
\end{equation*}
The $N_v$ variable is a dummy variable, so it has been replaced by $\ell$ in formula \eqref{eq:multiplicateur}.
\end{proof}
\newpage

\bibliographystyle{unsrt}

\begin{thebibliography}{10}

\bibitem{helgaker_molecular_2014}
Trygve Helgaker, Poul Jørgensen, and Jeppe Olsen.
\newblock {\em Molecular {Electronic}-{Structure} {Theory}}.
\newblock Wiley, Hoboken, 2014.

\bibitem{lefebvre_use_1969}
Jiří Čížek.
\newblock On the {Use} of the {Cluster} {Expansion} and the {Technique} of
  {Diagrams} in {Calculations} of {Correlation} {Effects} in {Atoms} and
  {Molecules}.
\newblock In R.~LeFebvre and C.~Moser, editors, {\em Advances in {Chemical}
  {Physics}}, volume~14, pages 35--89. Wiley, 1 edition, January 1969.

\bibitem{cizek_correlation_1966}
Jiří Čížek.
\newblock On the {Correlation} {Problem} in {Atomic} and {Molecular} {Systems}.
  {Calculation} of {Wavefunction} {Components} in {Ursell}-{Type} {Expansion}
  {Using} {Quantum}-{Field} {Theoretical} {Methods}.
\newblock {\em The Journal of Chemical Physics}, 45(11):4256--4266, December
  1966.

\bibitem{cizek_correlation_1971}
J.~Čižek and J.~Paldus.
\newblock Correlation problems in atomic and molecular systems {III}.
  {Rederivation} of the coupled‐pair many‐electron theory using the
  traditional quantum chemical methodst.
\newblock {\em International Journal of Quantum Chemistry}, 5(4):359--379, July
  1971.

\bibitem{shavitt_many-body_2009}
Isaiah Shavitt and Rodney~J. Bartlett.
\newblock {\em Many-body methods in chemistry and physics: {MBPT} and
  coupled-cluster theory}.
\newblock Cambridge molecular science. Cambridge University Press, Cambridge
  New York, 2009.

\bibitem{paldus_correlation_1972}
J.~Paldus, J.~Čížek, and I.~Shavitt.
\newblock Correlation {Problems} in {Atomic} and {Molecular} {Systems}. {IV}.
  {Extended} {Coupled}-{Pair} {Many}-{Electron} {Theory} and {Its}
  {Application} to the {B} {H} 3 {Molecule}.
\newblock {\em Physical Review A}, 5(1):50--67, January 1972.

\bibitem{evangelista_adaptive_2014}
Francesco~A. Evangelista.
\newblock Adaptive multiconfigurational wave functions.
\newblock {\em The Journal of Chemical Physics}, 140(12):124114, March 2014.
\newblock arXiv:1403.4117 [physics].

\bibitem{lyakh_adaptive_2010}
Dmitry~I. Lyakh and Rodney~J. Bartlett.
\newblock An adaptive coupled-cluster theory: @{CC} approach.
\newblock {\em The Journal of Chemical Physics}, 133(24):244112, December 2010.

\bibitem{wick_evaluation_1950}
G.~C. Wick.
\newblock The {Evaluation} of the {Collision} {Matrix}.
\newblock {\em Physical Review}, 80(2):268--272, October 1950.

\bibitem{lindgren_atomic_1986}
Ingvar Lindgren and John Morrison.
\newblock {\em Atomic {Many}-{Body} {Theory}}.
\newblock Springer Berlin Heidelberg, Berlin, Heidelberg, 1986.

\bibitem{surjan_second_1989}
Péter~R. Surján.
\newblock {\em Second quantized approach to quantum chemistry: an elementary
  introduction: with 11 figures}.
\newblock Springer-Verlag, Berlin Heidelberg New York London Paris Tokyo Hong
  Kong, softcover reprint of the hardcover 1st edition 1989 edition, 1989.

\bibitem{gross_many-particle_1991}
E.~K.~U. Gross, Erich Runge, Olle Heinonen, and E.~K.~U. Gross.
\newblock {\em Many-particle theory}.
\newblock Hilger, Bristol, 1991.

\bibitem{kutzelnigg_normal_1997}
Werner Kutzelnigg and Debashis Mukherjee.
\newblock Normal order and extended {Wick} theorem for a multiconfiguration
  reference wave function.
\newblock {\em The Journal of Chemical Physics}, 107(2):432--449, July 1997.

\bibitem{wilson_methods_1992}
Stephen Wilson and Geerd H.~F. Diercksen, editors.
\newblock {\em Methods in {Computational} {Molecular} {Physics}}, volume 293 of
  {\em {NATO} {ASI} {Series}}.
\newblock Springer US, Boston, MA, 1992.

\bibitem{goldstone_derivation_1957}
J~Goldstone.
\newblock Derivation of the {Brueckner} many-body theory.
\newblock 239(1217):267--279, February 1957.

\bibitem{mattuck_guide_1992}
Richard~D. Mattuck.
\newblock {\em A guide to {Feynman} diagrams in the many-body problem}.
\newblock Dover pub, New York, 2nd ed edition, 1992.

\bibitem{szabo_modern_2012}
Attila Szabo and Neil~S. Ostlund.
\newblock {\em Modern quantum chemistry: introduction to advanced electronic
  structure theory}.
\newblock Dover Publications, Inc, Mineola, New York, 2012.

\bibitem{hugenholtz_perturbation_1957}
N.M. Hugenholtz.
\newblock Perturbation theory of large quantum systems.
\newblock {\em Physica}, 23(1-5):481--532, January 1957.

\bibitem{kucharski_fifth-order_1986}
Stanislaw~A. Kucharski and Rodney~J. Bartlett.
\newblock Fifth-{Order} {Many}-{Body} {Perturbation} {Theory} and {Its}
  {Relationship} to {Various} {Coupled}-{Cluster} {Approaches}.
\newblock In {\em Advances in {Quantum} {Chemistry}}, volume~18, pages
  281--344. Elsevier, 1986.

\bibitem{crawford_introduction_2000}
T.~Daniel Crawford and Henry~F. Schaefer.
\newblock An {Introduction} to {Coupled} {Cluster} {Theory} for {Computational}
  {Chemists}.
\newblock In Kenny~B. Lipkowitz and Donald~B. Boyd, editors, {\em Reviews in
  {Computational} {Chemistry}}, volume~14, pages 33--136. Wiley, 1 edition,
  January 2000.

\bibitem{roman_introduction_2015}
Steven Roman.
\newblock {\em An introduction to catalan numbers}.
\newblock Compact textbooks in mathematics. Birkhäuser, Cham Heidelberg New
  York Dordrecht London, softcover reprint of the hardcover 1st edition 2015
  edition, 2015.

\bibitem{conway_lowdimensional_1997}
J.~H. Conway and N.~J.~A. Sloane.
\newblock Low–dimensional lattices. {VII}. {Coordination} sequences.
\newblock {\em Proceedings of the Royal Society of London. Series A:
  Mathematical, Physical and Engineering Sciences}, 453(1966):2369--2389,
  November 1997.

\bibitem{jeremy_mobidyck_2026}
Jérémy Morere.
\newblock \texttt{MobiDyck}.
\newblock \url{https://github.com/Jeremy-Morere/MobiDyck}, 2026.
\newblock Accessed: 2026-05-25.

\bibitem{pdaggerq}
Eugene DePrince.
\newblock \texttt{pdaggerq}.
\newblock \url{https://github.com/edeprince3/pdaggerq/}, 2025.
\newblock Accessed: 2026-03-12.

\end{thebibliography}

\end{document}